\theoremstyle{definition} \newtheorem{definition}{Definition}[section]
\theoremstyle{plain} \newtheorem{theorem}[definition]{Theorem}
\theoremstyle{plain} \newtheorem{assumption}[definition]{Assumption}
\theoremstyle{plain} 
\theoremstyle{plain} \newtheorem{lemma}[definition]{Lemma}
\theoremstyle{plain} \newtheorem{corollary}[definition]{Corollary}
\theoremstyle{plain} \newtheorem{remark}[definition]{Remark}
\theoremstyle{definition} 
\theoremstyle{plain} 
\numberwithin{equation}{section}
\newcommand{\R}{\mathbb{R}}
\newcommand{\C}{\mathbb{C}}
\newcommand{\one}{\mathbbm{1}}
\newcommand{\cA}{\mathcal{A}}
\newcommand{\cB}{\mathcal{B}}
\newcommand{\bn}{{\mathord{b}}^{\phantom{*}}}
\newcommand{\rmu}{\rho_{\mu}}
\newcommand{\cT}{\mathcal{T}}
\newcommand{\abs}[1]{\lvert {#1} \rvert }
\DeclareMathOperator{\supp}{\mathrm{supp}}
\DeclareMathOperator{\Spec}{\mathrm{Spec}}
\renewcommand{\epsilon}{\varepsilon}
\renewcommand{\phi}{\varphi}
\newcommand{\constkinetickernel}{C_{\mathrm{kin}}}
\newcommand{\CA}{16\pi}
\newcommand{\place}[2]{%
	\ifthenelse{\isempty{#2}}%
	{({\bf p}: ${#1}$):\quad}
	{({\bf p}: ${#1}$, {\bf l}: ${#2}$):\quad}
}
\newcommand{\off}[1]{\left [ {#1}\right ]}
\newcommand{\vg}[1]{{v{#1}}}
\begin{document}

\title{A simple 2nd order lower bound to the energy of dilute Bose gases}
\date{}
\author[1]{Birger Brietzke}
\author[2]{S\o ren Fournais}
\author[3]{Jan Philip Solovej}
\affil[1]{\small{ Institute of Applied Mathematics\\Heidelberg University\\Im Neuenheimer Feld 205\\
69120 Heidelberg, Germany\\
brietzke@math.uni-heidelberg.de}}
\affil[2]{\small{Department of Mathematics, Aarhus University\\ Ny Munkegade 118\\ DK-8000 Aarhus C\\ Denmark\\
fournais@math.au.dk}}
\affil[3]{\small{Department of Mathematics\\ University of Copenhagen\\ Universitetsparken 5\\ DK-2100 Copenhagen \O\\Denmark\\
solovej@math.ku.dk}}
\maketitle

\begin{abstract}
For a dilute system of non-relativistic bosons interacting through a positive, radial potential $v$ with scattering length $a$
we prove that the ground state energy density satisfies the bound $e(\rho) \geq 4\pi a \rho^2 (1- C \sqrt{\rho a^3} \,)$.
\end{abstract}

\tableofcontents

\section{Introduction}
We study a system of $N$ interacting bosons in a large box $\Lambda \subset {\mathbb R}^3$ of volume $|\Lambda|$. For concreteness, we take 
$L>0$ and define
$\Lambda = [-L/2,L/2]^3$.
We are interested in the thermodynamic limit $N \rightarrow \infty$, $|\Lambda| \rightarrow \infty$ with density $\widetilde \rho := N/|\Lambda|$ fixed and small.

The Hamiltonian of the system is
\begin{align}\label{eq:Hamiltonian}
H_N := \sum_{i=1}^N -\Delta_i + \sum_{i<j} v(x_i-x_j),
\end{align}
on the symmetric (bosonic) space $\otimes_{s}^N L^2(\Lambda)$.
We take $H_N$ with Dirichlet boundary conditions to realize it as a self-adjoint operator.

We define the {\it ground state energy} of the system $E_0= E_0(N,\Lambda)$ to be
$$
E_0(N, \Lambda) := \inf \Spec H_N
$$
and the {\it thermodynamic ground state energy density} as
$$
e(\widetilde \rho\,) = \lim_{L\rightarrow \infty,\, N/|\Lambda| = \widetilde \rho} E_0(N, \Lambda) /L^3.
$$
Our main result, Theorem~\ref{thm:LHY-General}, is formulated in terms of the {\it scattering length} $a = a(v)$ of the potential $v$. The definition and useful properties of the scattering length will be given in Section~\ref{scattering} below.

\begin{theorem}\label{thm:LHY-General}
There exists a universal constant $C>0$ such that the following is true.

For all $v: {\mathbb R}^3 \rightarrow [0,+\infty]$ positive, radial, measurable, potentials with $v(x)=0$ for all $|x| > R$, we have
\begin{align}
\label{eq:LowerBound-General}
e(\widetilde{\rho}\,) \geq 4\pi \widetilde \rho\,^2 a \left(1 - C \big(\sqrt{\widetilde \rho a^3} + R^2a \widetilde{\rho}\,\big)\right).
\end{align}
\end{theorem}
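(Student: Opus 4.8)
The plan is to prove the Lee–Huang–Yang-type lower bound by a Bogoliubov-diagonalization argument on the bosonic Fock space, following the broad strategy initiated by Erd\H{o}s–Schlein–Yau and refined by Fournais–Solovej, but organized so as to keep the error terms as transparent as possible. First I would reduce to a bound on a single box of side length $\ell$ with $\rho a^3 \ll 1$ and $\ell$ chosen in an intermediate regime (roughly $\ell \sim \rho^{-1/2}(\rho a^3)^{-\alpha}$ for a small power $\alpha$), using a standard subadditivity/localization argument (e.g.\ a sliding-box or Neumann-bracketing decomposition of $\Lambda$) to pass from the box energy to the thermodynamic density $e(\widetilde\rho)$. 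On the box I would work grand-canonically in momentum space, writing the Hamiltonian in second-quantized form in terms of creation/annihilation operators $a_p^*,a_p$ for the Fourier modes, and separating the zero mode $p=0$ (the condensate) from the excited modes.

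The core step is the $c$-number substitution $a_0,a_0^*\mapsto\sqrt{N_0}$ (justified rigorously via coherent states or the Onsager–Penrose/Lieb–Seiringer–Yngvason lemma), followed by replacing the bare potential $v$ by its renormalized (scattering-length) version: one introduces the solution of the zero-energy scattering equation and performs the standard ``cancellation'' so that the coupling constant is effectively $8\pi a$ (this is where the hypothesis $v\ge0$, radial, with $\supp v\subset\{|x|\le R\}$ enters, and where the error term $R^2 a\widetilde\rho$ is generated — it measures the finite range of $v$ relative to the mean interparticle distance). After this substitution the Hamiltonian splits into a large constant term $4\pi a \rho N$ (the leading order), plus an effective quadratic Bogoliubov Hamiltonian $\sum_{p\ne0}\big(p^2 + 8\pi a\rho\big)a_p^*a_p + 4\pi a\rho\sum_{p\ne0}\big(a_p^*a_{-p}^* + a_pa_{-p}\big)$, plus cubic and quartic remainder terms. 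Diagonalizing the quadratic part by a Bogoliubov transformation produces exactly the LHY correction $-C(\rho a^3)^{1/2}$ relative to the leading term, with the explicit constant coming from the convergent momentum integral $\int_{{\mathbb R}^3}\!\big[p^2+8\pi a\rho - \sqrt{p^4 + 16\pi a\rho\, p^2}\,+ \tfrac{(8\pi a\rho)^2}{2p^2}\big]\,dp$; since we only want a lower bound with an \emph{unspecified} constant $C$, I can afford crude estimates here rather than the sharp constant.

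The technical heart — and the main obstacle — is controlling the cubic and quartic error terms and the non-bosonic corrections coming from the $c$-number substitution and from the fact that $N_0 \ne N$ (depletion of the condensate). The standard tools are: (i) \emph{a priori} bounds on the number of excited particles $\mathcal N_+ = \sum_{p\ne0} a_p^*a_p$, obtained by using the positivity of $v$ to absorb part of the quartic term; (ii) commutator/localization estimates in $\mathcal N_+$ to show the cubic terms are lower-order (of size $o(\rho a^3)\cdot\rho$ times the leading term); and (iii) a careful choice of cutoffs — an ultraviolet cutoff on the momenta kept in the Bogoliubov analysis and an infrared cutoff tied to the box size $\ell$ — so that the neglected high-momentum contributions are controlled by the kinetic energy and the neglected low-momentum contributions by the box volume. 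Balancing the box size $\ell$, the momentum cutoff, and the allowed condensate depletion against one another is what forces the precise form $\sqrt{\widetilde\rho a^3} + R^2 a\widetilde\rho$ of the error; getting all these to close simultaneously, while only ever integrating by parts or using Cauchy–Schwarz rather than any sharp operator identity, is the delicate bookkeeping that the rest of the paper presumably carries out.
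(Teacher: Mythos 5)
Your sketch captures the broad architecture --- localize to boxes of order $(\rho a)^{-1/2}$, second-quantize, renormalize via the scattering solution, Bogoliubov-diagonalize the quadratic part, control cubics and quartics by positivity of $v$ together with an a priori bound on the number of excitations --- and this is indeed in the Lieb--Solovej lineage that the paper actually follows. But there are two genuine gaps. First, the theorem is stated for $v:\R^3\to[0,+\infty]$, in particular for the hard core, so $v$ need not be in $L^1$ and your proposal to ``write the Hamiltonian in second-quantized form in momentum space'' and perform the scattering-length cancellation presupposes that $\widehat g$ exists. The paper handles this by a preliminary truncation step: replace $v$ by $v_n=\min\{v,n\}$, prove the $L^1$ version (Theorem~\ref{thm:LHY}) with constants uniform in $n$, and pass to the limit using $a(v_n)\nearrow a(v)$ (Lemma~\ref{lem:ScattAppr}). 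Omitting this, your argument simply does not apply to the advertised class of potentials. Second, you invoke ``Neumann bracketing'' and ``a priori bounds on $\mathcal N_+$'' as if they were routine, but the crux of the localization here is precisely that naive Neumann bracketing destroys the spectral gap one needs to bound $n_+$. The paper constructs a sliding-box kinetic operator $\cT_u = Q_u[\chi_u(-\Delta - C_{\mathrm{kin}}\ell^{-2})_+\chi_u + b\,\ell^{-2}]Q_u$ via a bespoke Fourier-side argument (Lemmas~\ref{lm:abskinloc}--\ref{lem:LocKinEn-2}) that preserves a gap $b\,\ell^{-2}$ on $\mathrm{Ran}\,Q_u$; this gap is what makes the a priori estimate $\rho\le 20\rho_\mu$ (Lemma~\ref{lem:UpperBoundOnn}) close. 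Without an argument of this kind your control of the remainder terms will not balance at the stated precision.

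Separately, your designated ``core step'' --- the $c$-number substitution $a_0\mapsto\sqrt{N_0}$ justified by coherent states or the Lieb--Seiringer--Yngvason lemma --- is a different route from the paper's. The paper never replaces $a_0$ by a number; it introduces $b_k = a_0^* a(Q(e^{ikx}\chi_B))$, which satisfy $[b_k,b_{k'}]=0$ and the modified bound $[b_k,b_k^*]\le\ell^3 a_0^*a_0$, and then applies the elementary Bogoliubov operator inequality (Theorem~\ref{thm:bogolubov}) directly, sidestepping all $c$-number errors. That route is cleaner here; if you insist on the $c$-number substitution you would need to verify that its own error terms are compatible with the target accuracy $\widetilde\rho^2 a\sqrt{\widetilde\rho a^3}$, which is not obvious and not something your outline addresses.
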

Clearly $e(\widetilde{\rho})\geq 0$, so \eqref{eq:LowerBound-General} is trivially satisfied unless $R^2 a\widetilde{\rho}\leq C^{-1}$, which implies the usual diluteness criterion ${\widetilde{\rho} a^3\leq C^{-1}}$ by \eqref{eq: upper bound on scattering length finite range} below.
Potentials satisfying the conditions in Theorem~\ref{thm:LHY-General} in particular, by \eqref{eq: upper bound on scattering length finite range} below, have finite scattering length.

Theorem~\ref{thm:LHY-General} can be generalized towards potentials that do not have finite range, see Theorem~\ref{thm:LHYsizeVeryGeneral} below for details.

\begin{remark}
The result of Theorem~\ref{thm:LHY-General} can in particular be applied to the `hard core' potential, i.e.
\begin{align}
v(x) = \begin{cases}
0, & |x| \geq R, \\
+\infty, & |x| < R.
\end{cases}
\end{align}
In this case $a=R$, so the error term $R^2a \widetilde{\rho}$ is higher order.  For the hard core potential we get the lower bound
\begin{align}
\label{eq:LowerBound-hc}
e(\widetilde{\rho}\,) \geq 4\pi \widetilde \rho\,^2 a \left(1 - C \sqrt{\widetilde \rho a^3}\right).
\end{align}
Our result is the first rigorous lower bound on the hard core potential that gives the correct order for the correction term.
(See below for a further discussion of the expected correction term, the so-called Lee-Huang-Yang term \cite{LHY}).
\end{remark}

\begin{remark}[Definition of the Hamiltonian]
The operator $H_N$ is not immediately defined for potentials of the generality considered in Theorem~\ref{thm:LHY-General}, so here we give the details of the definition of $H_N$.

Consider the quadratic form 
\begin{align}
Q_N(\Psi) = \int_{\Lambda^N} \sum_{j=1}^N |\nabla_j \Psi|^2 + \sum_{i<j} v(x_i-x_j) |\Psi|^2\,dx,
\end{align}
defined on the domain
\begin{align}
D(Q_N):= \Big\{ \Psi \in H_0^1(\Lambda^N)\,\Big{|}\, \Big(\sum_{i<j} v(x_i-x_j)\Big)^{1/2} \Psi \in L^2(\Lambda^N) \Big\}.
\end{align}
Consider furthermore,
\begin{align}
{\mathcal H}_Q = \overline{ \{  \Psi \in L^2(\Lambda^N) \,|\, Q_N(\Psi) + \| \Psi \|_2^2 < \infty \}},
\end{align}
where the closure is taken in $L^2(\Lambda^N)$.
Being a closed subset of $L^2(\Lambda^N)$, ${\mathcal H}_Q$ is a Hilbert space in its own right having $D(Q_N)$ as a dense subspace.
Therefore, clearly $Q_N$ defines a densely defined, Hermitian, quadratic form in ${\mathcal H}_Q$. It is an exercise to check that $Q_N$ is a closed form and it therefore follows that $Q_N$ defines a unique self adjoint operator. The resulting operator is by definition $H_N$.

\end{remark}

The proof of Theorem~\ref{thm:LHY-General}, which is the first general lower bound containing the expected order of the correction term, will be given in Section~\ref{sec:Fock}. We will briefly review some previous results below.

The rigorous study of the ground state energy of the interacting boson problem has a long history\footnote{When comparing results in the literature, one should notice that some authors study the energy per particle, i.e. $\lim E_0(N, \Lambda) /N = {\widetilde \rho\,}^{-1} e(\widetilde{\rho}\,)$ instead of the energy density, leading, of course, to slightly different formulae.}.
Bogoliubov's theory \cite{Bog3} prescribes how to treat the weak coupling limit of interacting bosons. In the context of the present paper, this weak coupling limit corresponds to the dilute gas, i.e. the limit $\widetilde \rho \rightarrow 0$ under study.
The Bogoliubov theory actually gives much more detailed information, e.g. on the excitation spectrum, but the ground state energy is one of the simplest quantities on which to obtain rigorous information regarding the validity of Bogoliubov's approach.

It was Lenz \cite{Lenz} who proposed the leading order behavior in \eqref{eq:LowerBound}.
Dyson \cite{dyson} proved that the leading order of \eqref{eq:LowerBound} has the correct form. His upper bound provides the sharp constant, while his lower bound only captures the correct leading order and was completed by the corresponding lower bound 40 years later in \cite{LY}.
It is worth noticing that the upper bound by Dyson is valid for hard core potentials, whereas the improved lower bounds \cite{ESY,YY} to be discussed below all require additional regularity of the interaction.

To even higher precision, the energy density is expected to behave as
\begin{align}\label{eq:LHY}
e(\widetilde{\rho}\,) = 4\pi \widetilde \rho\,^2 a \bigg (1+  \frac{128}{15 \sqrt{\pi}} (\widetilde \rho a^3)^{1/2}
+ o(\widetilde{\rho} a^3)^{1/2}\bigg ).
\end{align}
The second term in \eqref{eq:LHY} is often referred to as the Lee-Huang-Yang term after \cite{LHY} but is also heuristically understandable from Bogoliubov's treatment. For this and other background information on the Bose gas we refer to \cite{LSSY}.

In \cite{ESY} an upper bound to $e(\widetilde{\rho}\,)$ is given which correctly reproduces the first term and the order of the second term in \eqref{eq:LHY}, however only giving the correct coefficient on the correction term in the additional limit of weak interaction.
From \cite{BogFuncII} (see also \cite{BogFuncI} for more information on the Bogoliubov functional) one can actually conclude that to get \eqref{eq:LHY} one needs to go beyond states that are quasi-free. Indeed, they prove that for the ground state energy the trial state in \cite{ESY} is essentially optimal among quasi-free states.

An upper bound consistent with \eqref{eq:LHY} has been proven in \cite{YY} using trial states that are not quasi-free.
As already mentioned, the improved upper bounds by \cite{ESY,YY} do not work in the hard-core case. For the hard core the best upper bound that we are aware of remains \cite{dyson} with an error term of size ${\mathcal O}((\widetilde{\rho} a^3)^{1/3})$ relative to the leading order term.

The asymptotic result \eqref{eq:LHY} has first been proven in cases where the interaction is scaled to become `soft' in a manner depending on $\widetilde \rho$ \cite{BS,GS}.
In the very recent paper \cite{FS} two of the authors prove the Lee-Huang-Yang result \eqref{eq:LHY} for general radial, $L^1$-potentials. However, this result is not uniform in $a^{-1} \int v$, in particular it does not apply to the hard-core potential.

The recent work \cite{BBCS,BBCS2} is also very relevant for 
\eqref{eq:LHY}, though they address the confined case in the Gross-Pitaevskii limit and not the thermodynamic limit.
Actually, the result obtained in \cite{BBCS} is after scaling very analogous to our analysis of the box Hamiltonian (see Theorem~\ref{thm:LHY-Box} below).
We have the additional difficulties that for our localized problem, we no longer have translation invariance nor a fixed number of particles.
Nevertheless, we believe that our method, at least for the ground state energy, is substantially shorter and simpler than the one of \cite{BBCS}, which also covers the excitation spectrum.

In the papers \cite{LS,LS2} the Bogoliubov approximation is proved to give the right result in the setting of the ground state energy of a charged gas. In the present paper we use the general strategy laid out in those papers.

\paragraph{Notation.}
We use the convention that integrals are over all of ${\mathbb R}^3$ unless the domain of integration is explicitly specified.

\paragraph{Organization of the paper.}
The paper is organized as follows. 
In Section~\ref{simplified} we reduce the problem to the study of the case where the potential $v$ satisfies an $L^1$ condition.
The remainder of the paper is carried out under this assumption.
We start by recalling basic definitions and results about the scattering length $a$ and related quantities in Section~\ref{scattering}. Then, in Section~\ref{sec:Fock} we reformulate the many-body problem in a Fock space setting---see in particular Theorem~\ref{thm:LHY-Background}. This will allow us to use a simple version of Bogoliubov's theory in which the number of particles is not fixed.
In Section~\ref{sec:Localization} a localization to boxes of length scale $\ell \approx 1/\sqrt{\rmu a}$ is carried out. This is an important and delicate step since our proof requires the localization to be carried out in such a way as not to lose the Neumann gap.
The final result of the section is Theorem~\ref{thm:CompareBoxEnergy} whereby all we have to study is the ground state energy of one fixed box, which is the purpose of the remainder of the article.
The main work is carried out in Section~\ref{sec:BoxEnergy}.
In Lemma~\ref{lem:SplittingInQs} we estimate the terms in the Hamiltonian that are not quadratic in excitations out of the constant function. The important point here is inspired by the analysis of the Bogoliubov functional and consists of `completing a square' relative to the quartic term in the excitations.
The terms remaining are quadratic, thus allowing us to second quantize and use the Bogoliubov method. This we carry out in Section~\ref{sec:Bogoliubov}. Finally, in Section~\ref{sec:TotalEnergy} we put the pieces together to prove Theorem~\ref{thm:LHY-Box} which, using the first sections, implies Theorem~\ref{thm:LHY-General}.

\paragraph{Acknowledgements.}
BB and JPS were partially supported by the Villum Centre of Excellence for
the Mathematics of Quantum Theory (QMATH) and the ERC Advanced grant 321029.
BB also gratefully acknowledges support from the DFG, Grant number AOBJ 643360 KN 102013-1.
SF was partially supported by a Sapere Aude grant
from the Independent Research Fund Denmark, Grant number
DFF--4181-00221.

\section{The simplified result}\label{simplified}
The main work of the manuscript will be carried out under an $L^1$ assumption on the potential $v$. This will allow us to have a well-defined Fourier transform of $v$ (more precisely of the function $g$ related to the scattering length and defined in \eqref{eq:Defg} below). However, bounds will be uniform in the $L^1$-norm of $v$. In this section we will state the simplified result---Theorem~\ref{thm:LHY} below---and show how the main result Theorem~\ref{thm:LHY-General} follows from it.

\begin{assumption}\label{assump:v}
The potential $v$ is non-negative and spherically symmetric, i.e. $v(x) = v(|x|)\geq 0$, and integrable with compact support. We fix $R>0$ such that $\supp v \subset B(0,R)$.
\end{assumption}

\begin{theorem}\label{thm:LHY}
There exists a universal constant $C>0$ such that if the potential $v$ satisfies Assumption~\ref{assump:v}, then
\begin{align}
\label{eq:LowerBound}
e(\widetilde{\rho}\,) \geq 4\pi \widetilde \rho\,^2 a \left(1 - C \big(\sqrt{\widetilde \rho a^3} + R^2a \widetilde{\rho}\,\big)\right).
\end{align}
\end{theorem}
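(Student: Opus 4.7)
The plan is to pass from the canonical to the grand-canonical picture and then apply a Bogoliubov-type diagonalization, following the general Lieb-Solovej strategy alluded to in the introduction. Fixing a chemical potential $\mu$ of size roughly $8\pi a\widetilde\rho$, a lower bound on $H_N-\mu N$ on bosonic Fock space over $L^2(\Lambda)$ will imply the desired lower bound on the canonical energy density; the advantage is that the condensate mode, associated with the constant function on $\Lambda$, can then be treated without particle-number-conservation constraints. Next I would localize the problem to boxes of side length $\ell\approx(\widetilde\rho a)^{-1/2}$ via an IMS-type partition of unity with Neumann (not Dirichlet) boundaries inside each box, so that the spectral gap $\sim\ell^{-2}\sim\widetilde\rho a$ above the condensate mode survives. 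This gap is of the very same order as the LHY correction we are trying to capture, so the cutoff must not destroy it; this is the most delicate step and the one I expect to be hardest.

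Inside a single box I would decompose $\psi=\varphi_0+\psi_\perp$, where $\varphi_0$ is the constant function, and split the many-body Hamiltonian into pieces of order $0,2,3,4$ in the excitation operators $a_k^\#$ ($k\neq 0$). The order-zero piece produces the leading $4\pi a\widetilde\rho^2$ via the scattering identity, with $v$ effectively replaced by the softened kernel $g=vf$ (equation~\eqref{eq:Defg}) where $f$ is the zero-energy scattering solution. The cubic terms, together with the non-negative pure quartic interaction restricted to four excitations, will be handled by a completing-the-square move (Lemma~\ref{lem:SplittingInQs}): since $v\geq 0$, a Cauchy-Schwarz pairing the cubics against a fraction of the quartic leaves a manifestly non-negative remainder, which can then be discarded, plus a purely quadratic form in the $a_k^*,a_k$ with effective interaction $\hat g$.

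That quadratic form is diagonalized by a standard Bogoliubov transformation in Section~\ref{sec:Bogoliubov}. After adding and subtracting the renormalization counterterm $(\widetilde\rho\hat g(k))^2/(2k^2)$ mode by mode, the non-negative shift
\begin{equation*}
\tfrac12\sum_{k\neq 0}\Bigl(\sqrt{k^4+2\widetilde\rho\hat g(k)\,k^2}-k^2-\widetilde\rho\hat g(k)+\tfrac{(\widetilde\rho\hat g(k))^2}{2k^2}\Bigr)
\end{equation*}
is the standard LHY integral, of order $(\widetilde\rho a)^{5/2}|\Lambda|$. Any error accrued along the way of this size or smaller is acceptable, since it can be absorbed into the $-C\sqrt{\widetilde\rho a^3}$ term in~\eqref{eq:LowerBound}. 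Summing over all boxes, tracking the $R^2a\widetilde\rho$ contribution from the localization scale, and undoing the grand-canonical substitution will then yield Theorem~\ref{thm:LHY}. Beyond the Neumann-gap-preserving localization, the principal bookkeeping challenge will be ensuring that every leftover from the completing-the-square step is either manifestly non-negative or of size at most $\widetilde\rho^2a\sqrt{\widetilde\rho a^3}\cdot|\Lambda|$, and that the replacement of $\hat v$ by $\hat g$ on the excitation sector is justified with a controlled error.
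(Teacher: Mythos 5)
Your proposal follows essentially the same route as the paper: the grand-canonical (Fock space) reformulation with chemical potential of size $8\pi a\widetilde\rho$, localization to boxes of side $\ell\sim(\widetilde\rho a)^{-1/2}$ designed not to lose the Neumann gap, absorption of the cubic terms into the (renormalized, manifestly non-negative) quartic via Cauchy--Schwarz so that the effective interaction becomes $\hat g$, and a Bogoliubov diagonalization of the remaining quadratic form with the second-Born counterterm, with all leftovers controlled within an error budget of order $\widetilde\rho^2 a\,\sqrt{\widetilde\rho a^3}$ per unit volume. This is precisely the structure of the paper's proof (Sections 4--7), so the approach is correct and essentially identical.
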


We now prove that Theorem~\ref{thm:LHY-General} follows from Theorem~\ref{thm:LHY}.
\begin{proof}[Proof of Theorem~\ref{thm:LHY-General}]
Define the sequence of potentials
\begin{align}
v_n(x) := \min\{ v(x), n\}.
\end{align}
Furthermore, let $a_n$ be the scattering length of $v_n$
and let $e_n(\widetilde{\rho}\,)$ be the ground state energy density of the Hamiltonian \eqref{eq:Hamiltonian} with potential $v_n$.
From the definition of the ground state energy density it is clear that $e(\widetilde{\rho}\,) \geq e_n(\widetilde{\rho}\,)$. Furthermore $a_n$ (is well-defined and) satisfies $a_n \leq a$ by the monotonicity of the scattering length.

Therefore, \eqref{eq:LowerBound-General} follows from \eqref{eq:LowerBound} (using that the constant $C$ is independent of $n$) and the fact that $a_n \rightarrow a$, which follows from Lemma~\ref{lem:ScattAppr} below.
\end{proof}

In Definition~\ref{def: scattering length for infinite range} below we define the scattering length for potentials with infinite range.
This is used in the following theorem where we give a lower bound for potentials with infinite range. If the potential is sufficiently soft at infinity, then our bound on the ground state energy density is compatible with \eqref{eq:LHY}, i.e. the Lee-Huang-Yang formula. Our proof is based on a limiting argument and therefore we introduce the following notation for a given potential $v$ and $R,R'\geq 0$:
\begin{align}\label{def: v leq R and v>R}
v_{\leq R}(x):=v(x) \one_{\{|x|\leq R\}},\qquad 
v_{> R}(x):= v(x) \one_{\{|x| > R\}}, \qquad
v_{>R}^{R'}(x):=v(x)\one_{\{R<|x| \leq R'\}}.
\end{align}
\begin{theorem}[General version without finite range]\label{thm:LHYsizeVeryGeneral}
There exists a universal constant $C>0$ such that the following is true.

Suppose that $v\in L^1_{\mathrm{loc}}({\mathbb R}^3)$ is positive, radial, and has a finite scattering length $a$.
Then, for all $R\geq 0$ and with $v_{\leq R}$ and $v_{> R}$ defined in \eqref{def: v leq R and v>R}, the scattering lengths $a(v_{\leq R})$ and  $a(v_{> R})$ are finite and
\begin{align}
\label{eq:LowerBound-Very-General}
e(\widetilde{\rho}\,) \geq 4\pi \widetilde \rho\,^2  \left(a - a(v_{> R}) - Ca \big(\sqrt{\widetilde \rho a^3} + R^2a \widetilde{\rho}\,\big)\right).
\end{align}

In particular, if $v$ satisfies that $a(v_{> R_{\widetilde{\rho}}}) \leq C a\sqrt{\widetilde \rho a^3}$, where $R_{\widetilde{\rho}} :=  (\widetilde{\rho} a^3)^{-\frac{1}{4}} a$, then
\begin{align}
\label{eq:LowerBound-Very-General}
e(\widetilde{\rho}\,) \geq 4\pi \widetilde \rho\,^2 a \left(1 - 3C \sqrt{\widetilde \rho a^3} \right).
\end{align}
\end{theorem}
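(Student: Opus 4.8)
The plan is to reduce the infinite-range case to the finite-range Theorem~\ref{thm:LHY-General} by splitting the potential and throwing away the long tail. Write $v = v_{\leq R} + v_{> R}$, where both pieces are again positive and radial. Since $v \geq v_{\leq R} \geq 0$, pointwise monotonicity of the scattering length (the same monotonicity already invoked in the proof of Theorem~\ref{thm:LHY-General}) gives $a(v_{\leq R}) \leq a(v) = a < \infty$; the finiteness of $a(v_{> R})$ is not as immediate, since $v_{> R}$ need not decay, so the first genuine task is to establish it. I would do this from the defining variational characterization of the scattering length together with subadditivity-type inequalities for $a$: the key estimate to prove (or quote, if it appears among the scattering-length lemmas of Section~\ref{scattering}) is a bound of the form $a(v_{\leq R}) + a(v_{> R}) \geq a(v)$, i.e. the scattering length is superadditive under splitting of a positive potential. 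Rearranging gives $a(v_{> R}) \geq a - a(v_{\leq R}) \geq 0$, which combined with $a(v_{> R}) \leq a$ (again monotonicity, since $v_{> R} \leq v$) yields finiteness.

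Next, the Hamiltonian with the full potential $v$ dominates the Hamiltonian with the truncated potential $v_{\leq R}$, because $v \geq v_{\leq R} \geq 0$ and one is simply dropping a nonnegative term from the two-body interaction; hence $e(\widetilde\rho\,; v) \geq e(\widetilde\rho\,; v_{\leq R})$ for the ground state energy densities (this is the same comparison used when passing to $v_n = \min\{v,n\}$). Now $v_{\leq R}$ satisfies Assumption~\ref{assump:v} — it is positive, radial, and supported in $B(0,R)$ — except possibly for $L^1$-integrability; but $v \in L^1_{\mathrm{loc}}$ makes $v_{\leq R} \in L^1$, so Assumption~\ref{assump:v} holds. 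Applying Theorem~\ref{thm:LHY} to $v_{\leq R}$ gives
\begin{align*}
e(\widetilde\rho\,) \;\geq\; 4\pi\widetilde\rho\,^2 a(v_{\leq R})\Bigl(1 - C\bigl(\sqrt{\widetilde\rho\, a(v_{\leq R})^3} + R^2 a(v_{\leq R})\widetilde\rho\,\bigr)\Bigr).
\end{align*}
Then I would replace $a(v_{\leq R})$ by $a$ everywhere: in the main term using $a(v_{\leq R}) \geq a - a(v_{> R})$ from the superadditivity inequality above, and in the error term using the harmless bound $a(v_{\leq R}) \leq a$ (which only makes the subtracted error larger). This produces exactly \eqref{eq:LowerBound-Very-General}.

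For the final assertion, substitute $R = R_{\widetilde\rho} = (\widetilde\rho\, a^3)^{-1/4} a$. Then $R^2 a\widetilde\rho = (\widetilde\rho\, a^3)^{-1/2} a^2 \cdot a\widetilde\rho = (\widetilde\rho\, a^3)^{-1/2}\,\widetilde\rho\, a^3 = \sqrt{\widetilde\rho\, a^3}$, so the two error contributions in the parenthesis of \eqref{eq:LowerBound-Very-General} combine to $2C\sqrt{\widetilde\rho\, a^3}$; together with the hypothesis $a(v_{> R_{\widetilde\rho}}) \leq Ca\sqrt{\widetilde\rho\, a^3}$, which contributes one more $C\sqrt{\widetilde\rho\, a^3}$ after dividing out $4\pi\widetilde\rho\,^2 a$, the bracket becomes $1 - 3C\sqrt{\widetilde\rho\, a^3}$, as claimed. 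The main obstacle is the superadditivity inequality $a(v_1) + a(v_2) \geq a(v_1 + v_2)$ for nonnegative radial potentials and the finiteness of $a(v_{> R})$; everything else is monotone comparison of Hamiltonians and arithmetic. If that inequality is not already recorded in Section~\ref{scattering}, I would prove it using the trial-function definition $a = \frac{1}{4\pi}\inf\{\int |\nabla f|^2 + \tfrac12 v|f|^2 : f \to 1 \text{ at } \infty\}$: a near-optimizer for $v_1$ on an annulus glued to a near-optimizer for $v_2$ further out serves as a competitor for $v_1+v_2$, but one must be careful about the gluing region where both potentials could be active — restricting $v_2$'s optimization to the region outside the support relevant to $v_1$ handles this, at the cost of replacing $v_2$ by $v_{> R}$ precisely as in the statement.
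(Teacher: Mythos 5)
Your argument is essentially the paper's: truncate $v$ to $v_{\leq R}$, apply the finite-range bound (Theorem~\ref{thm:LHY}, which applies since $v\in L^1_{\mathrm{loc}}$ makes $v_{\leq R}\in L^1$), and then restore $a$ via the subadditivity $a \leq a(v_{\leq R}) + a(v_{> R})$, which is precisely Lemma~\ref{lem:ScatteringAdditive} and is established there through the integral representation $8\pi a = \lim_{n}\int v_{\leq n}\varphi_n$ rather than the trial-function gluing you sketch. Two small corrections: what you call ``superadditivity'' is in fact subadditivity of $a$ under splitting of the potential, and the finiteness of $a(v_{>R})$ is just as immediate from monotonicity ($v_{>R}\leq v$) as that of $a(v_{\leq R})$, so no extra work is needed there.
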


\begin{remark}
One can think of the last hypothesis of Theorem~\ref{thm:LHYsizeVeryGeneral} as an assumption on the decay of $v$. Recall the general inequality \eqref{eq:ScatInt} below for the scattering length. Applying this to $v_{> R_{\widetilde{\rho}}}^{n}$, where $R_{\widetilde{\rho}} :=  (\widetilde{\rho} a^3)^{-\frac{1}{4}} a$, and taking the limit in $n$ we see that $a(v_{> R_{\widetilde{\rho}}}) \leq C a\sqrt{\widetilde \rho a^3}$,  if $v$ decays like $|x|^{-5}$ (outside a compact set).
So \eqref{eq:LowerBound-Very-General} is valid for potentials with this type of decay.
\end{remark}

\begin{proof}[Proof of Theorem~\ref{thm:LHYsizeVeryGeneral}]
The finiteness of the scattering lengths $a(v_{\leq R}), a(v_{> R})$ follows from \eqref{eq:ScatteringAdditive} below.
Denote by $e_R(\widetilde{\rho})$ the thermodynamic ground state energy density in the potential $v_{\leq R}$. Then, using Theorem~\ref{thm:LHY-General} for $e_R(\widetilde{\rho}\,)$ and the monotonicity of the energy, 
\begin{align}
e(\widetilde{\rho}\,) \geq e_R(\widetilde{\rho}\,) \geq 4 \pi a(v_{\leq R})  \widetilde{\rho}\,^2 \left( 1 - C\Big(\sqrt{\widetilde{\rho} a(v_{\leq R})^3} + R^2 a(v_{\leq R}) \widetilde{\rho}\Big)\right).
\end{align}
The estimate \eqref{eq:LowerBound-Very-General} now follows using 
the monotonicity of the scattering length and that $a\geq a(v_{\leq R})  \geq a - a(v_{> R})$ by \eqref{eq:ScatteringAdditive} below.
\end{proof}

\section{The scattering length}\label{scattering}
In this short section we establish notation and results concerning the scattering length and associated quantities. 
We refer to \cite[Appendix C]{LSSY} for more details.

\begin{definition}\label{def: scattering length finite range}
For a potential $v: {\mathbb R}^3 \rightarrow [0,+\infty]$ positive, radial, measurable such that $v(x)=0$ for all $|x| > R$, the {\it scattering length} $a=a(v)$ is defined by
\begin{align}\label{eq:ScatLengthMin}
\frac{4\pi a}{1- a/\widetilde{R}} = \inf\{ \int_{\{|x|\leq \widetilde{R}\} }|\nabla \phi(x)|^2 + \frac{1}{2}v(x) |\phi(x)|^2\,dx
\}.
\end{align}
Here $\widetilde{R} > R$ is arbitrary and the infimum is taken over
\begin{align}
\{ \phi \in H^1(B(0,\widetilde{R}): \phi_{|x|=\widetilde{R}} =1
\}.
\end{align}
\end{definition}

An analysis of the minimisation problem shows that $a$ is independent of the choice of $\widetilde{R} > R$ and satisfies \begin{align}
\label{eq: upper bound on scattering length finite range}
a\leq R.\end{align}
Also, one immediately sees from the minimization problem that
\begin{align}\label{eq:ScatInt}
a \leq \frac{1}{8\pi} \int v.
\end{align}

Furthermore, there is a unique minimizer $\phi_{v,\widetilde{R}}$ in \eqref{eq:ScatLengthMin}, which is radial and satisfies
$\phi_{v,\widetilde{R}}= (1-a/\widetilde{R})^{-1} \phi_v(x)$.
Here the function $\phi_v(x)$ is independent of $\widetilde{R}$, radial, non-negative, monotone non-decreasing as a function of $|x|$ and satisfies (in the sense of distributions on the set where $v$ is $L^1_{\mathrm{loc}}$)
\begin{align}\label{eq: R tilde scattering equation}
-\Delta \phi_v + \frac{1}{2} v \phi_v = 0,\qquad \text{and, for }|x| \geq R,  \qquad \phi_v(x) = 1 - \frac{a}{|x|}.
\end{align}
Furthermore, if $\widetilde{v}$ satisfies the requirements in Definition~\ref{def: scattering length finite range} and $ v\geq \widetilde{v}\geq 0$, then (see \cite[Lemma C.2.C]{LSSY})
\begin{align}\label{eq: relation for v<v'}
\phi_v\leq \phi_{\widetilde{v}}\qquad \textrm{and}\qquad 
a(v)\geq a(\widetilde{v})\geq 0.
\end{align}
We will need the following approximation result in our proof of Theorem~\ref{thm:LHY-General}.

\begin{lemma}\label{lem:ScattAppr}
Suppose $v: {\mathbb R}^3 \rightarrow [0,+\infty]$ positive, radial, measurable such that $v(x)=0$ for all $|x| > R$ and define
\begin{align}
v_n(x) := \min\{ v(x), n\}.
\end{align}
Then the sequence of scattering lengths $\{a(v_n)\}_n$ satisfies $a(v_n) \nearrow a(v)$ as $n\rightarrow \infty$.
\end{lemma}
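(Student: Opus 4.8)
The plan is to prove Lemma~\ref{lem:ScattAppr} directly from the variational characterization \eqref{eq:ScatLengthMin} of the scattering length, combined with the monotonicity property \eqref{eq: relation for v<v'}. First I would fix an auxiliary radius $\widetilde R > R$ once and for all (legitimate since $\supp v_n \subset \supp v \subset B(0,R)$ for every $n$), so that all the scattering lengths $a(v_n)$ and $a(v)$ are computed by minimizing the same functional
\begin{align}
\mathcal{E}_w(\phi) := \int_{\{|x|\leq \widetilde R\}} |\nabla \phi(x)|^2 + \tfrac12 w(x)|\phi(x)|^2\,dx
\end{align}
over the same affine space $\{\phi \in H^1(B(0,\widetilde R)) : \phi_{|x|=\widetilde R} = 1\}$, and then solving $\tfrac{4\pi a}{1-a/\widetilde R} = \inf \mathcal{E}_w$ for $a$. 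Since $v_n \leq v_{n+1} \leq v$ pointwise, property \eqref{eq: relation for v<v'} gives $a(v_n) \leq a(v_{n+1}) \leq a(v)$, so the sequence is monotone non-decreasing and bounded above; hence it converges to some limit $a_\infty \leq a(v)$. The whole content of the lemma is therefore to show the reverse inequality $a_\infty \geq a(v)$, equivalently that $\liminf_n \inf \mathcal{E}_{v_n} \geq \inf \mathcal{E}_v$ (note $t \mapsto \frac{4\pi t}{1-t/\widetilde R}$ is continuous and strictly increasing on $[0,\widetilde R)$, so convergence of scattering lengths is equivalent to convergence of the infima, provided we also check $a_\infty < \widetilde R$, which follows from $a_\infty \leq a(v) \leq R < \widetilde R$).

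For the reverse inequality I would argue by monotone convergence on the minimizers. Let $\phi_n := \phi_{v_n,\widetilde R}$ be the unique minimizer for $v_n$ and $\phi := \phi_{v,\widetilde R}$ the minimizer for $v$. By \eqref{eq: relation for v<v'} the functions $\phi_{v_n}$ are monotone non-increasing in $n$ and bounded below by $\phi_v \geq 0$; after the normalization $\phi_n = (1-a(v_n)/\widetilde R)^{-1}\phi_{v_n}$ and using that $a(v_n) \to a_\infty$, the sequence $\phi_n$ is bounded in $H^1(B(0,\widetilde R))$ — the $L^2$ bound is immediate from $0 \le \phi_{v_n} \le \phi_{v_1}$ and the gradient bound follows because $\mathcal{E}_{v_n}(\phi_n) = \frac{4\pi a(v_n)}{1-a(v_n)/\widetilde R}$ is bounded (indeed convergent) and $v_n \geq 0$, so $\int |\nabla \phi_n|^2 \leq \mathcal{E}_{v_n}(\phi_n) \leq \mathcal{E}_{v_1}(\phi_1)$. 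Passing to a subsequence we get a weak $H^1$ limit $\psi$ with $\psi_{|x|=\widetilde R}=1$ (the trace is preserved under weak convergence), and by weak lower semicontinuity of the gradient term together with $v_n \nearrow v$ and Fatou/monotone convergence for the potential term,
\begin{align}
\mathcal{E}_v(\psi) \leq \liminf_{n\to\infty} \mathcal{E}_{v_n}(\phi_n) = \frac{4\pi a_\infty}{1-a_\infty/\widetilde R}.
\end{align}
Hence $\inf \mathcal{E}_v \leq \mathcal{E}_v(\psi) \leq \frac{4\pi a_\infty}{1-a_\infty/\widetilde R}$, i.e. $\frac{4\pi a(v)}{1-a(v)/\widetilde R} \leq \frac{4\pi a_\infty}{1-a_\infty/\widetilde R}$, and by monotonicity of $t \mapsto \frac{4\pi t}{1-t/\widetilde R}$ this yields $a(v) \leq a_\infty$, completing the proof.

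An alternative, perhaps cleaner, route avoiding compactness: use the explicit minimizer $\phi = \phi_{v,\widetilde R}$ for the full potential $v$ as a trial function in the problem for $v_n$. This gives $\frac{4\pi a(v_n)}{1-a(v_n)/\widetilde R} \leq \mathcal{E}_{v_n}(\phi) = \int_{|x|\le\widetilde R}|\nabla\phi|^2 + \tfrac12 \int v_n |\phi|^2$. Wait — this is the wrong direction; it bounds $a(v_n)$ from above by something converging to $a(v)$, which we already have. So the honest difficulty is genuinely the lower bound on the $a(v_n)$, and there the trial-function trick does not immediately help because a good trial function for $v_n$ need not be good for $v$ (the potential term only grows). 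This is why the compactness argument above, or an argument using the pointwise monotone convergence $\phi_{v_n} \searrow \phi_v$ directly (which holds since the $\phi_{v_n}$ decrease and one can identify the pointwise limit as a solution of the scattering equation for $v$ with the right boundary behaviour $1 - a_\infty/|x|$ for $|x| \geq R$, forcing $a_\infty = a(v)$ by uniqueness in \eqref{eq: R tilde scattering equation}), seems to be the right mechanism. I expect the main obstacle to be the rigorous passage to the limit in the scattering equation — i.e. justifying that the monotone pointwise limit of the $\phi_{v_n}$ (or the weak $H^1$ limit) actually solves $-\Delta\phi_v + \tfrac12 v \phi_v = 0$ with the correct asymptotics, which requires care with the distributional sense of the equation on the region where $v$ may fail to be $L^1_{\mathrm{loc}}$; everything else is routine monotone/dominated convergence.
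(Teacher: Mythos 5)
Your compactness argument is correct and constitutes a genuinely different route from the paper's. The paper's own proof is a one-line trial-state estimate: it observes that $\phi_{v,\widetilde R}=0$ a.e.\ on $\{v=\infty\}$, inserts $\phi_{v,\widetilde R}$ into the minimization problem for $v_n$, and claims
\begin{align*}
\frac{4\pi a(v)}{1- a(v)/\widetilde{R}} - \frac{4\pi a(v_n)}{1- a(v_n)/\widetilde{R}} \leq \tfrac{1}{2}\int_{\{v\geq n\}}(v-n)|\phi_{v,\widetilde R}|^2\,dx \to 0.
\end{align*}
Note, however, that using the $v$-minimizer as a trial state bounds $\inf\mathcal{E}_{v_n}$ from \emph{above}, which yields $\frac{4\pi a(v)}{1- a(v)/\widetilde{R}} - \frac{4\pi a(v_n)}{1- a(v_n)/\widetilde{R}} \geq \tfrac12\int_{\{v\geq n\}}(v-n)|\phi_{v,\widetilde R}|^2\,dx$, the reverse of what is displayed; a lower bound on the difference of infima does not by itself give convergence. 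Your second paragraph correctly diagnoses this: the real difficulty is the lower bound on $a(v_n)$, and the trial-function trick does not reach it. The weak-$H^1$ compactness plus lower semicontinuity argument you give is precisely the mechanism that supplies the missing $\liminf$ inequality, so your route is the more careful one.

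Two small points to tidy. In your $H^1$-boundedness step, $\mathcal{E}_{v_n}(\phi_n)=\inf\mathcal{E}_{v_n}$ is \emph{increasing} in $n$, so $\mathcal{E}_{v_n}(\phi_n)\leq\mathcal{E}_{v_1}(\phi_1)$ is reversed; the correct uniform bound is $\int|\nabla\phi_n|^2 \leq \inf\mathcal{E}_{v_n} \leq \inf\mathcal{E}_v = \frac{4\pi a(v)}{1-a(v)/\widetilde R}$. And the phrase ``Fatou/monotone convergence for the potential term'' should be unpacked so as to avoid the $\infty\cdot 0$ ambiguity on $\{v=\infty\}$: fix $m$, use $\int v_m|\phi_n|^2 \leq \int v_n|\phi_n|^2$ for $n\geq m$, pass to the limit in $n$ using strong $L^2$ convergence of $\phi_n$ (Rellich) and boundedness of $v_m$, and then let $m\to\infty$ by monotone convergence to obtain $\int v|\psi|^2 \leq \liminf_n \int v_n|\phi_n|^2$. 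With these adjustments your argument is complete.
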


\begin{proof}
It is immediate from the minimization problem in \eqref{eq:ScatLengthMin} that $a(v_n)$ is a monotone non-decreasing sequence and that $a(v_n) \leq a(v)$ for all $n$.

To prove the convergence of $a(v_n)$ towards $a(v)$ we consider the minimizer $\phi_{v,\widetilde{R}}$ of \eqref{eq:ScatLengthMin} for the potential $v$.
Clearly, $\phi_{v,\widetilde{R}}=0$ a.e. on $\{x:v(x) = \infty\}$, so
\begin{align}\label{eq:singularset}
\int_{\{x:v(x) = \infty\}} v(x) |\phi_{v,\widetilde{R}}|^2\,dx = 0.
\end{align}
Therefore, using $\phi_{v,\widetilde{R}}$ as a trial state for the minimization problem for $a(v_n)$, we find
\begin{align}
\frac{4\pi a(v)}{1- a(v)/\widetilde{R}} - \frac{4\pi a(v_n)}{1- a(v_n)/\widetilde{R}} 
\leq \frac{1}{2} \int_{\{x:v(x) \geq n\}} (v(x)-n) |\phi_{v,\widetilde{R}}|^2\,dx \rightarrow 0,
\end{align}
in the limit $n\rightarrow \infty$, where we used \eqref{eq:singularset} to pass to the limit.
\end{proof}%

\begin{definition}\label{def: scattering length for infinite range}
Let $v: {\mathbb R}^3 \rightarrow [0,+\infty]$ be positive, radial, measurable.
Then, the scattering length $a(v_{\leq R})$, with $v_{\leq R}(x)$ defined in \eqref{def: v leq R and v>R}, for all $R\geq 0$ is well-defined, in the sense of Definition~\ref{def: scattering length finite range}.
If the sequence $\{a(v_{\leq R})\}_R$ converges, then we say that the potential $v$ has {\it scattering length} 
\begin{align}
a(v):=\underset{R\to \infty}{\lim}a(v_{\leq R}).
\end{align}

\end{definition}
Note that this definition agrees with Definition~\ref{def: scattering length finite range} if $v$ has compact support. For $0\leq v\in L^1_{\mathrm{loc}}({\mathbb R}^3)$ and $0\leq R_1\leq R_2$ we have the pointwise inequality $v_{\leq R_1}\leq v_{\leq R_2}$. It is therefore immediate from \eqref{eq: relation for v<v'} that $a(v_{\leq R})$ is a non-decreasing function of $R\geq 0$, implying that $\lim_{R \rightarrow \infty} a(v_{\leq R})$ exists as soon as $\{a(v_{\leq R})\}_R$ is bounded from above.
\begin{lemma}\label{lem:ScatteringAdditive}
Suppose that $v\in L^1_{\mathrm{loc}}({\mathbb R}^3)$ is positive, radial, and has a finite scattering length $a(v)$ and define $v_{\leq R}$ and $v_{> R}$ via \eqref{def: v leq R and v>R}.
Then, for all $R\geq 0$,
\begin{align}\label{eq:ScatteringAdditive}
0\leq\max\{ a(v_{\leq R}), a(v_{> R})\} \leq a(v) \leq a(v_{\leq R}) + a(v_{> R}).
\end{align}
\end{lemma}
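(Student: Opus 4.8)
The plan is to reduce everything to the variational characterization \eqref{eq:ScatLengthMin} together with the monotonicity \eqref{eq: relation for v<v'} and a limiting argument in the cutoff radius. First I would dispose of the easy inequalities. Since $v_{\leq R}\leq v$ and $v_{> R}\leq v$ pointwise, the monotonicity statement in \eqref{eq: relation for v<v'}, applied to the truncated potentials $(v_{\leq R})_{\leq R'}$ and $(v_{> R})_{\leq R'}$ and then passed to the limit $R'\to\infty$ via Definition~\ref{def: scattering length for infinite range}, gives $a(v_{\leq R})\leq a(v)$ and $a(v_{> R})\leq a(v)$; nonnegativity is clear from \eqref{eq:ScatLengthMin}. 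This proves $0\leq\max\{a(v_{\leq R}),a(v_{> R})\}\leq a(v)$ and in particular shows the two truncated scattering lengths are finite.

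The content is the superadditivity bound $a(v)\leq a(v_{\leq R})+a(v_{> R})$. Here I would work at a finite outer radius first: fix $R'>R$, write $w:=v_{\leq R'}=v_{\leq R}+v_{>R}^{R'}$ (both summands having compact support), and use the scattering solutions from \eqref{eq: R tilde scattering equation}. Let $\phi_1=\phi_{v_{\leq R}}$ and $\phi_2=\phi_{v_{>R}^{R'}}$ be the (radial, monotone, $\le 1$) scattering profiles of the two pieces, so that outside their respective supports $\phi_i(x)=1-a_i/|x|$ with $a_1=a(v_{\leq R})$, $a_2=a(v_{>R}^{R'})$. The natural trial function for the minimization problem \eqref{eq:ScatLengthMin} for $w$ is the product $\psi:=\phi_1\phi_2$, suitably normalized so that $\psi=1$ on $|x|=\widetilde R$. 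Plugging $\psi$ into the quadratic form and using $-\Delta\phi_i+\tfrac12 v_i\phi_i=0$ together with $0\le\phi_i\le 1$, an integration by parts should collapse the cross terms and yield, up to the normalization factor, the bound
\begin{align}
\frac{4\pi a(w)}{1-a(w)/\widetilde R}\leq \frac{4\pi a_1}{1-a_1/\widetilde R}+\frac{4\pi a_2}{1-a_2/\widetilde R},
\end{align}
from which $a(w)\leq a_1+a_2$ follows after letting $\widetilde R\to\infty$ (the map $t\mapsto 4\pi t/(1-t/\widetilde R)$ is increasing, and in the limit the relation becomes exactly additive on the right). Finally I would let $R'\to\infty$: by Definition~\ref{def: scattering length for infinite range} and the monotonicity just established, $a(v_{>R}^{R'})\nearrow a(v_{> R})$ and $a(w)=a(v_{\leq R'})\nearrow a(v)$, giving $a(v)\leq a(v_{\leq R})+a(v_{> R})$.

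The main obstacle is the cross-term estimate in the integration by parts for the product trial state $\psi=\phi_1\phi_2$: one must control $\int \nabla\phi_1\cdot\nabla\phi_2\,(\text{stuff})$ and check it has a sign (or is negligible) using the monotonicity and the explicit exterior behavior $1-a_i/|x|$ of each profile, since the two potentials live on nested but overlapping regions in $x$-space ($v_{\leq R}$ on $|x|\le R$, $v_{>R}^{R'}$ on $R<|x|\le R'$, so in fact disjoint supports — which should make $v_1\phi_1^2\phi_2^2+v_2\phi_1^2\phi_2^2$ bound the desired energy cleanly, since $\phi_2\equiv$ const on $\supp v_1$ and $\phi_1\le 1$). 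One subtlety to handle carefully is the normalization: the true minimizer of \eqref{eq:ScatLengthMin} is $(1-a/\widetilde R)^{-1}\phi$, so the trial function should be $(\phi_1\phi_2)$ divided by its value at $|x|=\widetilde R$, and one must track that this normalization does not spoil the inequality in the $\widetilde R\to\infty$ limit. An alternative, perhaps cleaner, route is to use the integral-form characterization \eqref{eq:ScatInt}-type identities or the known additivity/subadditivity results for scattering lengths in \cite[Appendix C]{LSSY} directly; but the product-trial-state argument above is self-contained given what has been recalled.
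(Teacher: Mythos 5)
Your plan is correct, but it is genuinely different from the paper's argument, which is worth noting. The paper does not use a product trial state in the variational formulation at all; instead it invokes the integral identity $8\pi a(v_{\leq n})=\int v_{\leq n}\varphi_{v_{\leq n}}$ (a direct consequence of $-\Delta\omega=\tfrac12 g$ and the divergence theorem, cf.\ \eqref{eq:Scattering3}), splits the domain of integration at radius $R$, and then uses the monotonicity \eqref{eq: relation for v<v'} of the scattering profile in the potential \emph{twice}: $v_{\leq R}\leq v_{\leq n}$ gives $\int v_{\leq R}\varphi_{v_{\leq n}}\leq \int v_{\leq R}\varphi_{v_{\leq R}}=8\pi a(v_{\leq R})$, and $v_{>R}^n\leq v_{\leq n}$ gives $\int v_{>R}^n\varphi_{v_{\leq n}}\leq 8\pi a(v_{>R}^n)$. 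Passing to the limit $n\to\infty$ finishes the proof. This is shorter, entirely linear, and avoids any discussion of cross terms.

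Your product-trial-state route does work, but the ``main obstacle'' you flag is indeed exactly where the argument lives, and your proposal is a little vague about how it resolves. The clean way: multiply $-\Delta\phi_i+\tfrac12 v_i\phi_i=0$ by $\phi_i\phi_j^2$, integrate over $B(0,\widetilde R)$, and integrate by parts; adding the two resulting identities gives
\begin{align*}
\mathcal{E}[\phi_1\phi_2]+2\int_{B_{\widetilde R}}\phi_1\phi_2\,\nabla\phi_1\cdot\nabla\phi_2
&=4\pi a_1(1-a_1/\widetilde R)(1-a_2/\widetilde R)^2+4\pi a_2(1-a_2/\widetilde R)(1-a_1/\widetilde R)^2,
\end{align*}
where $\mathcal{E}$ is the quadratic form in \eqref{eq:ScatLengthMin} for $w=v_{\leq R}+v_{>R}^{R'}$. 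Since each $\phi_i$ is radial and non-decreasing, $\nabla\phi_1\cdot\nabla\phi_2=\phi_1'\phi_2'\geq 0$ pointwise, so the extra cross term can simply be dropped, and after dividing by the normalization $[(1-a_1/\widetilde R)(1-a_2/\widetilde R)]^2$ one gets precisely the inequality you state; $\widetilde R\to\infty$ and then $R'\to\infty$ complete the proof. Note that the sign argument that makes this work is not the one you sketch (``$\phi_2$ is constant on $\supp v_1$ and $\phi_1\leq 1$'' only controls the potential terms, not the kinetic cross term); it is the radial monotonicity of the scattering profiles recalled around \eqref{eq: R tilde scattering equation}. With that fix, both proofs are valid; the paper's is slightly more economical, while yours is more robust in that it does not rely on the integral representation of $a$ and would generalize to potentials with overlapping (not necessarily disjoint) decompositions, at the cost of losing the clean cross-term sign.
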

\begin{proof}
We fix $R\geq 0$. Recall the definition of $v_{>R}^{n}$ in \eqref{def: v leq R and v>R}. Since $0\leq v_{>R}^{n} \leq v_{\leq n}$ it follows easily from the existence of $a(v)$ that the scattering length of $v_{> R}$ exists and satisfies $0\leq a(v_{>R})\leq a(v)$. Since, for all $n\geq R$, we have $0\leq v_{\leq R}\leq v_{\leq n}$, taking a limit gives $0\leq a(v_{\leq R})\leq a(v)$.
To obtain the last inequality in \eqref{eq:ScatteringAdditive} we use that $a(v)$ in Definition~\ref{def: scattering length for infinite range} is defined via finite range potentials. With $\varphi_n:=\varphi_{v_{\leq n}}$ denoting the function defined around \eqref{eq: R tilde scattering equation} (and satisfying that equation since $v\in L^1_{\mathrm{loc}}({\mathbb R}^3)$) we get
\begin{align}\label{eq: scattering limit}
8 \pi a = 8\pi \lim_{n\rightarrow \infty} a(v_{\leq n}) =  \lim_{n\rightarrow \infty}  \int v_{\leq n} \varphi_n.
\end{align}
For $R\geq 0$ still fixed and $n\geq R$ we split the integral in \eqref{eq: scattering limit} as
\begin{align}
 \int v_{\leq n} \varphi_n = \int v_{\leq n} \one_{\{|x| \leq R\}} \varphi_n +  \int v_{\leq n} \one_{\{|x| > R\}} \varphi_n=  \int v_{ \leq R} \varphi_n +  \int v_{>R}^{n} \varphi_n.
\end{align}
For all $n \geq R$ we have $v_{ \leq R} \leq v_n$ and therefore, using \eqref{eq: relation for v<v'},
\begin{align}
\int v_{ \leq R} \varphi_n \leq \int v_{ \leq R}  \varphi_{v_{ \leq R}} = 8 \pi a( v_{ \leq R}).
\end{align}
Since $ v_{>R}^{n}\leq v_{\leq n}$ one more application of \eqref{eq: relation for v<v'} gives $\varphi_{v_{>R}^n}\geq \varphi_n$ and therefore
\begin{align}
\int v_{>R}^{n} \varphi_n \leq \int v_{>R}^n \varphi_{v_{>R}^n} = 8\pi a(v_{>R}^n).
\end{align}
By definition we have $\underset{n\to \infty}{\lim}a( v_{>R}^n)= a(v_{>R})$ and thus we get the last inequality in \eqref{eq:ScatteringAdditive} by combining the inequalities above and going to the limit.
\end{proof}
\subsection{Scattering quantities for $L^1$-potentials}
We proceed to introduce some notation for quantities related to the scattering length which will be used in the remainder of the paper.
For potentials satisfying Assumption~\ref{assump:v} we reformulate the scattering equation in \eqref{eq: R tilde scattering equation} as
\begin{align}\label{eq:Scattering2}
(-\Delta + \frac{1}{2} v(x) )(1-\omega(x)) =0,\qquad \text{ with } \omega \rightarrow 0, \text{ as } |x| \rightarrow \infty.
\end{align}
The solution $\omega$ to this equation satisfies that $\omega(x) = a/|x|$ for $x$ outside $\supp\, v$. We will refer to $\omega$ as the {\it scattering solution}. Furthermore, $\omega$ is radially symmetric and non-increasing with
\begin{align}
0\leq \omega(x)\leq 1.\label{omegabounds}
\end{align}
We introduce the function
\begin{align}\label{eq:Defg}
g= v(1-\omega).
\end{align}
The scattering equation can be reformulated as
\begin{align}
\label{eq:Scattering3}
-\Delta \omega = \frac{1}{2} g.
\end{align}
From this we deduce, using the divergence theorem, that
\begin{align}
a = (8\pi)^{-1} \int g,
\end{align}
and that the Fourier transform satisfies
\begin{align}\label{es:scatteringFourier}
\widehat{\omega}(k) = \frac{\hat{g}(k)}{2 k^2}.
\end{align}

\section{An equivalent problem on Fock space}\label{sec:Fock}
For convenience we reformulate the problem on Fock space.

Consider, for given $\rmu >0$, the following operator ${\mathcal H}_{\rmu}$ on the symmetric Fock space ${\mathcal F}_{\rm s}(L^2(\Lambda))$. The operator ${\mathcal H}_{\rmu}$ commutes with particle number and satisfies, with ${\mathcal H}_{\rmu,N}$ denoting the restriction of ${\mathcal H}_{\rmu}$ to the $N$-particle subspace of ${\mathcal F}_{\rm s}(L^2(\Lambda))$,
\begin{align}\label{eq:BackgroundH}
{\mathcal H}_{\rmu,N} & =
\sum_{i=1}^N \left( -\Delta_i - \rmu \int_{{\mathbb R}^3} g{(x_i-y)}\,dy \right)
+ \sum_{i<j} v(x_i-x_j)
 \nonumber \\
&= \sum_{i=1}^N -\Delta_i 
+ \sum_{i<j} v{(x_i-x_j)}
- 8\pi a \rmu N.
\end{align}
Notice that the new term in ${\mathcal H}_{\rmu,N}$ plays the role of a chemical potential justifying the notation.

Define the corresponding ground state energy density,
\begin{align}
e_0(\rmu):= \lim_{|\Lambda| \rightarrow \infty} |\Lambda|^{-1} \inf_{\Psi \in {\mathcal F_{\rm s}}\setminus \{0\}} \frac{\langle \Psi,  {\mathcal H}_{\rmu} \Psi \rangle}{\| \Psi \|^2}.
\end{align}

\noindent We formulate the following result, which will be a consequence of Theorems~\ref{thm:CompareBoxEnergy} and \ref{thm:LHY-Box}.

\begin{theorem}\label{thm:LHY-Background}
There exists a universal constant $C>0$ such that the following is true.
Suppose $v$ satisfies Assumption~\ref{assump:v} and that $\rmu a R^2 \leq C^{-1}$, $\rmu a^3 \leq C^{-1}$.
Then the ground state energy density of ${\mathcal H}_{\rmu}$ satisfies that
\begin{align}
e_0(\rmu) \geq 4\pi \rmu^2 a \left(- 1 - C (\rmu a^3) ^{1/2} - C \rmu a R^2
\right).
\end{align}
\end{theorem}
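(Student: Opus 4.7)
The plan is to derive Theorem~\ref{thm:LHY-Background} as the composition of the two forthcoming results signposted in the introduction: Theorem~\ref{thm:CompareBoxEnergy} in Section~\ref{sec:Localization}, which reduces $e_0(\rmu)$ to the ground state energy of $\mathcal H_\rmu$ restricted to a single cube of side $\ell\sim(\rmu a)^{-1/2}$, and Theorem~\ref{thm:LHY-Box} in Section~\ref{sec:BoxEnergy}, which supplies the matching lower bound inside such a cube. The first step is an IMS-type localization adapted to cubes of scale $\ell$; the crucial feature is that the gradient of the partition of unity has to be small enough that the Neumann gap $\ell^{-2}\sim\rmu a$ of $-\Delta$ on the box survives intact, because that gap is exactly what drives the Bogoliubov analysis on each box. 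With this choice the localization error can be absorbed into the allowed $C\rmu a R^2$ term of the statement.

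On a single box I would separate the condensate mode (the constant function) from its orthogonal complement and group the terms of $\mathcal H_\rmu$ by the number $Q_j$ of non-condensate creation/annihilation operators they contain, $j\in\{0,\ldots,4\}$. The centrepiece, which is the content of Lemma~\ref{lem:SplittingInQs}, is to exploit the scattering solution $\omega$ together with the function $g=v(1-\omega)$ and the identities \eqref{eq:Scattering3}, \eqref{es:scatteringFourier} to complete a square relative to the quartic part $Q_4$: one rewrites $Q_4$ as a manifestly non-negative quadratic form in suitable excitation fields, plus renormalized pieces in which the bare potential $v$ has been replaced by the much softer $g$. The dangerous cubic term $Q_3$ is designed to be absorbed into this square, and what remains is strictly quadratic in the excitation operators with symbol involving $\widehat g$ rather than $\widehat v$.

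At this point Section~\ref{sec:Bogoliubov} takes over with a Bogoliubov diagonalization in the non-zero momentum modes, producing a Lee--Huang--Yang-type sum of the schematic form
\begin{align*}
\tfrac{1}{2}\sum_{k\in \frac{2\pi}{\ell}\Z^3\setminus\{0\}}\Bigl(\sqrt{k^{4}+2\rmu\widehat g(k)\,k^{2}}-k^{2}-\rmu\widehat g(k)\Bigr).
\end{align*}
Estimating this discrete sum by the corresponding integral and using $\widehat g(0)=8\pi a$, I expect it to produce $-4\pi\rmu^2 a$ per unit volume at leading order (this is what turns into the $-1$ in the statement and exactly cancels the chemical-potential shift already built into $\mathcal H_\rmu$), a next-order LHY contribution of size $\rmu^2 a (\rmu a^3)^{1/2}$, and Riemann-sum together with ultraviolet errors that are swallowed by $\rmu^2 a\cdot \rmu a R^2$. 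Combining this with Theorem~\ref{thm:CompareBoxEnergy} yields the stated lower bound for $e_0(\rmu)$.

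The main obstacle, and where I would expect the bulk of the technical work in Section~\ref{sec:BoxEnergy} to live, is making the square-completion loss small uniformly in the $L^1$-norm of $v$. This uniformity is essential: without it one cannot pass to the limit $v_n\nearrow v$ used in Section~\ref{simplified} and therefore cannot reach hard-core potentials. Concretely, one has to establish a priori bounds on the number of excitations and on their high-kinetic-energy content for states close to the box ground state---bounds that rely in turn on the preserved Neumann gap from the localization step---and one has to show that the error generated by the cubic absorption is at worst of order $\rmu^2 a(\rmu a^3)^{1/2}$, so as not to spoil the LHY correction.
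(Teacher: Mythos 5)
Your plan is essentially the paper's own proof of this statement: Theorem~\ref{thm:LHY-Background} is obtained verbatim by combining the localization result $e_0(\rmu)\geq e_B(\rmu)$ of Theorem~\ref{thm:CompareBoxEnergy} (for the fixed $\chi$ and $K=K_0$, under \eqref{eq:R/ell}) with the box bound of Theorem~\ref{thm:LHY-Box}, exactly as you propose. Your sketch of the downstream box analysis is only heuristically off in minor ways that do not affect this deduction: the leading $-4\pi\rmu^2 a$ comes from the condensate/chemical-potential square completion in the $A_0$ term rather than from the Bogoliubov integral (which here only supplies the $\widehat{g\omega}(0)$ correction and errors of order $\rmu^2 a(\rmu a^3)^{1/2}$), and the paper's excitation modes enter through a continuous $k$-integral with $b_k=a_0^*a(Q(e^{ikx}\chi_B))$ rather than a discrete lattice sum.
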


\begin{proof}[Proof of Theorem~\ref{thm:LHY}]
It is easy to deduce Theorem~\ref{thm:LHY} from Theorem~\ref{thm:LHY-Background}.
Clearly $e(\widetilde \rho\,) \geq 0$, so it suffices to consider the case where $\widetilde \rho a R^2 \leq C^{-1}$.

By inserting the ground state of $H_N$ as a trial state in ${\mathcal H}_{\rmu}$ one gets in the thermodynamic limit that
\begin{align}\label{eq:CompareGC}
e(\widetilde \rho\,) \geq e_0(\rmu) +  \widetilde \rho \rmu\int g=e_0(\rmu) + 8\pi a \widetilde \rho \rmu .
\end{align}
For all $\rmu\in (0,\widetilde \rho\,]$ we may, in view of \eqref{eq: upper bound on scattering length finite range}, insert the lower bound from Theorem~\ref{thm:LHY-Background} into \eqref{eq:CompareGC}, which yields
\begin{align}\label{eq:Elem}
e(\widetilde \rho\,) &\geq 4\pi a \left[ -  \rmu^2 - C \rmu^2 (\rmu a^3)^{1/2}  - C \rmu^2 a R^2+ 2 \widetilde \rho \rmu
\right].
\end{align}
At this point we can choose $\rmu =  \widetilde \rho$ to get \eqref{eq:LowerBound}.
\end{proof}
\section{Reduction to a small box}\label{sec:Localization}
\subsection{Setup and notation}
The main part of the analysis will be carried out on a small box of size 
\begin{align}\label{eq:def_ell}
\ell := K(\rmu a)^{-1/2},
\end{align}
for some $K>0$ to be chosen sufficiently {\it small} but independent of $\rho_{\mu}$.
In this section we will carry out that localization. The main result is given at the end of the section as Theorem~\ref{thm:CompareBoxEnergy} which states that for a lower bound it suffices to consider a `box energy', i.e. the ground state energy of a Hamiltonian localized to a box of size $\ell$.
For convenience, in Theorem~\ref{thm:LHY-Box} we state the bound on the box energy that will suffice in order to prove Theorem~\ref{thm:LHY-Background}.

\medskip

Let $\chi \in C^{\infty}_c({\mathbb R}^3)$ be an even localization function, satisfying
\begin{align}
0 \leq \chi, \qquad \int \chi^2 = 1, \qquad \supp \chi \subset [-1/2, 1/2]^3.\label{def: chi}
\end{align}
The function $\chi$ will be fixed all through the paper. We will not try to optimize constants in the choice of $\chi$.

We define
\begin{align}
\chi_B(x):=\chi(\frac{x}{\ell})
\end{align}
and, for given $u \in {\mathbb R}^3$, 
\begin{align}
\chi_u(x) := \chi(\frac{x}{\ell}-u).
\end{align}
Notice that $\chi_u$ localizes to the box $B(u) := \ell u + [-\ell/2,\ell/2]^3$. 

We will also need the sharp localization function $\theta_u$ to the box $B(u)$, i.e.
\begin{align}
\theta_u := \one_{B(u)}.
\end{align}

Define $P_u, Q_u$ to be the orthogonal projections in $L^2({\mathbb R}^3)$ defined by
\begin{align}\label{def: projections}
P_u \varphi := \ell^{-3} \langle \theta_u, \varphi\rangle \theta_u, \qquad  Q_u \varphi:= \theta_u \varphi - \ell^{-3} \langle \theta_u, \varphi \rangle \theta_u.
\end{align}
Define furthermore
\begin{align}\label{eq:3.5}
W(x) := \frac{\vg{(x)}}{\chi*\chi(x/\ell)}.
\end{align}
Since $\chi * \chi(0) = 1$ by \eqref{def: chi}, we have
\begin{align}
\left| \chi*\chi(x) - 1\right| \leq \frac{1}{2}, \qquad \text{ for all } |x| \leq D,
\end{align}
where $D$ only depends on $\chi$.
Therefore, $W$ is well-defined by the finite range of $v$, for $R/\ell \leq D$,
i.e., for 
\begin{align}\label{eq:R/ell}
\rmu a R^2 \leq (KD)^2.
\end{align}
Note that this is no real restriction, as we mentioned after the statement of Theorem~\ref{thm:LHY-General}.
Define the localized potentials
\begin{align}
w_u(x,y) := \chi_u(x) W(x-y) \chi_u(y), \qquad w(x,y) := w_{u=0}(x,y).
\end{align}
Notice the translation invariance,
\begin{align}\label{eq:transInv}
w_{u+\tau}(x,y) = w_u(x-\ell \tau,y-\ell \tau).
\end{align}
For some estimates it is convenient to invoke the scattering solution and thus we introduce the notation, which again is well-defined for $\rho_\mu$ sufficiently small,
\begin{align}\label{def: W_2}
W_1(x) :={}& W(x) (1-\omega(x)) = \frac{g(x)}{\chi*\chi(x/\ell)},\qquad w_1(x,y):=w(x,y)(1-\omega(x-y))\\
W_2(x) :={}& W(x) (1-\omega^2(x)) = \frac{g(x)+g\omega(x))}{\chi*\chi(x/\ell)},\qquad w_2(x,y):=w(x,y)(1-\omega^2(x-y)).
\end{align}

For $\rmu$ sufficiently small a simple change of variables yields, for all $u\in{\mathbb R}^3$, the identities
\begin{align}\label{eq:DefU}
\frac{1}{2} \ell^{-6} \iint_{{\mathbb R}^3\times{\mathbb R}^3} \chi(\frac{x}{\ell})\chi(\frac{y}{\ell}) W_1(x-y)\, dx\,dy
&= \frac{1}{2} \ell^{-6} \iint_{{\mathbb R}^3\times{\mathbb R}^3} w_{1,u}(x,y)\,dx\,dy \nonumber \\
&=\frac{1}{2} \ell^{-3} \int g = 4 \pi \frac{a}{\ell^3} = \frac{1}{2} \ell^{-3} \widehat{g}(0)
\end{align}
and
\begin{align}\label{eq:w2int}
\frac{1}{2} \ell^{-6} \iint_{{\mathbb R}^3\times{\mathbb R}^3} w_2(x,y)\,dx\,dy 
&=\frac{1}{2}\ell^{-3} (\widehat{g}(0) + \widehat{g \omega}(0)).
\end{align}

\begin{lemma}\label{lem:IntEst}
There exists a constant $C$ (depending on $\chi$) such that for all $u$, 
and for all $\rmu$ such that \eqref{eq:R/ell} is satisfied, we have
\begin{align}
\max_x \int w_{1,u}(x,y)\,dy \leq C a,\label{eq:Pointwise_w1}\\
\max_x \int w_{2,u}(x,y)\,dy \leq C a,\label{eq:Pointwise_w2}
\end{align}
and furthermore, for all $x$,
\begin{align}\label{eq:W1-g}
g(x) \leq W_1(x) \leq g(x)(1 + C(R/\ell)^2).
\end{align}
\end{lemma}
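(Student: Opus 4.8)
\textbf{Proof plan for Lemma~\ref{lem:IntEst}.}

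The plan is to prove all three estimates by reducing everything to elementary properties of $g$, $\omega$, and the convolution factor $\chi*\chi$. The central observation is that on the support of $v$ (hence of $g$, of $g\omega$, and of $W$, $W_1$, $W_2$) we have $|x|\leq R$, so $\chi*\chi(x/\ell)$ is evaluated only at arguments of size $\leq R/\ell\leq KD$, where by the displayed bound $|\chi*\chi(x/\ell)-1|\leq 1/2$. Consequently $1/2\leq \chi*\chi(x/\ell)\leq 3/2$ on this set, and more precisely, using $\chi*\chi(0)=1$ together with the evenness of $\chi$ (so the gradient of $\chi*\chi$ vanishes at $0$), a second-order Taylor expansion gives $|\chi*\chi(x/\ell)-1|\leq C(R/\ell)^2$ for $|x|\leq R$, with $C$ depending only on $\chi$ through $\|D^2(\chi*\chi)\|_\infty$.

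For \eqref{eq:W1-g}: writing $W_1(x)=g(x)/\chi*\chi(x/\ell)$ and noting $g\geq 0$, the lower bound $W_1\geq g$ follows from $\chi*\chi(x/\ell)\leq 1$ — actually one should be slightly careful here since $\chi*\chi$ need not be $\leq 1$ pointwise; instead I would argue $W_1(x)\geq g(x)$ directly from $\chi*\chi(x/\ell)\leq 1$ if $\chi$ is chosen so, or more robustly prove the two-sided bound $|W_1(x)-g(x)|\leq C(R/\ell)^2 g(x)$ from the Taylor estimate above, which immediately yields $W_1(x)\leq g(x)(1+C(R/\ell)^2)$ and $W_1(x)\geq g(x)(1-C(R/\ell)^2)\geq g(x)$ after absorbing the factor, possibly shrinking $K$. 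Either way the point is that $W_1$ and $g$ agree up to a relative error $O((R/\ell)^2)$.

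For \eqref{eq:Pointwise_w1} and \eqref{eq:Pointwise_w2}: by the translation invariance \eqref{eq:transInv} it suffices to treat $u=0$, i.e. $w_{1}(x,y)=\chi(x/\ell)W_1(x-y)\chi(y/\ell)$. Using $0\leq \chi\leq \|\chi\|_\infty$ and $W_1(x-y)\leq \tfrac{3}{2}g(x-y)$ (from $\chi*\chi\geq 1/2$ on the relevant set) we get $\int w_{1}(x,y)\,dy\leq \tfrac{3}{2}\|\chi\|_\infty^2\int g(x-y)\,dy=\tfrac{3}{2}\|\chi\|_\infty^2\int g=12\pi\|\chi\|_\infty^2\, a$, which is $\leq Ca$ as claimed. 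For $w_2$ the argument is identical once we observe $W_2(x)=W(x)(1-\omega^2(x))=(g(x)+g\omega(x))/\chi*\chi(x/\ell)\leq \tfrac{3}{2}(g(x)+g(x)\omega(x))\leq 3g(x)$ by \eqref{omegabounds}, so $\int w_2(x,y)\,dy\leq 3\|\chi\|_\infty^2\int g=Ca$. The only genuinely nontrivial input is the uniform-in-$R/\ell$ control of $\chi*\chi$ near $0$, which is already packaged in the displayed inequality $|\chi*\chi(x)-1|\leq 1/2$ for $|x|\leq D$ together with smoothness of $\chi$; no real obstacle arises beyond being careful that all constants depend only on $\chi$ and that \eqref{eq:R/ell} is exactly the condition guaranteeing $R/\ell\leq KD\leq D$.
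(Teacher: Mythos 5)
Your approach matches the paper's proof: reduce to $u=0$ by translation invariance, control $1/(\chi*\chi)$ on the support of $g$ via a second-order Taylor expansion at $0$ (using evenness so the gradient vanishes), bound $\chi$ by $\|\chi\|_\infty$, and integrate $g$ to get $8\pi a$; for $w_2$ use $\omega\leq 1$ to reduce to $w_1$. The paper's write-up of \eqref{eq:Pointwise_w1} is the same chain of inequalities, and it states that the proof of \eqref{eq:W1-g} is ``similar and omitted''.

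One small but genuine flaw in your treatment of the lower bound $W_1\geq g$: your fallback argument claims that $W_1(x)\geq g(x)(1-C(R/\ell)^2)\geq g(x)$ ``after absorbing the factor, possibly shrinking $K$''. That second inequality is false for any $R/\ell>0$ no matter how small $K$ is, since $g\geq 0$ and $1-C(R/\ell)^2<1$; you cannot absorb a strict loss into an exact lower bound. The correct resolution is that $\chi*\chi\leq 1$ is automatic, not a choice: by Cauchy--Schwarz, $\chi*\chi(x)=\int\chi(y)\chi(y-x)\,dy\leq\|\chi\|_2^2=1$ thanks to the normalization $\int\chi^2=1$ in \eqref{def: chi}. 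So your first branch (``$\chi*\chi\leq 1$ if $\chi$ is chosen so'') is in fact always valid and no hedge or fallback is needed; you should delete the incorrect absorption argument and replace it with this one-line Cauchy--Schwarz observation.
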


\begin{proof}
By translation invariance, it suffices to consider $u=0$.
By definition,
\begin{align}
\int w_{2}(x,y)\,dy 
\leq 2\int w_{1}(x,y)\,dy= 2\chi(x/\ell) \int \frac{g{(x-y)}}{\chi*\chi((x-y)/\ell)} \chi(y/\ell)\,dy.
\end{align}
Since $\supp v \subset B(0,R)$, $\chi*\chi(0) = 1$ and $\chi*\chi$ is even, we get
\begin{align*}
0 &\leq \int w_{1}(x,y)\,dy=\int_{\abs{x-y}<R} w_{1}(x,y)\,dy\\
& \leq (1 + C (R/\ell)^2)^{-1} \chi(x/\ell) \int g{(x-y)} \chi(y/\ell)\,dy 
\leq {\|\chi}\|_\infty^2 (1 + C (R/\ell)^2)^{-1} 8 \pi a.
\end{align*}
The proof of \eqref{eq:W1-g} is similar and will be omitted.
\end{proof}

\subsection{Localization of the potential energy}
\begin{lemma}[Localization of potential energy]\label{lem:LocPotential}
If \eqref{eq:R/ell} is satisfied, we have for all $x_1,\ldots,x_N \in \Lambda$
\begin{align}
\sum_{i=1}^N - \rmu &\int g{(x_i-y)}\,dy + \sum_{i<j} \vg{(x_i-x_j)}  \nonumber \\
&=
\int_{\ell^{-1}(\Lambda+B(0,\ell/2))} \Big[
\sum_{i=1}^N - \rmu \int w_{1,u}(x_i,y)\,dy + \sum_{i<j} w_u(x_i,x_j) 
\Big]\,du .
\end{align}
\end{lemma}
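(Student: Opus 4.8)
The plan is to verify the identity by computing the $u$-integral on the right-hand side termwise, exploiting the defining property $\int\chi^2=1$ of the localization function together with the definition \eqref{eq:3.5} of $W$. First I would handle the two-body (interaction) term. Inserting the definition $w_u(x_i,x_j)=\chi_u(x_i)W(x_i-x_j)\chi_u(x_j)$ and integrating over $u\in\mathbb{R}^3$, the substitution $u\mapsto x/\ell-u$ (or a direct change of variables) gives
\begin{align}
\int_{\mathbb{R}^3}\chi_u(x_i)\chi_u(x_j)\,du
=\int_{\mathbb{R}^3}\chi\Big(\tfrac{x_i}{\ell}-u\Big)\chi\Big(\tfrac{x_j}{\ell}-u\Big)\,du
=\chi*\chi\Big(\tfrac{x_i-x_j}{\ell}\Big),
\end{align}
using that $\chi$ is even. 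Multiplying by $W(x_i-x_j)$ and recalling \eqref{eq:3.5}, the factor $\chi*\chi((x_i-x_j)/\ell)$ cancels exactly and we recover $v(x_i-x_j)$, which is the $i<j$ term on the left. Here one uses the support condition \eqref{eq:R/ell} to ensure $W$ is well-defined wherever $v\neq 0$.

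Next I would treat the one-body term. By definition $w_{1,u}(x_i,y)=\chi_u(x_i)W_1(x_i-y)\chi_u(y)$ with $W_1=g/\chi*\chi(\cdot/\ell)$. Integrating $-\rho_\mu\int w_{1,u}(x_i,y)\,dy$ over $u$, the only $u$-dependence sits in $\chi_u(x_i)\chi_u(y)$, and the same computation as above yields $\chi*\chi((x_i-y)/\ell)$, which cancels the denominator in $W_1$ and leaves $g(x_i-y)$. Thus the $u$-integral of the one-body term reproduces $-\rho_\mu\int g(x_i-y)\,dy$ on the left. In both cases the $u$-integral over $\mathbb{R}^3$ equals the $u$-integral over $\ell^{-1}(\Lambda+B(0,\ell/2))$ because the integrands are supported there: $\chi_u(x_i)$ vanishes unless $B(u)$ meets $\Lambda$, i.e. unless $u\in\ell^{-1}(\Lambda+B(0,\ell/2))$, since $x_i\in\Lambda$. (For the one-body term one should note the integrand in $y$ is supported in $B(0,R)+x_i$, so it too lives in a slightly enlarged $\Lambda$; the key point is that $\chi_u(x_i)$ already forces $u$ into the stated set.)

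The only genuinely delicate point is the bookkeeping of the domain of $u$-integration: one must check that restricting from $\mathbb{R}^3$ to $\ell^{-1}(\Lambda+B(0,\ell/2))$ loses nothing, which follows because $\supp\chi\subset[-1/2,1/2]^3$ forces $\chi_u(x)=0$ whenever $x\in\Lambda$ but $\ell u\notin\Lambda+[-\ell/2,\ell/2]^3$. Beyond that the argument is a pair of Fubini applications and the elementary identity $\int\chi(z-u)\chi(w-u)\,du=\chi*\chi(z-w)$ for even $\chi$; there is no real obstacle, only care in matching the two sides term by term. Summing the contributions over $i$ and over pairs $i<j$ then gives the claimed equality.
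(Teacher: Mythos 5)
Your argument is correct and follows the same route as the paper's proof: integrate over $u$ to produce the convolution $\chi*\chi((x_i-x_j)/\ell)$ (resp. $\chi*\chi((x_i-y)/\ell)$), which cancels the denominator in $W$ (resp. $W_1$) and reproduces $v$ (resp. $g$), with the support of $\chi_u(x_i)$ for $x_i\in\Lambda$ justifying the restriction of the $u$-integral to $\ell^{-1}(\Lambda+B(0,\ell/2))$. The paper only writes out the two-body term and says ``the other term is similar''; you supply the one-body computation explicitly, but the idea is identical.
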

\begin{proof}
We calculate, using $x_i, x_j \in \Lambda$,
\begin{align}\label{eq:ObtainingMeanField}
\int_{\ell^{-1}(\Lambda+B(0,\ell/2))} w_u(x_i,x_j)  \,du 
&=\int_{\ell^{-1}(\Lambda+B(0,\ell/2))} \chi(\frac{x_i}{\ell}-u)\chi(\frac{x_j}{\ell}-u)  \,du \,W(x_i-x_j)\nonumber \\
&= \vg{(x_i-x_j)}.
\end{align}
Here we used that if $\|x_i-x_j\|\leq R$ and $\rho_\mu$ is sufficiently small so that  \eqref{eq:R/ell} is satisfied, then the $u$-integral gives the (non-zero) convolution, which is the denominator in $W$.
The other term is similar.
\end{proof}

\subsection{Localization of the kinetic energy}\label{sec:kinloc}
In this subsection we prove a localization estimate on the kinetic energy in the box $B(u)$
centered at $\ell u$. The localized kinetic energy operator stems from Lemma~\ref{lem:LocKinEn-2} below and becomes
\begin{align}\label{eq:DefTu}
\mathcal{T}_u  := Q_u\left[
\chi_u 
\left( -\Delta- \constkinetickernel \ell^{-2} \right)_{+}
\chi_u
+
b \ell^{-2}   \right] Q_u,
\end{align}
where $b,\constkinetickernel>0$ are universal constants.\\
\indent Note that $\cT_u$ vanishes on constant functions. The last term in
$\cT_u$ will control the gap in the kinetic energy, i.e. on functions
orthogonal to constants in the box, $\cT_u$ is bounded below by at
least $b\ell^{-2}$.
A key result to obtain \eqref{eq:DefTu} is the lemma below.
\begin{lemma}[Abstract kinetic energy localization]\label{lm:abskinloc}
Let $\mathcal{K}:\R^3\to[0,\infty)$ be a symmetric, polynomially bounded, continuous function, and define the operator $T$ on $L^2(\R^3)$ by
\begin{equation}\label{eq:abskinloc}
T=\int_{\R^3} Q_u\chi_u \mathcal{K}(-i\ell\nabla)\chi_uQ_u  \,du,
\end{equation}
where $\chi_u$ is considered as a multiplication operator in configuration space.
This $T$ is translation invariant, i.e. a multiplication operator in Fourier space $T=F(-i\ell\nabla)$, with 
\begin{equation}\label{eq:absF}
F(p)=(2\pi)^{-3}\mathcal{K}*|\widehat\chi|^2(p)
-2(2\pi)^{-3}\widehat\theta(p)\widehat\chi*(\mathcal{K}\widehat\chi)(p)+(2\pi)^{-3}\left(\int
\mathcal{K}|\widehat\chi|^2\right)\widehat\theta(p)^2. 
\end{equation}
In particular, we have $F(0)=0$, $F\geq 0$ and $\nabla F(0)=0$. 
\end{lemma}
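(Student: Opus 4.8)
The plan is to compute the integral kernel of $T$ in Fourier space directly, exploiting the structure of the projections $P_u, Q_u$ from \eqref{def: projections} and the translation covariance of all the objects involved. First I would decompose $Q_u\chi_u = \chi_u - P_u\chi_u$ inside \eqref{eq:abskinloc}, so that $Q_u\chi_u\mathcal{K}(-i\ell\nabla)\chi_uQ_u$ expands into four terms: the main term $\chi_u\mathcal{K}(-i\ell\nabla)\chi_u$, two cross terms involving one factor of $P_u$, and one term with $P_u$ on both sides. Since $P_u\phi = \ell^{-3}\langle\theta_u,\phi\rangle\theta_u$ with $\theta_u = \one_{B(u)}$, each $P_u$ contributes a rank-one piece built from the sharp box indicator, whose Fourier transform is a translate/modulation of $\widehat\theta$. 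The key book-keeping point is that $\chi_u(x) = \chi(x/\ell - u)$ and $\theta_u(x) = \theta_0(x/\ell - u)$ are both translates (in units of $\ell$) of fixed functions, so after rescaling $x = \ell z$ the $u$-integral becomes a convolution over $\R^3$.

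Second, I would carry out the $u$-integration termwise. For the main term, $\int_{\R^3}\chi_u\mathcal{K}(-i\ell\nabla)\chi_u\,du$ has kernel (in the rescaled variables) $\int \chi(z-u)\chi(z'-u)\,du$ times the kernel of $\mathcal{K}$, i.e. it is the multiplication operator $\chi*\chi$ composed appropriately; passing to Fourier space this produces the convolution $(2\pi)^{-3}\mathcal{K}*|\widehat\chi|^2(p)$ — this is where the $|\widehat\chi|^2$ appears, since $\widehat{\chi*\chi}$-type quantities combine with $\mathcal{K}$. For each cross term, one factor $\theta_u$ and one factor $\chi_u$ appear under the $u$-integral, producing a kernel of convolution type $\theta*\chi$, and in Fourier space $\widehat\theta\,\widehat\chi$ convolved against $\mathcal{K}\widehat\chi$; the two cross terms are adjoints of each other and, because everything is real and even, they coincide and give the factor $-2(2\pi)^{-3}\widehat\theta(p)\,\widehat\chi*(\mathcal{K}\widehat\chi)(p)$. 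The doubly-projected term gives a kernel built from $\theta*\theta$ with the scalar $\langle\theta,\mathcal{K}(-i\ell\nabla)\theta\rangle$-type factor, which in Fourier variables is $\big(\int\mathcal{K}|\widehat\chi|^2\big)$ — wait, more precisely $\ell^{-6}$ times $\langle\theta_u\chi_u,\mathcal{K}(-i\ell\nabla)\chi_u\theta_u\rangle$, but since $\supp\chi\subset[-1/2,1/2]^3$ equals (up to the rescaling) the box, $\theta_u\chi_u = \chi_u$, so this reduces to $(2\pi)^{-3}\big(\int\mathcal{K}|\widehat\chi|^2\big)\widehat\theta(p)^2$. Collecting the four contributions yields exactly \eqref{eq:absF}, and translation invariance is manifest because every term came out as a multiplication operator in $p$.

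Finally, for the three stated properties: $F(0)=0$ follows by evaluating \eqref{eq:absF} at $p=0$ using $\widehat\theta(0) = \int\theta = 1$ (the box has unit volume in rescaled variables, by $\supp\chi\subset[-1/2,1/2]^3$ and $\int\chi^2 = 1$) and $\widehat\chi*(\mathcal{K}\widehat\chi)(0) = \int\mathcal{K}|\widehat\chi|^2$, so the last two terms cancel the first; alternatively, and more conceptually, $F(0) = 0$ because $\cT_u$ applied to a constant function gives $Q_u\chi_u\mathcal{K}(-i\ell\nabla)(\text{const}) = \mathcal{K}(0)\,Q_u\chi_u\cdot\text{const}$, and $Q_u$ annihilates constants on the box — this operator-level argument also gives $\nabla F(0) = 0$, since differentiating $T$ at zero momentum corresponds to testing against linear functions $x\mapsto x_j$, which inside the box $B(u)$ differ from a constant by the function $(x - \ell u)_j$, and one checks $Q_u\chi_u\mathcal{K}(-i\ell\nabla)\chi_u$ applied to $(x-\ell u)_j$ pairs to zero after the $u$-integral by the evenness of $\chi$ and $\mathcal{K}$. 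Nonnegativity $F\geq 0$ is the cleanest: for any $\phi$, $\langle\phi, T\phi\rangle = \int_{\R^3}\|\mathcal{K}(-i\ell\nabla)^{1/2}\chi_uQ_u\phi\|^2\,du \geq 0$ since $\mathcal{K}\geq 0$, so $F = \widehat T \geq 0$ pointwise.

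The main obstacle I anticipate is purely the careful Fourier/rescaling bookkeeping in the termwise $u$-integration — keeping track of the $\ell$-scalings, the $(2\pi)^{-3}$ normalizations, and which functions are even/real so that the two cross terms genuinely combine into a single $-2(\cdots)$ term rather than leaving an unwanted imaginary part. The conceptual content ($F(0)=\nabla F(0)=0$ from $Q_u$ killing affine functions, $F\geq 0$ from the manifest quadratic-form positivity) is straightforward; the risk is an arithmetic slip in matching the precise form \eqref{eq:absF}.
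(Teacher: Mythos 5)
Your plan matches the paper's proof essentially step for step: both expand $Q_u\chi_u\mathcal{K}(-i\ell\nabla)\chi_uQ_u$ into four pieces using $Q_u=\theta_u - P_u$ (the paper phrases this via the integral kernel $\theta_u(y)[\delta(y-x)-\one]\theta_u(x)$, which is the same decomposition), integrate termwise over $u$ to get convolution kernels, Fourier transform to read off \eqref{eq:absF}, and obtain $F\geq 0$ from the manifest quadratic-form positivity and $F(0)=0$ from $\widehat\theta(0)=1$. The only place you diverge is $\nabla F(0)=0$: you sketch an operator-level argument pairing against affine functions $x\mapsto x_j$, which are not in $L^2(\R^3)$ and would need an approximation argument to be made rigorous (though the underlying parity observation does work when applied directly to the explicit formula \eqref{eq:absF}), whereas the paper simply notes that $F$ is differentiable, nonnegative, and vanishes at $0$, so $0$ is a minimum and $\nabla F(0)=0$ follows at once --- a shorter and cleaner route to the same conclusion.
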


\begin{remark}\label{rm:LimitReg}
	For simplicity, we have chosen to assume that $\chi \in C^{\infty}_c$ whereby $\widehat{\chi}$ has fast decay.
	The same method works for localization functions with less regularity, the important assumption for Lemma~\ref{lm:abskinloc} being that the integral $\int
	\mathcal{K}|\widehat\chi|^2$ converges. In the accompanying paper \cite{BS} it will be important to use this flexibility.
\end{remark}

\begin{proof}
By a simple scaling it is enough to consider $\ell=1$.
	This is a straightforward calculation. Note that $Q_u$ has the integral kernel $\theta_u(y)\off{\delta(y-x)- \one}\theta_u(x)$.
	If we denote by $\check{\mathcal{K}}$ the inverse Fourier transform of $\mathcal{K}$ in the sense of a tempered distribution, then the integral kernel of the operator
	$
	Q_u\chi_u \mathcal{K}(-i\nabla)\chi_uQ_u
	$
	is given by 	
\begin{align*}
		&\chi_u(x)\check{K}(x-y)\chi_u(y)-\chi_u(x)[\check{\mathcal{K}}*\chi_u](x)\theta_u(y) \\
		&\qquad 
		-\theta_u(x)[\check{\mathcal{K}}*\chi_u](y)\chi_u(y)+
		\theta_u(x)\langle \chi_{u}|\mathcal{K}(-i\nabla)\chi_{u}\rangle\theta_u(y).
\end{align*}
Thus the integral kernel of $\int Q_u\chi_u \mathcal{K}(-i\nabla)\chi_uQ_u \,d u $ is
given by 
\begin{align*}
([\chi*\chi]\check{\mathcal{K}})(x-y)-2\left(\chi[\check{\mathcal{K}}*\chi]\right)*\theta(x-y)
+(2\pi)^{-3}\left(\int \mathcal{K}(p)\widehat\chi(p)^2\,d p\right)\theta*\theta(x-y),
\end{align*}
where we used that $\int \mathcal{K}(p)\widehat\chi(p)^2\,d p$ is finite by the choice of $\mathcal{K}$ and the decay of $\widehat\chi$.
We arrive at the expression for $F$ by calculating the inverse Fourier
transform.
The fact that $F(0)=0$ follows since $\widehat\theta(0)=\int\theta=1$
and
\begin{align*}
(2\pi)^3F(0)=2\left(\int \mathcal{K}\widehat \chi^2\right)(1-\widehat\theta(0))^2=0.
\end{align*}
That $F\geq 0$ is a direct consequence of \eqref{eq:abskinloc} since $\mathcal{K}$ is positive. Because $F$ is differentiable it follows that $\nabla F(0)=0$.
\end{proof}

With $\ell=1$ this lemma is similar to the generalized IMS localization formula
$$
\int_{\R^3} \chi_u \mathcal{K}(-i\nabla)\chi_u\,d u=(2\pi)^{-3}\mathcal{K}*|\widehat\chi|^2,
$$
where $\mathcal{K}(p)=p^2$ gives the standard IMS formula since then
$(2\pi)^{-3}\mathcal{K}*|\widehat\chi|^2=p^2+\int|\nabla\chi|^2$.

\begin{corollary}\label{eq:Quav} With the same notation as above we have
	that
	\begin{equation}\label{eq:averagedneumangap}
	\int_{\R^3}Q_u \,d u=1-\widehat\theta(-i\ell\nabla)^2,
	\end{equation}
	i.e. the operator $\int_{\R^3}Q_u \,d u$ is the multiplication operator
	in Fourier space given by $1-\widehat\theta(\ell p)^2$.
\end{corollary}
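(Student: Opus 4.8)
The plan is to mimic the integral-kernel computation in the proof of Lemma~\ref{lm:abskinloc}, but with the middle factor $\chi_u\mathcal{K}(-i\ell\nabla)\chi_u$ replaced by the identity; equivalently, since each $Q_u$ is an orthogonal projection, to write $\int_{\R^3}Q_u\,du=\int_{\R^3}Q_uQ_u\,du$ and just read off the kernel of $Q_u$. By the same scaling reduction already used for Lemma~\ref{lm:abskinloc} it suffices to treat $\ell=1$, where $Q_u$ has integral kernel $\theta_u(x)\big[\delta(x-y)-1\big]\theta_u(y)$, with $\theta_u=\one_{B(u)}$, $B(u)=u+[-1/2,1/2]^3$, and $\theta:=\theta_0=\one_{[-1/2,1/2]^3}$.

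First I would split $Q_u$ into the operator of multiplication by $\theta_u$ (kernel $\theta_u(x)\delta(x-y)$) and the rank-one part $P_u$ (kernel $\theta_u(x)\theta_u(y)$ when $\ell=1$). Integrating the multiplication part over $u$ produces the identity operator, because for every fixed $x$ one has $\int_{\R^3}\theta_u(x)\,du=\big|\{u:x\in B(u)\}\big|=\big|x+[-1/2,1/2]^3\big|=1$, so the $u$-integral of the kernel is $\delta(x-y)$. For the rank-one part, the change of variables $u\mapsto v=x-u$ together with the evenness of $\theta$ gives
\[
\int_{\R^3}\theta_u(x)\theta_u(y)\,du=\int_{\R^3}\theta(v)\,\theta\big(v-(x-y)\big)\,dv=(\theta*\theta)(x-y),
\]
so $\int_{\R^3}P_u\,du$ is convolution with $\theta*\theta$, i.e.\ in Fourier space multiplication by $\widehat{\theta*\theta}(p)=\widehat\theta(p)^2$. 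Combining the two pieces yields $\int_{\R^3}Q_u\,du=1-\widehat\theta(-i\nabla)^2$ at $\ell=1$, and undoing the scaling exactly as in Lemma~\ref{lm:abskinloc} replaces $p$ by $\ell p$, giving $1-\widehat\theta(-i\ell\nabla)^2$.

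I do not expect a real obstacle here; this is essentially the degenerate case of Lemma~\ref{lm:abskinloc}. The only points needing a little care are the distributional bookkeeping $\theta_u(x)\delta(x-y)=\theta_u(x)\theta_u(y)\delta(x-y)$ (legitimate since $\delta$ enforces $x=y$), keeping the Fourier and scaling conventions straight so that $\int P_u\,du$ produces $\widehat\theta(\ell p)^2$ and not $\widehat\theta(p/\ell)^2$, and noting $\widehat\theta(0)=\int\theta=1$, which makes the result consistent with $F(0)=0$ in Lemma~\ref{lm:abskinloc} and with the fact that every $Q_u$ annihilates constants. As a sanity check, positivity $1-\widehat\theta(\ell p)^2\geq 0$ is automatic from $0\leq Q_u$ together with $|\widehat\theta|\leq\widehat\theta(0)=1$.
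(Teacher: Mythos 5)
Your proof is correct, but it does not follow the paper's route. The paper proves the corollary in one line by specializing Lemma~\ref{lm:abskinloc}: take $\mathcal{K}=1$ and $\chi=\theta$ (allowed by Remark~\ref{rm:LimitReg} since $\int\mathcal{K}|\widehat\theta|^2<\infty$), observe that then $Q_u\chi_u\mathcal{K}(-i\ell\nabla)\chi_uQ_u=Q_u\theta_uQ_u=Q_u$, and read \eqref{eq:averagedneumangap} off formula \eqref{eq:absF} using $\widehat\theta*\widehat\theta=(2\pi)^3\widehat{\theta^2}=(2\pi)^3\widehat\theta$ and $(2\pi)^{-3}\int|\widehat\theta|^2=1$. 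You instead recompute the integral kernel of $\int Q_u\,du$ from scratch, splitting $Q_u=\theta_u-P_u$, showing $\int\theta_u\,du=1$ pointwise and $\int\theta_u(x)\theta_u(y)\,du=(\theta*\theta)(x-y)$, and Fourier transforming. Both arguments are sound and short; the paper's is more economical because it reuses the already-established lemma, while yours is self-contained and makes the geometric content (an averaged Neumann gap as identity minus a convolution) fully explicit without invoking the distributional-kernel bookkeeping of the more general lemma. One minor remark: you could avoid writing $\int Q_u\,du=\int Q_uQ_u\,du$ altogether, since you never actually use idempotence in the computation — the decomposition $Q_u=\theta_u-P_u$ already gives the kernel directly.
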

\begin{proof} Simply take $\mathcal{K}=1$ and $\chi=\theta$ in the above lemma which is allowed as noticed in Remark~\ref{rm:LimitReg}. 
\end{proof}

We will use Lemma~\ref{lm:abskinloc} for the function
$\mathcal{K}(p)=[|p|^2-s^{-2}]_+$, where $s>0$ is a sufficiently small constant. Here $u_+=\max\{u,0\}$ denotes the positive part of $u$.
\begin{lemma}\label{lm:kinloc-2}
	There exist constants
	$C>0$ and $s^\ast>0$ (depending on the choice of $\chi$) such that for $0<s\leq s^{\ast}$ and any $\ell>0$ we have the inequality for all $\varphi \in H^1(\R^3)$
	\begin{equation}
	\langle \varphi,F_s(|-i\nabla|)\varphi \rangle\geq \int\langle \varphi, Q_u\chi_u(-\Delta-(s\ell)^{-2})_+\chi_u Q_u \varphi\rangle \,d u,
	\end{equation}
	where
	\begin{equation}\label{eq:Fs-2}
	F_s(|p|)=\left\{\begin{array}{lr}
	(|p|^2-\frac12(s\ell )^{-2}),&\hbox{if }|p|\geq \frac56 (s\ell )^{-1},\\
	Cs p^2,&\hbox{if }|p|<\frac56 (s \ell )^{-1}.
	\end{array}\right. 
	\end{equation}
\end{lemma}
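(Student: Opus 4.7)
First, by the scaling $x \mapsto \ell^{-1} x$ it suffices to treat $\ell = 1$. I then apply Lemma~\ref{lm:abskinloc} to the polynomially bounded, continuous function $\mathcal{K}(q) := [|q|^2 - s^{-2}]_+$, obtaining the operator identity $\int Q_u \chi_u (-\Delta - s^{-2})_+ \chi_u Q_u\,du = F(-i\nabla)$ with $F$ given by~\eqref{eq:absF} for this $\mathcal{K}$. Since both sides are Fourier multipliers, the claim reduces to verifying the pointwise symbol bound $F(p) \leq F_s(|p|)$ for all $p \in \R^3$. Writing $A(p) := (2\pi)^{-3}\mathcal{K} * |\widehat\chi|^2(p)$, $a(p) := (2\pi)^{-3}\widehat\chi * (\mathcal{K}\widehat\chi)(p)$, $C := (2\pi)^{-3}\int \mathcal{K}|\widehat\chi|^2$, and using $A(0) = a(0) = C$, I rearrange
\[
F(p) = (A(p) - C) - 2\widehat\theta(p)(a(p) - C) + C(1 - \widehat\theta(p))^2,
\]
which makes $F(0) = \nabla F(0) = 0$ transparent. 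Since $\chi \in C^\infty_c$, $\widehat\chi$ is Schwartz; together with $\supp \mathcal{K} \subset \{|q| \geq s^{-1}\}$, all integrals of $\mathcal{K}$ against polynomial weights and $|\widehat\chi|^2$ or its derivatives are $O_N(s^N)$ for every $N$. In particular $C = O_N(s^N)$, and by Cauchy--Schwarz applied to the convolution, $|a(p)| = O_N(s^N)$ uniformly in $p$.

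For the region $|p| < \tfrac{5}{6}s^{-1}$, the target is $F(p) \leq C_\chi s |p|^2$. I Taylor-expand about $p = 0$: for any $\xi \in [0, p]$ the point $\xi$ is separated from $\supp \mathcal{K}$ by at least $\tfrac{1}{6}s^{-1}$, so the Hessians of $A$ and $a$ at $\xi$ are $O_N(s^N)$ by Schwartz decay of the derivatives of $\widehat\chi$. Combined with $C(1 - \widehat\theta(p))^2 \leq C \|\nabla \widehat\theta\|_\infty^2 |p|^2 = O_N(s^N) |p|^2$, this gives the bound.

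For the region $|p| \geq \tfrac{5}{6}s^{-1}$, the target is $F(p) \leq |p|^2 - s^{-2}/2$. I decompose $\mathcal{K}(q) = (|q|^2 - s^{-2}) + (s^{-2} - |q|^2)_+$, so that the standard IMS identity yields
\[
A(p) = |p|^2 + \|\nabla \chi\|_2^2 - s^{-2} + E(p),\qquad E(p) := (2\pi)^{-3}\int (s^{-2} - |p - q|^2)_+ |\widehat\chi(q)|^2\,dq \geq 0.
\]
Together with the $O_N(s^N)$ bounds on $|\widehat\theta(p) a(p)|$ and $C \widehat\theta(p)^2$, the target reduces to $\|\nabla\chi\|_2^2 + E(p) \leq s^{-2}/2 - O_N(s^N)$. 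Using $|p - q| \geq \bigl||p| - |q|\bigr|$, the integrand of $E(p)$ is bounded by $(s^{-2} - (|p| - |q|)^2)_+$; since $|\widehat\chi|^2$ is concentrated near $q = 0$, the dominant contribution arises from $|q| \ll |p|$ and gives $E(p) \leq (s^{-2} - |p|^2)_+ + (\text{lower-order correction})$. At $|p| = \tfrac{5}{6}s^{-1}$ the main term equals $\tfrac{11}{36}s^{-2}$ while for $|p| \geq s^{-1}$ it vanishes, and choosing $s^\ast$ so small that $\|\nabla\chi\|_2^2$ and the correction together with $O_N(s^N)$ remain below $\tfrac{7}{36}s^{-2}$ closes the estimate.

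The main technical obstacle is the sharp estimate of $E(p)$ near the threshold $|p| = \tfrac{5}{6}s^{-1}$: making rigorous the heuristic that $E(p)$ is essentially $(s^{-2} - |p|^2)_+$ on the bulk of $|\widehat\chi|^2$ requires carefully tracking how the weight $(s^{-2} - |p - q|^2)_+$ varies across $\supp |\widehat\chi|^2$ and using the Schwartz decay of $\widehat\chi$ to control the tail contribution from $|q|$ comparable to $|p|$. This is precisely the analysis that dictates the fraction $5/6$ in the definition of $F_s$ and the size of $s^\ast$.
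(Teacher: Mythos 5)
Your proposal is correct and takes essentially the same route as the paper: after scaling to $\ell=1$ you apply Lemma~\ref{lm:abskinloc} with $\mathcal{K}(q)=(|q|^2-s^{-2})_+$, control the $\widehat\theta$-terms through the Schwartz decay of $\widehat\chi$ (so they are $O_N(s^N)$), treat $|p|<\tfrac56 s^{-1}$ by Taylor expansion at $p=0$ using $F(0)=\nabla F(0)=0$ and small second derivatives, and treat $|p|\geq\tfrac56 s^{-1}$ via the split $\mathcal{K}(q)=(|q|^2-s^{-2})+(s^{-2}-|q|^2)_+$, exactly as in the paper's argument. The only difference is bookkeeping in the large-$|p|$ region: you bound the positive-part correction sharply, $E(p)\leq (s^{-2}-|p|^2)_+ + O_\chi(s^{-1})+O_N(s^N)$ (using $(|p|-|q|)^2\geq |p|^2-2|p||q|$ on the bulk and Schwartz decay for $|q|\gtrsim s^{-1}$), which gives the margin $\tfrac{7}{36}s^{-2}$ at the threshold, whereas the paper uses a cruder $[a+b]_+\leq [a]_+ +[b]_+$ splitting; both versions close for $s\leq s^\ast$ small.
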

\begin{proof} 
By scaling we may assume $\ell=1$. We use (\ref{eq:abskinloc}) and (\ref{eq:absF}) with $\mathcal{K}(p)=(|p|^2-s^{-2})_+$. Since we have chosen $\chi$ to be a Schwartz function, we have $\int \abs{p}^2|\widehat\chi|^2$ being finite and that $\| \mathcal{K}\widehat{\chi}\|_2\leq C_Ns^N$.
	For the first term in (\ref{eq:absF}) we find
	\begin{align*}
	(2\pi)^{-3}\mathcal{K}*\widehat\chi^2(p)&=(2\pi)^{-3}\int (|p-q|^2-s^{-2})\widehat\chi^2(q)\,d q
	+ (2\pi)^{-3}\int [ s^{-2} - p^2 +2 p q - q^2 ]_{+}  \widehat\chi^2(q)\,d q
	\\
	&\leq p^2 - s^{-2} + [s^{-2} - \frac{6}{5} p^2]_{+} +\frac{5-6}{6}  (2\pi)^{-3} \int q^2 \widehat\chi^2(q)\,d q \\
	&\leq p^2 - s^{-2} + [s^{-2} - \frac{6}{5} p^2]_{+},
	\end{align*}
where we used that $t \mapsto [t]_{+}$ is increasing and $[a+b]_{+} \leq [a]_{+} + [b]_{+}$. If $|p|\geq\frac56s^{-1}$ we thus find
	\begin{eqnarray}
(2\pi)^{-3}\mathcal{K}*\widehat\chi^2(p)\leq
	p^2 - s^{-2} + \frac{1}{6}s^{-2}\leq ( p^2 - \frac{1}{2} s^{-2}) - \frac{1}{3} s^{-2}.
	\label{eq:Ffirst-2}
	\end{eqnarray}
	For the second term in (\ref{eq:absF}) we find since
	$\widehat\theta\leq 1$ that 
	\begin{eqnarray}
	|\widehat\theta(p)\widehat\chi*(\mathcal{K}\widehat\chi)(p)|\leq
	\|\widehat{\chi}\|_2\|\mathcal{K}\widehat{\chi}\|_2\leq
	C.\label{eq:F2nd-2}
	\end{eqnarray}
	For the third term in (\ref{eq:absF}) we have similarly
	\begin{eqnarray}
	|\widehat\theta(p)|^2\int \mathcal{K}|\widehat\chi|^2\leq \int|q|^2\widehat\chi(q)^2\,d q\leq C.
	\label{eq:F3rd-2}
	\end{eqnarray}
	For $|p|\geq \frac56 s^{-1}$ we therefore have that the
	function $F$ in (\ref{eq:absF}) satisfies
	$$
	F(p)\leq (|p|^2-\frac12s^{-2})-\frac1{3}s^{-2} +C.
	$$
	With $s^\ast$ sufficiently small we arrive at 
	the first line in (\ref{eq:Fs-2}).\\
	\indent We turn to the proof of the second line in (\ref{eq:Fs-2}). We know that $F(0)=\nabla F(0)=0$. The lemma follows from Taylor's formula if we can show that 
	for $|p|<\frac56s^{-1}$, we have
	\begin{equation}\label{eq:F2der-2}
	|\partial_i\partial_jF(p)|\leq Cs.
	\end{equation} 	
(Actually, the same proof gives $|\partial_i\partial_jF(p)|\leq C_N s^N$ for any power $N$, but we do not need this.)
For the first term in (\ref{eq:absF}) we therefore find for $|p|<\frac56s^{-1}$,
\begin{align*}
|\partial_i\partial_j (\mathcal{K}*\widehat\chi^2)(p)|
&=\bigg |\int (\abs{p-q}^2-s^{-2})_+\partial_i\partial_j\widehat\chi^2(q) \, dq\bigg |\nonumber\\
&\leq C\int_{\{\abs{q}\geq (6s)^{-1}\}}(s^{-2}+\abs{q}^2)\big |\partial_i\partial_j\widehat\chi^2(q)\big | \, dq\nonumber\\
&\leq Cs,
\end{align*}
where we used the fast decay of $\widehat{\chi}$ to conclude.\\

	For the second and third term in (\ref{eq:absF}) we use the fact that for all $i,j=1,2,3$ the numbers
	$$
	\|\widehat\theta\|_\infty,\
	\|\partial_i\widehat\theta\|_\infty,\ 
	\|\partial_i\partial_j\widehat\theta\|_\infty,\ 
	\int|\widehat\chi|^2,\
	\int|\partial_i\widehat\chi|^2,\
	\int|\partial_i\partial_j\widehat\chi|^2
	$$
	are bounded by a constant. The same estimates that led to (\ref{eq:F2nd-2}) and (\ref{eq:F3rd-2})
	then imply \eqref{eq:F2der-2}.
\end{proof}

\begin{lemma}\label{lem:LocKinEn-2}
	There exists a universal constant $b>0$ such that if $s$ is small enough, then for all $\varphi \in H^1_0(\Lambda)$ and all $\ell>0$
	$$
	\langle \varphi, - \Delta \phi\rangle \geq 
	\int_{\ell^{-1}(\Lambda+B(0,\ell/2))} \left\langle \phi, Q_u\left[
	\chi_u \mathcal{K}(-i \nabla) \chi_u
	+
	b \ell^{-2}   \right] Q_u \varphi \right\rangle\,  du,
	$$
	with $\mathcal{K}(p) = (|p|^2-\frac{1}{4} (s\ell)^{-2})_{+}$.
\end{lemma}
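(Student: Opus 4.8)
The plan is to combine the abstract localization identity of Lemma~\ref{lm:abskinloc}, the pointwise Fourier-side bound of Lemma~\ref{lm:kinloc-2}, and the standard Neumann gap in a box, leaving a little slack in the kinetic energy to absorb the constant-term $b\ell^{-2}$. By scaling we may set $\ell=1$ throughout, so that the relevant symbols are $\mathcal K(p)=(|p|^2-\tfrac14 s^{-2})_+$ and the goal is
$\langle\varphi,-\Delta\varphi\rangle\geq\int\langle\varphi,Q_u[\chi_u\mathcal K(-i\nabla)\chi_u+b]Q_u\varphi\rangle\,du$
for $\varphi\in H^1_0(\Lambda)$, where the $u$-integral runs over $\ell^{-1}(\Lambda+B(0,\ell/2))$.

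First I would split the left side as $\langle\varphi,-\Delta\varphi\rangle = \langle\varphi,(1-\delta)(-\Delta)\varphi\rangle + \delta\langle\varphi,-\Delta\varphi\rangle$ for a small constant $\delta\in(0,1)$ to be fixed. For the first piece, apply Lemma~\ref{lm:kinloc-2} (after rescaling $s\mapsto$ the parameter there; note the $\tfrac12(s\ell)^{-2}$ in that lemma versus the $\tfrac14(s\ell)^{-2}$ wanted here is exactly arranged by the choice of $s$-scale): since $F_s(|p|)\leq |p|^2$ pointwise for small $s$—the high-frequency branch gives $|p|^2-\tfrac12 s^{-2}\leq|p|^2$ and the low-frequency branch gives $Cs\,p^2\leq p^2$—we get, after absorbing the factor $(1-\delta)$,
$(1-\delta)\langle\varphi,F_s(|-i\nabla|)\varphi\rangle\geq (1-\delta)\int\langle\varphi,Q_u\chi_u(-\Delta-(s)^{-2})_+\chi_uQ_u\varphi\rangle\,du$,
and the operator inside dominates $\chi_u\mathcal K(-i\nabla)\chi_u$ with $\mathcal K(p)=(|p|^2-\tfrac14 s^{-2})_+$ provided the thresholds line up, which is again a matter of relabelling the small parameter. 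For the second piece I would use that on each box $B(u)$, by the Poincaré (Neumann) inequality for the sharp-cutoff projection $Q_u$, one has $-\Delta\geq c\,\ell^{-2}Q_u$ in a suitable averaged sense; more precisely, invoking Corollary~\ref{eq:Quav}, $\int Q_u\,du = 1-\widehat\theta(-i\ell\nabla)^2$, and one checks that $\delta(-\Delta)\geq b(1-\widehat\theta(-i\ell\nabla)^2)$ as Fourier multipliers for $b$ small enough, because $1-\widehat\theta(\ell p)^2 = O(\ell^2 p^2)$ near $p=0$ and is bounded by $1$ globally while $\delta\ell^2 p^2$ grows—so the multiplier inequality holds with $b$ a fixed fraction of $\delta$ times the universal constant controlling $\widehat\theta$.

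The main obstacle, and the place I would be most careful, is the bookkeeping of the two small parameters and the domain of the $u$-integral: Lemma~\ref{lm:kinloc-2} is stated with the $u$-integral over all of $\R^3$, whereas here it must be restricted to $\ell^{-1}(\Lambda+B(0,\ell/2))$; this restriction is harmless precisely because $\varphi\in H^1_0(\Lambda)$ is supported in $\Lambda$, so $\chi_u\varphi\equiv 0$ whenever $B(u)\cap\Lambda=\emptyset$, i.e. whenever $u\notin\ell^{-1}(\Lambda+B(0,\ell/2))$, and hence the omitted part of the integral vanishes—this is the same mechanism as in the proof of Lemma~\ref{lem:LocPotential}. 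One must also verify that the threshold in $\mathcal K$ can indeed be taken as $\tfrac14(s\ell)^{-2}$ rather than $\tfrac12(s\ell)^{-2}$: since $(|p|^2-\tfrac14 s^{-2})_+\geq(|p|^2-\tfrac12 s^{-2})_+$ pointwise, passing from the larger gap in Lemma~\ref{lm:kinloc-2} to the smaller gap claimed here only \emph{weakens} the right-hand side, so it is a free move. Finally, $s$ is then chosen smaller than $s^\ast$ from Lemma~\ref{lm:kinloc-2} and small enough that $(1-\delta)Cs\leq 1$ in the low-frequency branch, and $\delta$ (hence $b$) is fixed as a universal constant; this yields the claimed universal $b>0$.
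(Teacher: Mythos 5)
There is a genuine gap: the decomposition $-\Delta = (1-\delta)(-\Delta) + \delta(-\Delta)$ with a fixed $\delta\in(0,1)$ cannot work, because the first piece does not dominate the required localization term. After the $s\to 2s$ relabelling, Lemma~\ref{lm:kinloc-2} reduces the claim to showing (at $\ell=1$) that $F_{2s}(p)$ plus a bound for $b\int Q_u\,du$ is $\leq p^2$. In the high-frequency branch $|p|\geq\tfrac{5}{12}s^{-1}$ one has $F_{2s}(p) = p^2 - \tfrac18 s^{-2}$, i.e.\ only an $O(s^{-2})$ \emph{additive} slack below $p^2$. Your split demands instead a \emph{multiplicative} slack, $(1-\delta)p^2 \geq F_{2s}(p)$, which amounts to $\delta p^2 \leq \tfrac18 s^{-2}$ and fails for all $|p|$ large. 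So the first half of your split cannot absorb its part, regardless of how $\delta$ and $s$ are chosen. (Even $\delta = \delta(s)\to 0$ does not fix this, since $\delta p^2$ is unbounded in $p$.)

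What the paper does instead is precisely to avoid taking a constant fraction of $-\Delta$. It bounds $\int Q_u\,du$ by the \emph{bounded} Fourier multiplier $\beta^{-1}\tfrac{-\Delta}{-\Delta+\beta}$ (this is the Taylor-expansion step around Corollary~\ref{eq:Quav} that you cite but don't exploit quantitatively), and then proves $F_{2s}(p)+b\beta^{-1}\tfrac{p^2}{p^2+\beta}\leq p^2$ by two complementary regimes: at low $|p|$ both terms are $O(p^2)$ with small coefficients ($Cs$ resp.\ $b\beta^{-2}$), while at high $|p|$ the gap term is merely $\leq b\beta^{-1}$ and is absorbed into the $O(s^{-2})$ slack of $F_{2s}$. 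It is this frequency-dependent matching — quadratic smallness near $0$, bounded constant at $\infty$ — that makes the Neumann gap compatible with the $(\cdot)_+$ cutoff; a uniform multiplicative split washes it out.

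A secondary issue is the remark that passing from the gap $(s\ell)^{-2}$ in Lemma~\ref{lm:kinloc-2} to $\tfrac14(s\ell)^{-2}$ "only weakens the right-hand side, so it is a free move." It is the opposite: a smaller threshold makes $(|p|^2-\text{threshold})_+$ \emph{larger}, so the target right-hand side is \emph{stronger}. The move is not free; it is exactly the $s\mapsto 2s$ substitution, which the paper performs and which you also gesture at with "relabelling the small parameter," but the reasoning you give for it is inverted. Your observation about the $u$-integration domain (that restriction to $\ell^{-1}(\Lambda+B(0,\ell/2))$ is harmless because $Q_u\varphi = 0$ outside it for $\varphi\in H^1_0(\Lambda)$) is correct and worth recording; the paper elides this point.
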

\begin{proof} We again consider $\ell=1$.
	By Corollary~\ref{eq:Quav} and a Taylor expansion at $p=0$, we have 
	\begin{equation}\label{eq:QuAv-2}
	\int Q_u \,d u\leq \beta^{-1}\frac{-\Delta}{-\Delta+\beta}
	\end{equation}
	for a universal constant $0<\beta<1$. We use Lemma~\ref{lm:kinloc-2} with $s$ replaced by $2s$. We then find
	\begin{align*}
	\int_{\R^3}Q_u\chi_u(-\Delta-\frac{1}{4} s^{-2})_+\chi_uQ_u \,d u
	+b\int_{\R^3} Q_u\,d u\leq F_{2s}(|-i\nabla|)+b\beta^{-1}\frac{-\Delta}{-\Delta+\beta}.
	\end{align*}
	For $|p|< (5/12)s^{-1}$ and $s, b$ sufficiently small we get
	$$
	F_{2s}(p)+b\beta^{-1}\frac{p^2}{p^2+\beta}
	\leq C sp^2+b\beta^{-1}\frac{p^2}{p^2+\beta}\leq 
	(Cs+b\beta^{-2})p^2\leq p^2.
	$$
	For $|p|\geq(5/12)s^{-1}$ and $s, b$ sufficiently small we get
	\begin{align*}
F_{2s}(p)+b\beta^{-1}\frac{p^2}{p^2+\beta}=
	(p^2-\frac{1}{8}s^{-2})+b\beta^{-1}\frac{p^2}{p^2+\beta}
	\leq p^2-\frac{1}{8}s^{-2}+b\beta^{-1}\leq p^2.
	\end{align*}
\end{proof}

\subsection{The localized Hamiltonian}

Let $\mathcal{T}_u$ be the localized kinetic energy operator, as defined in \eqref{eq:DefTu}, $\rmu$ such that \eqref{eq:R/ell} is satisfied and define for $(x_1,\ldots, x_N) \in {\R^{3N}}$,
\begin{align}
{\mathcal W}_u(x_1,\ldots,x_N) :=
\sum_{i=1}^N - \rmu \int w_{1,u}(x_i,y)\,dy + \sum_{i<j} w_u(x_i,x_j).
\end{align}
We also abbreviate
\begin{align}
\mathcal{T}:= \mathcal{T}_{u=0},\qquad {\mathcal W}(x_1,\ldots,x_N) := {\mathcal W}_{u=0}(x_1,\ldots,x_N).
\end{align}
Define the operator ${\mathcal H}_{B,u}(\rmu)$ on the symmetric Fock space over $L^2({\R^3})\supset L^2({\Lambda})$, to preserve particle number and satisfy that
\begin{align}
({\mathcal H}_{B,u}(\rmu))_{N} = \sum_{i=1}^N \mathcal{T}_{u,i} + {\mathcal W}_u(x_1,\ldots,x_N).
\end{align}
As above we abbreviate
$$
{\mathcal H}_{B}(\rmu):= {\mathcal H}_{B,u=0}(\rmu). 
$$
We will also write
$$
\chi_B := \chi_{u=0} = \chi(\,\cdot\,/\ell).
$$
Define the \emph{box energy} and \emph{box energy density}, by
\begin{align}
E_B(\rmu) &:= \inf \Spec {\mathcal H}_{B}(\rmu), \\
e_B(\rmu) &:= \ell^{-3} \inf \Spec {\mathcal H}_{B}(\rmu) = \ell^{-3} E_B(\rmu).
\end{align}
Notice that $E_B(\rmu), e_B(\rmu)$ depend on the localization function $\chi$---since $\mathcal{T}$ and ${\mathcal W}$ do---but we choose not to let the notation reflect this dependence.
With these conventions, we find

\begin{theorem}\label{thm:CompareBoxEnergy}
If $\rho_\mu$ is sufficiently small so that  \eqref{eq:R/ell} is satisfied, then we have
\begin{align}
e_0(\rmu) \geq e_B(\rmu).
\end{align}
\end{theorem}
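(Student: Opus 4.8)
\textbf{Proof plan for Theorem~\ref{thm:CompareBoxEnergy}.}

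The plan is to take an arbitrary normalized state $\Psi$ in the Fock space $\F_{\rm s}(L^2(\Lambda))$ and bound $\langle\Psi,\H_\rmu\Psi\rangle$ from below by integrating a family of localized box Hamiltonians over translations. Since $\H_\rmu$ commutes with particle number, it suffices to work on each $N$-particle sector separately, so I may take $\Psi\in\otimes_s^N L^2(\Lambda)$. The key input is that the two pieces of $\H_{\rmu,N}$---the potential energy (the $g$-mean-field term plus the pair interaction $v$) and the kinetic energy $\sum_i -\Delta_i$---have already each been written in Lemma~\ref{lem:LocPotential} and Lemma~\ref{lem:LocKinEn-2} (via Lemma~\ref{lem:LocKinEn-2} applied in each variable $x_i$) as integrals over $u\in\ell^{-1}(\Lambda+B(0,\ell/2))$ of the corresponding localized operators, up to an inequality for the kinetic part. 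Combining these two identities/inequalities gives, for all $(x_1,\ldots,x_N)\in\Lambda^N$,
\begin{align}
\sum_{i=1}^N -\Delta_i + \sum_{i<j} v(x_i-x_j) - 8\pi a\rmu N
\geq \int_{\ell^{-1}(\Lambda+B(0,\ell/2))} \Big(\sum_{i=1}^N \mathcal{T}_{u,i} + {\mathcal W}_u(x_1,\ldots,x_N)\Big)\,du
\end{align}
as an operator inequality on $\otimes_s^N L^2(\Lambda)$ (here $\mathcal T_u$ includes the gap term $b\ell^{-2}Q_u$, which is produced by Lemma~\ref{lem:LocKinEn-2} from the genuine Laplacian, and $\mathcal K(-i\nabla)=(-\Delta-\tfrac14(s\ell)^{-2})_+$ matches the $(-\Delta-\constkinetickernel\ell^{-2})_+$ appearing in $\mathcal T_u$ with $\constkinetickernel=\tfrac14 s^{-2}$). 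One has to check that the $-8\pi a\rmu N$ on the left is exactly accounted for: this follows because $\rmu\int g(x_i-y)\,dy=8\pi a\rmu$ pointwise and the $g$-mean-field term integrates over $u$ to give the $\sum_i \rmu\int w_{1,u}(x_i,y)\,dy$ term, by the second identity proven in Lemma~\ref{lem:LocPotential}.

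Next I take the expectation in $\Psi$ and interchange the $u$-integral with the inner product (Fubini, justified since the integrand is positive after adding a constant, or since everything is continuous and compactly supported in $u$). This yields
\begin{align}
\langle\Psi,\H_{\rmu,N}\Psi\rangle \geq \int_{\ell^{-1}(\Lambda+B(0,\ell/2))} \langle\Psi,\big(\H_{B,u}(\rmu)\big)_N\Psi\rangle\,du.
\end{align}
For each fixed $u$, by translation invariance \eqref{eq:transInv} of $w_u$ (and the analogous translation covariance of $\mathcal{T}_u$, which follows from \eqref{eq:averagedneumangap}-type reasoning since $\mathcal T_u$ is built from $\chi_u$ and $Q_u$ which are just translates of $\chi_B$ and $Q_0$), the operator $\big(\H_{B,u}(\rmu)\big)_N$ is unitarily equivalent---via the translation $x\mapsto x-\ell u$---to $\big(\H_{B}(\rmu)\big)_N$ restricted to functions supported in $\Lambda-\ell u$. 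In particular $\langle\Psi,\big(\H_{B,u}(\rmu)\big)_N\Psi\rangle \geq E_B(\rmu)\,\|\theta_u\Psi\|^2$ where I use that $\mathcal T_u$ and $w_u$ both vanish when all arguments leave $B(u)$, so the energy only sees the part of $\Psi$ living in $B(u)$; more precisely $\langle\Psi,\big(\H_{B,u}(\rmu)\big)_N\Psi\rangle\geq E_B(\rmu)\langle\Psi,\sum_i\theta_u(x_i)\cdots\rangle$. The cleanest route is: since $E_B(\rmu)\leq 0$ (it is a box energy density times $\ell^3$ and we are in the dilute regime; alternatively one uses that $\mathcal H_B(\rmu)\geq E_B(\rmu)$ as operators regardless of sign), and $\int du\,\mathbf 1_{B(u)}(x)=\ell^{-3}\cdot\ell^3=1$-type normalization gives $\int\langle\Psi,(\H_{B,u})_N\Psi\rangle\,du\geq E_B(\rmu)\cdot(\text{number of boxes})\cdot\|\Psi\|^2$ only if the bound has the right sign; I must be careful here.

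\textbf{Main obstacle.} The delicate point---and the one I would spend the most care on---is the passage from the $u$-integrated bound to $e_0(\rmu)\geq e_B(\rmu)$ with the correct constant and no boundary losses. The integral $\int_{\ell^{-1}(\Lambda+B(0,\ell/2))} du$ has measure $\ell^{-3}(|\Lambda|+O(|\partial\Lambda|\ell))=|\Lambda|\ell^{-3}(1+o(1))$ as $|\Lambda|\to\infty$, and for each $u$ one gets $\langle\Psi,(\H_{B,u}(\rmu))_N\Psi\rangle\geq E_B(\rmu)\langle\Psi,\mathcal N_{B(u)}\text{-weighted projection}\Psi\rangle$ where the weight is not simply $\|\Psi\|^2$ but something like $\langle\Psi, \prod\text{-indicators}\rangle$; however, because $\mathcal T_u\geq 0$ and $\mathcal W_u$ only couples particles inside $B(u)$, and because $E_B(\rmu)=\inf\Spec\mathcal H_B(\rmu)$ bounds the whole operator from below, one in fact has the clean operator inequality $(\H_{B,u}(\rmu))_N\geq E_B(\rmu)\cdot(\text{operator that is }\le 1)$... and since $E_B(\rmu)\le 0$, $(\H_{B,u}(\rmu))_N\geq E_B(\rmu)\cdot\mathbf 1$. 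Integrating over $u$ and dividing by $|\Lambda|$ then gives $|\Lambda|^{-1}\langle\Psi,\H_{\rmu,N}\Psi\rangle\geq \ell^{-3}E_B(\rmu)(1+o(1))=e_B(\rmu)(1+o(1))$, and taking the infimum over $\Psi$ and the thermodynamic limit $|\Lambda|\to\infty$ yields $e_0(\rmu)\geq e_B(\rmu)$. The sign of $E_B(\rmu)$ is genuinely used (to discard the $\le 1$ operator), and if it were not available one would instead need to track that $\int du\,(\text{local number-type operators})$ reconstructs $N\cdot\mathbf 1$ exactly up to boundary terms---so verifying $E_B(\rmu)\leq 0$, or else doing that bookkeeping, is the crux.
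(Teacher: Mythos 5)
Your plan is essentially the paper's own proof: combine the potential and kinetic localization lemmas to obtain the operator inequality
\begin{align*}
{\mathcal H}_{\rmu,N} \geq \int_{\ell^{-1}(\Lambda + B(0,\ell/2))} ({\mathcal H}_{B,u}(\rmu))_{N}\,du,
\end{align*}
use translation covariance and the spectral bound to replace each integrand by $E_B(\rmu)$, and then take the thermodynamic limit. That is exactly what the paper does.

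The ``main obstacle'' you flag is, however, a non-issue and you make it more delicate than it is. First, $E_B(\rmu)=\inf\Spec{\mathcal H}_B(\rmu)$ is taken over the \emph{full} Fock space, which contains the vacuum, and the vacuum has zero energy; hence $E_B(\rmu)\leq 0$ is automatic, no diluteness argument is needed. Second---and this is what the paper actually uses---the operator inequality $({\mathcal H}_{B,u}(\rmu))_{N}\geq E_B(\rmu)\cdot\mathbf 1$ holds with no sign considerations whatsoever: it is the trivial lower bound of a self-adjoint operator by the infimum of its spectrum, together with the unitary equivalence from \eqref{eq:transInv}. There is no need to reason about ``how much of $\Psi$ lives in $B(u)$'' or to introduce an intermediate operator $\leq\mathbf 1$; those considerations, which you correctly suspect require $E_B\leq 0$ to close, are simply bypassed. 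Once you have the integrated bound ${\mathcal H}_{\rmu,N}\geq \ell^{-3}|\Lambda+B(0,\ell/2)|\,E_B(\rmu)$, dividing by $|\Lambda|$ and sending $|\Lambda|\to\infty$ gives $e_0(\rmu)\geq e_B(\rmu)$ regardless of sign, since the volume ratio tends to $1$. So your argument is correct, but the detour through $\|\theta_u\Psi\|^2$ and the sign discussion is unnecessary.
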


\begin{proof}
Note that $({\mathcal H}_{B,u}(\rmu))_{N}$ and $({\mathcal H}_{B,u'}(\rmu))_{N}$ are unitarily equivalent by \eqref{eq:transInv}.

From Lemma~\ref{lem:LocPotential} and Lemma~\ref{lem:LocKinEn-2} we find that
\begin{align}
{\mathcal H}_{\rmu,N}(\rmu) \geq
\int_{\ell^{-1}(\Lambda + B(0,\ell/2))} ({\mathcal H}_{B,u}(\rmu))_{N} \,du  \geq \ell^{-3} | \Lambda + B(0,\ell/2)| E_{B}(\rmu).
\end{align}
Now the desired result follows upon using that $|\Lambda+B(0,\ell/2)|/|\Lambda| \rightarrow 1$ in the thermodynamic limit.
\end{proof}
\section{Energy in the box}\label{sec:BoxEnergy}
It is clear, using Theorem~\ref{thm:CompareBoxEnergy}, that Theorem~\ref{thm:LHY-Background} is a consequence of the following theorem on the box Hamiltonian.
\begin{theorem}\label{thm:LHY-Box}
For a given localization function $\chi$, there exist universal constants $K_0, C'>0$ so that the following is true. Suppose $v$ satisfies Assumption~\ref{assump:v} and choose $K = K_0$ for the parameter appearing in the definition of $\ell$ in \eqref{eq:def_ell}.
If $R/\ell\leq C'$ and \eqref{eq:R/ell} is satisfied, then the box ground state energy density, $e_B(\rmu)$, satisfies the bound
	\begin{align}\label{eq:EnergyBoxRes1}
	e_B(\rmu) \geq -4\pi \rmu^2 a- C\rho_\mu^2a (\rmu a^3)^{1/2} - C \rho_\mu^3 a^2 R^2 .
	\end{align}
\end{theorem}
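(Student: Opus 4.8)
The plan is to analyze the box Hamiltonian $\mathcal{H}_B(\rmu)$ by separating the degree of freedom along the constant function $\theta_B := \ell^{-3/2}\one_{B(0)}$ (the condensate in the box) from the orthogonal excitations, and then to second-quantize the part of the energy that is quadratic in the excitations and apply Bogoliubov diagonalization. Writing the $n$-particle creation/annihilation operators in the decomposition $L^2(B(0)) = \mathbb{C}\theta_B \oplus Q_0 L^2(B(0))$, I would let $a_0^*, a_0$ be the condensate mode operators and $b^*(f), b(f)$ the excitation operators for $f\in Q_0 L^2$. The strategy, following the accompanying localization, is exactly the one indicated in the introduction: the cubic and quartic terms in the excitations are not dropped but are controlled by \textit{completing the square} relative to the scattering solution $\omega$. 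Concretely, the interaction $\tfrac12\sum_{i<j} w(x_i,x_j)$ together with the mean-field term $-\rmu\sum_i \int w_{1}(x_i,y)\,dy$ should be rewritten, using \eqref{eq:Defg}, \eqref{eq:Scattering3} and the identities \eqref{eq:DefU}--\eqref{eq:w2int}, so that the `dangerous' off-diagonal terms combine into a manifestly nonnegative expression of the form (schematically) $\tfrac12\iint w(x,y)\,|$(a $b^*b^*$–type operator dressed by $\omega$)$|^2$, plus a quadratic remainder and a c-number. This is precisely the content one expects from Lemma~\ref{lem:SplittingInQs} as referenced in the organization section.

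The key steps, in order, are: (i) fix $K=K_0$ and reduce to working with $\mathcal{H}_B(\rmu)$, noting $\cT_u\geq 0$ and $\cT_u \geq b\ell^{-2}Q_u$ on the excitation subspace (the retained Neumann gap); (ii) perform the c-number substitution / condensate expansion, replacing $a_0^*a_0$ by a parameter $n_0\leq N$ and bounding $N$ itself a priori — here one uses positivity of $v$ and the chemical-potential term $-8\pi a\rmu N$ so that large $N$ is energetically penalized, giving an effective bound $N \lesssim \rmu\ell^3$ on low-energy states; (iii) apply the square-completion of Lemma~\ref{lem:SplittingInQs} to throw away the nonnegative quartic-in-excitation remainder and isolate a quadratic Hamiltonian $\cH^{\mathrm{quad}}$ in the $b$'s with c-number $-4\pi\rmu^2 a\ell^3 + (\text{lower order})$; (iv) second-quantize and diagonalize $\cH^{\mathrm{quad}}$ by a Bogoliubov transformation in Fourier modes $k\in (2\pi/\ell)\mathbb{Z}^3\setminus\{0\}$, whose symbol is roughly $\sqrt{\mathcal{K}(k) (\mathcal{K}(k) + 2\cdot 4\pi\rmu a)} - \mathcal{K}(k) - 4\pi\rmu a$ with $\mathcal{K}$ the localized kinetic symbol from \eqref{eq:DefTu}; (v) bound the Bogoliubov ground-state energy $-\tfrac12\sum_k (\mathcal{K}(k)+4\pi\rmu a - \sqrt{\cdots})$ by comparison with the continuum integral $\ell^3\int \frac{dk}{(2\pi)^3}(\ldots)$, which yields the Lee–Huang–Yang order $-C\rmu^2 a\ell^3 (\rmu a^3)^{1/2}$, controlling the sum–integral (Riemann-sum) error by the $\ell^{-2}$ gap and $\ell = K_0(\rmu a)^{-1/2}$; (vi) collect the error terms: those of size $R/\ell$ from Lemma~\ref{lem:IntEst} (the $W$ versus $g$ discrepancy \eqref{eq:W1-g}) contribute the $\rmu^3 a^2 R^2$ term, and the quadratic-remainder and a-priori-$N$ errors contribute at order $(\rmu a^3)^{1/2}$ relative to $\rmu^2 a$; finally divide by $\ell^3$ to pass from $E_B$ to $e_B$.

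The main obstacle I expect is step (iv)–(v): making the Bogoliubov diagonalization rigorous on the \textit{finite box without translation invariance of the full problem and with a genuinely operator-valued} quadratic form, and in particular controlling the replacement of the (infrared-sensitive) mode sum by its integral while keeping every error strictly smaller than $\rmu^2 a\ell^3 (\rmu a^3)^{1/2}$. The delicate points are that the condensate mode $a_0$ is not simply a c-number and the cross terms between $a_0$ and the $b$'s must be handled — this is where the a-priori bound $N\lesssim \rmu\ell^3$ from positivity of $v$ and the retained kinetic gap $b\ell^{-2}$ both enter crucially — and that the modified kinetic symbol $\mathcal{K}(p)=(|p|^2 - \tfrac14(s\ell)^{-2})_+$ (rather than $|p|^2$) must be shown not to spoil the LHY integral, which works precisely because $(s\ell)^{-2}\sim \rmu a \ll $ the relevant momentum scale $\sqrt{\rmu a}$... wait, these are comparable, so one must check the constant $s$ can be taken small enough that the cutoff sits below the Bogoliubov momentum scale; this is why $K_0$ and $s$ are chosen small in a coordinated way. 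A secondary technical nuisance is justifying all formal manipulations (c-number substitution, square completion) as genuine operator inequalities on the Fock space, for which one can invoke standard coherent-state / Lieb–Solovej–Yngvason-type arguments as in \cite{LS,LS2}.
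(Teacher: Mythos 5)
Your plan captures the overall architecture (split off the condensate mode, renormalize the quartic term via $\omega$ by completing a square, retain a quadratic Hamiltonian, Bogoliubov-diagonalize, combine with the Neumann gap), but in two places your proposed mechanism is genuinely different from --- and weaker than --- what the paper actually does, and both places are where the real work lives.

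First, the Bogoliubov step. You propose to diagonalize in discrete modes $k\in(2\pi/\ell)\Z^3\setminus\{0\}$ and then control the sum-to-integral (Riemann-sum) error. The paper instead defines $b_k := a_0^* a\bigl(Q(e^{ikx}\chi_B)\bigr)$ for \emph{all} $k\in\R^3$ (see \eqref{def: b_k and b^*_k}), uses the continuous resolution of the identity $(2\pi)^{-3}\int a(Q(e^{ik\cdot}\chi_B))^*a(Q(e^{ik\cdot}\chi_B))\,dk = \sum_i Q_i\chi_B(x_i)^2 Q_i$ (cf.\ \eqref{eq:nPlus2nd}), and applies Theorem~\ref{thm:bogolubov} pointwise in $k$ to get directly a lower bound by the integral $\frac{\ell^3 n_0}{2}(2\pi)^{-3}\int\bigl(\sqrt{\mathcal A^2-\mathcal B^2}-\mathcal A\bigr)dk$, after inserting the commutator bound \eqref{eq:Commutator}. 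There is no discretization and hence no Riemann-sum error to control. Your route would have to confront the infrared sensitivity of the LHY integrand near the Bogoliubov scale $\sqrt{\rho_\mu a}\sim\ell^{-1}$, which you rightly flag as the main obstacle; the paper's continuous-$k$ construction simply sidesteps it. This is not a cosmetic difference: making your version rigorous to within $O(\rho_\mu^2 a(\rho_\mu a^3)^{1/2})$ would require a careful additional estimate that the paper does not need.

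Second, the a priori bound $n\lesssim\rho_\mu\ell^3$. You attribute it to the positivity of $v$ and the chemical-potential term penalizing large $N$ ``a priori.'' In the paper this bound is \emph{not} a priori: Lemma~\ref{lem:EnergyFirst} is proved only under the hypothesis $\rho\leq 20\rho_\mu$ (so that Lemma~\ref{prop:BogIntegral-2-New} applies with $M_0=20$), and Lemma~\ref{lem:UpperBoundOnn} then closes the bootstrap by a separate argument: if $n>20\rho_\mu\ell^3$, split the particles into groups of size in $[5\rho_\mu\ell^3,20\rho_\mu\ell^3]$, drop the (nonnegative) inter-group interaction, and use \eqref{eq:NearConclusion} on each group --- the $\tfrac14(\rho'-\rho_\mu)^2\widehat g(0)$ term with $\rho'\geq 5\rho_\mu$ makes each group's energy nonnegative, contradicting negativity of the total. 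The Bogoliubov output thus enters the particle-number bound, not the other way around. Your sketch omits this group-splitting bootstrap, and without it there is no route to the $\rho\leq 20\rho_\mu$ input needed for Lemma~\ref{prop:BogIntegral-2-New}.

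Apart from these two points, the remaining steps you list (the $\mathcal Q_j^{\rm ren}$ decomposition of Lemma~\ref{lem:SplittingInQs}, absorbing $\mathcal Q_3^{\rm ren}$ into $\mathcal Q_4^{\rm ren}$ by Cauchy--Schwarz, the role of the retained gap $b\ell^{-2}Q$, the $R/\ell$ error from $W_1$ vs.\ $g$ via \eqref{eq:W1-g} and Lemma~\ref{lem:I2-integral}, and the final division by $\ell^3$) match the paper.
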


\begin{proof}[Proof of Theorem~\ref{thm:LHY-Background} ]
We will show that Theorem~\ref{thm:LHY-Background} follows from Theorem~\ref{thm:CompareBoxEnergy} and Theorem~\ref{thm:LHY-Box}.
Choose and fix a localization function $\chi \in C^{\infty}_c({\mathbb R}^3)$ as in \eqref{def: chi}.
We take $K=K_0$ in the definition of $\ell$ and use Theorem~\ref{thm:LHY-Box} for this choice. Clearly \eqref{eq:R/ell} is automatically satisfied if $C'\leq D$.
Now Theorem~\ref{thm:LHY-Background} follows using Theorem~\ref{thm:CompareBoxEnergy}.
\end{proof}

The remainder of this paper will be dedicated to collecting the ingredients that we need for the proof of Theorem~\ref{thm:LHY-Box}, which will be given 
in Section~\ref{sec:TotalEnergy}.
\subsection{Particle numbers and densities}
Recall the projections $P_u, Q_u$ defined in \eqref{def: projections}. Since now we are working on a fixed box $B=[-\frac{\ell}{2},\frac{\ell}{2}]^3$, we will just denote them by $P$ and $Q$.
Notice that $P+Q=:I_B$ is the orthogonal projection, in $L^2({\mathbb R}^3)$ onto the subspace of functions supported in $B$.

Define the operators
$$
n := \sum_{i=1}^N I_{B,i},\qquad n_0 := \sum_{i=1}^N P_i,\qquad n_{+} := \sum_{i=1}^N Q_i = n- n_0.
$$

Because the operator $n$ commutes with ${\mathcal H}_{B}(\rmu)$, we can restrict to eigenspaces of $n$ and therefore simultaneously treat $n$ as an operator and a parameter.

Recall, that $\rmu$ is the parameter introduced in \eqref{eq:BackgroundH}. We define
\begin{align}\label{eq:Densities}
\rho:= n \ell^{-3},\qquad \rho_{+} := n_{+} \ell^{-3}, \qquad \rho_0:= n_{0} \ell^{-3}.
\end{align}
\subsection{Estimates on non-quadratic terms}
We can, for each $j$, write $I_{j}  = P_j + Q_j$. Inserting this on both sides of our operator and expanding we will get a number of terms. These we will organize depending on the number of $Q$'s involved. For an even finer decomposition of some of the terms we invoke the scattering solution, $\omega$. The leading order term in \eqref{eq:EnergyBoxRes1} will be obtained using the Bogoliubov diagonalization carried out in Section~\ref{sec:Bogoliubov}. This diagonalization involves the terms quadratic in $Q$ from the localized kinetic energy and most of the terms quadratic in $Q$ from the localized potential energy.
The aim here is to estimate the non-quadratic terms.

\begin{lemma}[Potential energy decomposition]\label{lem:SplittingInQs}
	We have for $\rmu$ such that \eqref{eq:R/ell} is satisfied
	\begin{equation} \label{eq:potsplit}
		-\rmu \sum_{i=1}^N \int w_1(x_i,y)\,dy+
		\frac{1}{2} \sum_{i\neq j}  w(x_i, x_j)
		= {\mathcal Q}_0^{\rm ren}+{\mathcal Q}_1^{\rm ren}
		+{\mathcal Q}_2^{\rm ren}+{\mathcal Q}_3^{\rm ren} + {\mathcal Q}_4^{\rm ren},
	\end{equation}
	where
	\begin{align}
	{\mathcal Q}_4^{\rm ren}:=&\,
	\frac{1}{2} \sum_{i\neq j} \Big[ Q_i Q_j + (P_i P_j + P_i Q_j + Q_i P_j)\omega(x_i-x_j) \Big] w(x_i,x_j) \nonumber \\
	&\,\qquad \qquad \times
	\Big[ Q_j Q_i + \omega(x_i-x_j) (P_j P_i + P_j Q_i + Q_j P_i)\Big],\label{eq:DefQ4}\\
	{\mathcal Q}_3^{\rm ren}:=&\,
	\sum_{i\neq j} P_i Q_j w_1(x_i,x_j) Q_j Q_i + h.c., \label{eq:DefQ3} \\
	{\mathcal Q}_2^{\rm ren}:=&\, \sum_{i\neq j} P_i Q_j w_2(x_i,x_j) P_j Q_i
	+ \sum_{i\neq j} P_i Q_j w_2(x_i,x_j) Q_j P_i \nonumber \\
	&- \rmu \sum_{i=1}^N Q_i \int w_1(x_i,y)\,dy Q_i+\frac{1}{2}\sum_{i\neq j} (P_i P_j w_1(x_i,x_j) Q_j Q_i + h.c.) 	,\label{eq:DefQ2}\\
	{\mathcal Q}_1^{\rm ren}:=&\, \sum_{i\neq j}P_jQ_iw_2(x_i,x_j)P_iP_j-\rmu
	\sum_{i} Q_i \int w_1(x_i,y)\,dy P_i +h.c.,
	\label{eq:DefQ1} \\
	{\mathcal Q}_0^{\rm ren}:=&\,
	\frac{1}{2} \sum_{i\neq j} P_i P_j w_2(x_i,x_j) P_j P_i - \rmu \sum_i P_i \int w_1(x_i,y)\,dy P_i.\label{eq:DefQ0}
	\end{align}
\end{lemma}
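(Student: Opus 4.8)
The plan is to prove the identity in Lemma~\ref{lem:SplittingInQs} by a direct, purely algebraic expansion. The left-hand side is a sum of a one-body term $-\rmu\sum_i \int w_1(x_i,y)\,dy$ and a two-body term $\tfrac12\sum_{i\neq j} w(x_i,x_j)$, where the latter acts by multiplication; since $I=P+Q$ on each factor $L^2(B)$ (after noting that $w_u$ is supported in $B(u)^2$, so only the $I_B$-part contributes), the plan is to insert $I_i I_j$ on the left and $I_j I_i$ on the right of each $w(x_i,x_j)$, and similarly $I_i$ on both sides of $\int w_1(x_i,y)\,dy$. This produces $16$ terms for the two-body part (all products $\{P_i,Q_i\}\otimes\{P_j,Q_j\}\cdot w\cdot\{P_j,Q_j\}\otimes\{P_i,Q_i\}$) plus $4$ terms for the one-body part. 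I would then reorganize these by the total number of $Q$'s and, crucially, use the scattering solution $\omega$ to rewrite $w = w_1 + w\omega$ and $w = w_2 + w\omega^2$ (from \eqref{def: W_2}) in the various pieces so as to effect the ``completing the square'' mentioned in the introduction.

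The key bookkeeping step is to recognize that ${\mathcal Q}_4^{\rm ren}$, as written in \eqref{eq:DefQ4}, when its two bracketed factors are multiplied out against $w(x_i,x_j)$, produces exactly: (i) the genuinely quartic piece $Q_iQ_j\,w\,Q_jQ_i$; (ii) cross terms of the form $Q_iQ_j\,w\,\omega\,(P_jP_i+P_jQ_i+Q_jP_i)$ and its hermitian conjugate — but these are $w\omega$ times lower-$Q$ operators, and since $w\omega = w - w_1$ these reassemble, together with the raw terms from the naive expansion, into the $w_1$- and $w_2$-terms appearing in ${\mathcal Q}_3^{\rm ren}$, ${\mathcal Q}_2^{\rm ren}$, ${\mathcal Q}_1^{\rm ren}$, ${\mathcal Q}_0^{\rm ren}$; and (iii) a piece $\omega\,(\cdots)\,w\,\omega\,(\cdots)$ with $w\omega^2 = w - w_2$, which supplies the shift from $w$ to $w_2$ in the $PPPP$, $PQPQ$, $PQQP$ and $PQPP$ terms. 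Concretely, the plan is to verify term by term that the coefficient of each of the $16$ spin-type monomials $A_i A_j\, (\text{kernel})\, B_j B_i$ with $A,B\in\{P,Q\}$ agrees on both sides, tracking which of $w$, $w_1 = w(1-\omega)$, $w_2 = w(1-\omega^2)$ appears; the one-body $\rmu$-terms split cleanly as $-\rmu\int w_1 = -\rmu P_i\int w_1 P_i - \rmu(Q_i\int w_1 P_i + h.c.) - \rmu Q_i\int w_1 Q_i$, feeding ${\mathcal Q}_0^{\rm ren}$, ${\mathcal Q}_1^{\rm ren}$, ${\mathcal Q}_2^{\rm ren}$ respectively.

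The main obstacle is not conceptual but combinatorial: keeping the $\omega$-substitutions consistent so that every $w$ appearing on the left is accounted for with the correct power of $(1-\omega)$ on the right. The natural way to control this is to first expand ${\mathcal Q}_4^{\rm ren}$ fully, collecting it as $Q_iQ_j w Q_jQ_i$ plus terms with exactly one $\omega$ (multiplying $w$ by $\omega$, i.e. contributing $w - w_1$) plus terms with two $\omega$'s (multiplying $w$ by $\omega^2$, i.e. contributing $w - w_2$); then observe that adding ${\mathcal Q}_0^{\rm ren}+\dots+{\mathcal Q}_3^{\rm ren}$ exactly cancels the $-w_1$ and $-w_2$ corrections and restores the plain $w$ in every monomial, reproducing the naive $16$-term expansion of $\tfrac12\sum_{i\neq j} w(x_i,x_j)$. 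Symmetry considerations ($w(x_i,x_j)=w(x_j,x_i)$, $\omega$ even) let one pair a monomial with its $i\leftrightarrow j$ image and thereby halve the work; the ``$+h.c.$'' notation in ${\mathcal Q}_1^{\rm ren}$, ${\mathcal Q}_2^{\rm ren}$, ${\mathcal Q}_3^{\rm ren}$ absorbs the remaining asymmetric monomials. Once the algebra is laid out this way the identity is immediate, so the proof is essentially a careful expand-and-match with no analytic input beyond the already-established pointwise relations between $w$, $w_1$, $w_2$ and $\omega$.
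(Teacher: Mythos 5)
Your plan is correct and is essentially the same as the paper's own (very brief) proof: insert $1_i=P_i+Q_i$ on both sides of each $w(x_i,x_j)$ (and of the one-body $\rmu$-term), organize the sixteen resulting monomials by the number of $Q$'s, and observe that $\mathcal{Q}_4^{\rm ren}$ reproduces exactly these sixteen but with kernel $w$, $w\omega$, $w\omega^2$ in the $4$-, $3$- and lower-$Q$ slots respectively, so that adding $\mathcal{Q}_0^{\rm ren},\dots,\mathcal{Q}_3^{\rm ren}$ with their $w_1,w_2$ kernels restores the plain $w$ everywhere. The only detail worth tracking carefully, as you correctly note, is which of the six $2$-$Q$ monomials carry $w_1$ (namely $Q_iQ_j\,w\,P_jP_i$ and its conjugate, coming from the linear-in-$\omega$ cross terms) versus $w_2$ (the remaining four, coming from the $\omega^2$ piece); your accounting of this via the $i\leftrightarrow j$ symmetry and the $+\mathrm{h.c.}$ notation is sound.
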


\begin{proof}
	The identity \eqref{eq:potsplit} follows from simple algebra using the identities $P_i+Q_i = 1_{i}$, $w_1=w(1-\omega)$ and $w_2=w(1-\omega^2)$. In fact it is easy to see that adding the terms in which $\rmu$ appears gives the one-body term in \eqref{eq:potsplit}. Regarding the two-body term in \eqref{eq:potsplit} we first insert $1_{i}=P_i+Q_i $ for all $i$ on both sides of $w(x_i,x_j)$ and organize the $16$ resulting terms by the number of $Q$'s occurring. Then, if three or less $Q$'s occur, we replace $w$ by either $w_1$ or $w_2$ and add a corresponding term to the $4$-$Q$ term.
\end{proof}%

Applying the decomposition of the potential energy in Lemma~\ref{lem:SplittingInQs} we arrive at the following lemma by, in particular, applying a Cauchy-Schwarz inequality to absorb ${\mathcal Q}_3^{\rm ren}$ in the positive ${\mathcal Q}_4^{\rm ren}$-term.
\begin{lemma}\label{lm:appinteractionestimate} There is a constant $C>0$, depending only on the localization function $\chi$, such that if $\rmu$ satisfies \eqref{eq:R/ell}, then
	\begin{align}\label{eq:SmallsimpleQs}
	-\rmu \sum_{i=1}^N \int w_{1}(x,y)\,dy+
	\frac{1}{2} \sum_{i\neq j}  w(x_i, x_j)
	\geq A_0+A_2-Ca (\rmu +n_0|B|^{-1})n_+  
	\end{align}
	where
	\begin{align}
	A_0={}&\frac{n_0(n_0-1)}{2|B|}\big(\widehat{g}(0) + \widehat{g\omega}(0)\big)
	- \left(\rmu \frac{n_0}{|B|}+\frac14\left(\rmu
	-\frac{n_0+1}{|B|}\right)^2\right)|B| \widehat{g}(0)
	\label{eq:A0}
	\end{align}
	and 
	\begin{equation}\label{eq:defA_2}
		A_2= \frac{1}{2}\sum_{i\neq j} P_{i} P_{j} w_{1}(x_i,x_j) Q_{j} Q_{i} + h.c.
	\end{equation} 
\end{lemma}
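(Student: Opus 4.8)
\textbf{Proof proposal for Lemma~\ref{lm:appinteractionestimate}.}

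The plan is to start from the exact identity \eqref{eq:potsplit} provided by Lemma~\ref{lem:SplittingInQs} and handle each of the five terms ${\mathcal Q}_j^{\rm ren}$ in turn, keeping ${\mathcal Q}_0^{\rm ren}$ essentially as is (it will produce $A_0$), keeping the piece $\frac{1}{2}\sum_{i\neq j} P_i P_j w_1 Q_j Q_i + h.c.$ (which is $A_2$), and absorbing or discarding everything else with error bounded by $Ca(\rmu+n_0|B|^{-1})n_+$. First I would observe that ${\mathcal Q}_4^{\rm ren}\geq 0$ since it has the form $\frac{1}{2}\sum_{i\neq j} M_{ij}^* w(x_i,x_j) M_{ij}$ with $w\geq 0$ (because $W\geq 0$ on the relevant set by Lemma~\ref{lem:IntEst}), so ${\mathcal Q}_4^{\rm ren}$ can simply be dropped from a lower bound — except that we want to retain just enough of it to absorb ${\mathcal Q}_3^{\rm ren}$ via Cauchy--Schwarz, as the statement hints. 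Concretely, ${\mathcal Q}_3^{\rm ren}=\sum_{i\neq j} P_iQ_j w_1 Q_jQ_i + h.c.$, and since $0\leq w_1\leq w$ pointwise one writes $w_1 = w^{1/2}w_1 w^{-1/2}$-type factorizations; more robustly, bound $\pm{\mathcal Q}_3^{\rm ren}\leq \epsilon {\mathcal Q}_4^{\rm ren} + \epsilon^{-1}(\text{something})$ where the something is a two-body operator that one then estimates by $Ca\,n_0 |B|^{-1} n_+$ using \eqref{eq:Pointwise_w1} — here the $P_i$ factor contributes the density $n_0/|B|$ and the sum over $j$ of $Q_j w_1 Q_j$ is controlled by $\max_x\int w_1(x,y)dy \leq Ca$ acting on the $n_+$ counting operator.

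Next I would dispose of ${\mathcal Q}_1^{\rm ren}$ and ${\mathcal Q}_2^{\rm ren}$. The term ${\mathcal Q}_1^{\rm ren}$ is linear in $Q$ (one $Q$, three $P$'s, plus the one-body $\rmu$ piece); a Cauchy--Schwarz splitting $\pm{\mathcal Q}_1^{\rm ren}\leq \delta\, n_+ \cdot(\text{const}) + \delta^{-1}(\ldots)$ turns it into a bound of the type $Ca(\rmu + n_0|B|^{-1})n_+$ plus harmless scalar terms that get incorporated into $A_0$ (note $A_0$ is allowed to contain the $\rmu n_0/|B|$ and $(\rmu-(n_0+1)/|B|)^2$ corrections precisely to soak these up). For ${\mathcal Q}_2^{\rm ren}$, which is quadratic in $Q$ but with the `off-diagonal' structure $P_iQ_j w_2 P_j Q_i$ and $P_iQ_j w_2 Q_jP_i$, I would again use $\|w_2\|$-type bounds from \eqref{eq:Pointwise_w2}: the operator $\sum_{i\neq j}P_iQ_jw_2 P_jQ_i$ is bounded in absolute value by $C a\, n_0|B|^{-1} n_+$ because each summand has one $P$ (giving a density factor) and the relevant kernel integral is $\leq Ca$; the piece $\frac{1}{2}\sum P_iP_jw_1Q_jQ_i+h.c.$ inside ${\mathcal Q}_2^{\rm ren}$ is exactly $A_2$ and is kept. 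The $-\rmu\sum_i Q_i\int w_1 Q_i$ term is $\geq -Ca\rmu\, n_+$ directly by \eqref{eq:Pointwise_w1}.

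Finally, for ${\mathcal Q}_0^{\rm ren}$ I would compute the expectation in the zero-momentum sector: since $P_i$ projects onto the constant function $|B|^{-1/2}\theta$, we have $\sum_{i\neq j}P_iP_j w_2(x_i,x_j)P_jP_i$ acting as the scalar $n_0(n_0-1)\cdot|B|^{-2}\iint w_2 = n_0(n_0-1)|B|^{-1}(\widehat g(0)+\widehat{g\omega}(0))$ by \eqref{eq:w2int}, and $\rmu\sum_i P_i\int w_1 P_i$ acts as $\rmu n_0 |B|^{-1}\iint w_1 = \rmu n_0 |B|^{-1}\cdot 8\pi a = \rmu n_0 |B|^{-1}\widehat g(0)$ by \eqref{eq:DefU}; together with the scalar leftovers collected from ${\mathcal Q}_1^{\rm ren}$ (completing the square in $\rmu - (n_0+1)/|B|$) this assembles into exactly $A_0$ as displayed in \eqref{eq:A0}. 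The main obstacle, I expect, is the bookkeeping in the Cauchy--Schwarz absorption of ${\mathcal Q}_3^{\rm ren}$ into ${\mathcal Q}_4^{\rm ren}$: one must choose the factorization of the positive quartic form so that what is thrown away is genuinely of order $a(\rmu+n_0|B|^{-1})n_+$ and not larger — in particular the $P$-legs must be the ones generating the density factor while the $Q$-legs are bounded by the kinetic counting operator $n_+$, and one has to be careful that the cross term $P_iQ_j w_1 Q_jQ_i$ is matched against the right sub-block of the $4$-$Q$ operator (the $Q_iQ_j\cdots Q_jQ_i$ block), using $0\leq\omega\leq 1$ and \eqref{eq:W1-g} to control the $\omega$-dressed pieces. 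Everything else is a routine application of the pointwise kernel bounds in Lemma~\ref{lem:IntEst} together with $n_0\leq n$, $P^2=P$, $Q^2=Q$.
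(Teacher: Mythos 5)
There is a genuine gap. You treat ${\mathcal Q}_1^{\rm ren}$ and ${\mathcal Q}_3^{\rm ren}$/${\mathcal Q}_4^{\rm ren}$ as if they can be disposed of independently, but the proof hinges on an exact cancellation between the two that your scheme does not produce. Note that ${\mathcal Q}_4^{\rm ren}$ is a \emph{single full square} $\tfrac12\sum M^*_{ij}w M_{ij}$ with $M_{ij}=Q_jQ_i+\omega(P_jP_i+P_jQ_i+Q_jP_i)$, not a sum of sub-squares; there is no isolated ``$Q_iQ_j\,w\,Q_jQ_i$ block'' to match ${\mathcal Q}_3^{\rm ren}$ against. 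Extracting such a block via $\tfrac12|A+B|^2\geq \tfrac14|A|^2-|B|^2$ would discharge the uncontrolled negative term $-\sum P_iP_j\,\omega w\omega\,P_jP_i\sim -O(a\,n_0^2|B|^{-1})$. The paper instead rewrites $Q_jQ_i=M_{ij}-\omega(P_jP_i+P_jQ_i+Q_jP_i)$ inside ${\mathcal Q}_3^{\rm ren}$, Cauchy--Schwarzes the $M_{ij}$-piece against $\tfrac12{\mathcal Q}_4^{\rm ren}$, and is then left with $-\sum_{i\neq j}\big(P_iQ_j\,w_1\omega\,P_jP_i+h.c.\big)$ plus errors of size $Ca\,n_0|B|^{-1}n_+$. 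That leftover is of size $O(a\,n_0^2|B|^{-1})$; it is \emph{not} absorbable into the error term $Ca(\rmu+n_0|B|^{-1})n_+$, nor into $A_0$ (the only $n_0^2$-sized piece in $A_0$ is $+\tfrac{n_0(n_0-1)}{2|B|}\widehat{g\omega}(0)$ coming from ${\mathcal Q}_0^{\rm ren}$, and it carries the opposite sign). Instead it must cancel exactly against the $\widehat{g\omega}$ part of ${\mathcal Q}_1^{\rm ren}$, i.e.\ the second line of \eqref{eq:Q1n0}. Your proposal never identifies this cancellation; treating the leftover as ``estimable'' is where the argument fails.

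A second, independent issue is the handling of the remaining part of ${\mathcal Q}_1^{\rm ren}$, the one with coefficient $n_0|B|^{-1}-\rmu$. A direct Cauchy--Schwarz $\pm(cQAP+h.c.)\leq |c|(\delta QAQ+\delta^{-1}PAP)$ gives a bound that is \emph{linear} in $|n_0|B|^{-1}-\rmu|$ and involves an extra $Ca\,n_0$ factor from $\sum_i P_iA_iP_i$; this does not produce the quadratic form $\tfrac14(\rmu-\tfrac{n_0+1}{|B|})^2|B|\widehat g(0)$ appearing in \eqref{eq:A0}. To obtain the square, the paper uses the commutation identity \eqref{eq: sqr n_+} together with the factorization $n_0-\rmu|B|=\big(n_0^{1/2}+(\rmu|B|)^{1/2}\big)\big((n_0+1)^{1/2}-(\rmu|B|)^{1/2}\big)$ (modulo the $n_0\to n_0+1$ shift coming from pushing $n_0^{1/2}$ through $\sum_i Q_iA_iP_i$), and only then applies Cauchy--Schwarz. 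Your proposal correctly anticipates that $A_0$ is designed to ``soak up'' a scalar remainder of exactly this shape, but a naive $\delta$-splitting does not produce it; the non-commutative square-completion is an essential step, not bookkeeping.
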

\begin{proof}
We use the identity \eqref{eq:potsplit} and note that, since $P$ is the projection onto constant functions in the box,
\begin{align}\label{eq:Q_0^ren equation}
	{\mathcal Q}_{0}^{\rm ren}&=\frac{n_0(n_0-1)}{2|B|^2}\iint w_{2}(x,y)\,dx dy - \rmu \frac{n_0}{|B|}\iint w_{1}(x,y)\,dx dy \nonumber \\
	&=\frac{n_0(n_0-1)}{2|B|} \big(\widehat{g}(0) + \widehat{g\omega}(0)\big)- 
	\rmu n_0 \widehat{g}(0),
\end{align}
where we used \eqref{eq:DefU} and \eqref{eq:w2int} to get the last identity.

We will now show that
\begin{align}\label{eq:Absorb}
{\mathcal Q}_{1}^{\rm ren}+{\mathcal Q}_{3}^{\rm ren}+{\mathcal Q}_{4}^{\rm ren} \geq &\,-\frac14\left(\rmu-\frac{n_0+1}{|B|}\right)^2\iint w_{1}(x,y)\,dx dy\nonumber\\
&\;-Can_0\abs{B}^{-1}n_+-Ca\rmu n_+.
\end{align}
Combining \eqref{eq:Q_0^ren equation} with \eqref{eq:Absorb} and again using \eqref{eq:DefU} we easily get
\begin{align}
{\mathcal Q}_{0}^{\rm ren}+{\mathcal Q}_{1}^{\rm ren}+{\mathcal Q}_{3}^{\rm ren}+{\mathcal Q}_{4}^{\rm ren}\geq A_0-Ca (\rmu +n_0|B|^{-1})n_+ .
\end{align}

We have, using Lemma~\ref{lem:IntEst},
$$
0\leq \sum_{i,j}P_iQ_jw_{1}(x_i,x_j)Q_jP_i=n_0|B|^{-1}\sum_j Q_j\chi_B(x_j)W_1*\chi_B(x_j)Q_j
\leq Cn_0 n_+\ell^{-3}a\|\chi_B\|_\infty^2
$$
or more generally using again Cauchy-Schwarz inequalities and Lemma~\ref{lem:IntEst}, we have for all $k\in {\mathbb N} \cup \{ 0 \}$
\begin{align}
0\leq \sum_{i,j}P_iQ_j(w_{1}\omega^k)(x_i,x_j)Q_jP_i\leq \,& Cn_0\ell^{-3}a\|\chi_B\|_\infty^2n_+,\label{eq:Q'app1}
\end{align}
\begin{align}
 \pm\Bigl(\sum_{i,j}P_iQ_j(w_{1}\omega^k)(x_i,x_j)P_jQ_i+h.c.\Bigr)\leq&\,
 2\sum_{i,j}P_iQ_j(w_{1}\omega^k)(x_i,x_j)Q_jP_i\nonumber\\
\leq &\, Cn_0\ell^{-3}a\|\chi_B\|_\infty^2n_+,\label{eq:Q'app2}\\
\pm\Bigl(\sum_{i,j}Q_iP_j(w_1\omega^k)(x_i,x_j)P_jP_i+h.c.\Bigr)\leq &\,
\sum_{i,j}Q_iP_j(w_1\omega^k)(x_i,x_j)P_jQ_i\nonumber\\
&\,+\sum_{i,j}P_iP_j(w_1\omega^k)(x_i,x_j)P_jP_i
\nonumber\\ \leq &\,C
n_0a\ell^{-3}\Bigl(\|\chi_B\|_\infty^2n_+ +n_0\Bigr),\label{eq:Q'app}
\end{align}
where we have abbreviated $(w_1\omega^k)(x_1,x_2)=w_1(x_1,x_2)\omega(x_1-x_2)^k$. We have
	\begin{align}
	\sum_{i,j}P_iQ_jw_1(x_i,x_j)Q_jQ_i=&\, \sum_{i,j} \Big(P_i
	Q_j w_1(x_i,x_j) \Big[ Q_j Q_i + \omega(x_i-x_j) (P_j P_i + P_j Q_i
	+ Q_j P_i)\Big]  \Big)\nonumber \\ &\,- \sum_{i,j}
	\Big(P_i Q_j w_1(x_i,x_j) \omega(x_i-x_j) (P_j P_i +
	P_j Q_i + Q_j P_i)\Big) \label{eq:3Qto4Q}
	\end{align}
	and the same identity for the Hermitian conjugates.	We estimate the first term in \eqref{eq:3Qto4Q} (and its Hermitian conjugate) using a Cauchy-Schwarz inequality
	\begin{align}
	\pm \sum_{i,j} \Big(P_i
	&Q_j w_1(x_i,x_j) \Big[ Q_j Q_i + \omega(x_i-x_j) (P_j P_i + P_j Q_i
	+ Q_j P_i)\Big]+h.c.  \Big) \nonumber \\
	&\,\leq \frac{1}{2}
	{\mathcal Q}_4^{\rm ren}+C\sum_{i\neq j} P_i Q_j
	w(x_i,x_j)(1-\omega(x_i-x_j))^2Q_j P_i \nonumber \\
	&\,\leq \frac{1}{2}
	{\mathcal Q}_4^{\rm ren}+C\sum_{i\neq j} P_i Q_j
	w_1(x_i,x_j)Q_j P_i,
	\end{align}
	where we have used the pointwise inequality $0 \leq \omega \leq 1$ in the last inequality.\\
	We estimate the second term in \eqref{eq:3Qto4Q} (and its Hermitian conjugate) using a Cauchy-Schwarz inequality
\begin{align}\label{eq:CSexample}
&-\sum_{i,j}
\Big(P_i Q_j w_1(x_i,x_j) \omega(x_i-x_j) (P_j P_i +
P_j Q_i + Q_j P_i)+h.c.\Big)  \\
&\,\geq -\sum_{i,j}
\Big(P_i Q_j w_1(x_i,x_j) \omega(x_i-x_j) P_j P_i +h.c.\Big) 
-4\sum_{i,j}\Big(P_i Q_j w_1(x_i,x_j) \omega(x_i-x_j) Q_j P_i \Big).\nonumber
	\end{align}
	Thus applying a Cauchy-Schwarz inequality, that ${\mathcal Q}_4^{\rm ren}\geq 0$ and the estimates \eqref{eq:Q'app1}-\eqref{eq:Q'app2} we arrive at
\begin{align}\label{eq:Q3+Q4CS}
	{\mathcal Q}_{3}^{\rm ren}+{\mathcal Q}_{4}^{\rm ren} &\,\geq -C\sum_{i\neq j} P_{i} Q_{j} w_{1}(x_i,x_j) Q_{j} P_{i}
	- \sum_{i\neq j} \Big( P_{i} Q_{j} w_{1} \omega(x_i,x_j) P_{j} P_{i} +
	h.c.\Big) \nonumber 
	\\ &\,\quad - 4 \sum_{i \neq j} P_{i} Q_{j}
	w_{1} \omega (x_i,x_j) Q_{j} P_{i} \nonumber \\
	&\,\geq - \sum_{i\neq j} \Big(
	P_{i} Q_{j} w_{1} \omega(x_i,x_j) P_{j} P_{i} + h.c.\Big)-Can_0|B|^{-1}n_+\|\chi_B\|_\infty^2 .
	\end{align}

	Notice that if we rewrite ${\mathcal Q}_{1}^{\rm ren}$ as
	\begin{align}  
	{\mathcal Q}_1^{\rm ren}= &\, (n_0|B|^{-1} -\rmu) \sum_{i} Q_i \chi_{B}(x_i) W_1*\chi_{B}(x_i)  P_i + h.c. 
	\nonumber\\ 
	&\,+ n_0|B|^{-1}  \sum_{i} Q_i \chi_{B}(x_i)  (W_1\omega)*\chi_{B}(x_i) P_i + h.c.,
	\label{eq:Q1n0}
	\end{align}
	then the first term on the right side of \eqref{eq:Q3+Q4CS} cancels the second line of \eqref{eq:Q1n0}.
	
	Since $\sum_i P_i A_i Q_i n_{+} = (n_{+}+1) \sum_i P_i A_i Q_i$ (for any bounded, self-adjoint one-particle operator $A$), we have
	$$
	\sum_i P_i A_i Q_i p(n_{+}) = p(n_{+}+1) \sum_i P_i A_i Q_i,
	$$
	for any polynomial $p$. Hence, by a limiting argument,
	\begin{align}
	\sum_i P_i A_i Q_i n_{+} = \sum_i P_i A_i Q_i \sqrt{n_{+} } \sqrt{n_{+} } 
	= \sqrt{n_{+} +1} 
	\sum_i P_i A_i Q_i \sqrt{n_{+} }.\label{eq: sqr n_+}
	\end{align}
	With \eqref{eq: sqr n_+} at hand we estimate the remaining part of ${\mathcal Q}_{1}^{\rm ren}$
	\begin{align}
	|B&|^{-1}(n_0-\rmu|B|)\sum_{i} Q_{i} \chi_B(x_i) W_1*\chi_B(x_i)
	P_{i} +
	h.c.\nonumber\\ =\,&|B|^{-1}({n_0}^{1/2}+(\rmu|B|)^{1/2})\sum_{i}
	Q_{i} \chi_B(x_i) W_1*\chi_B(x_i)
	P_{i}((n_0+1)^{1/2}-(\rmu|B|)^{1/2}) +
	h.c.\nonumber\\ \geq\,&-4|B|^{-1}\left(n_0^{1/2}+(\rmu|B|)^{1/2}\right)^2\sum_{i}
	Q_{i} \chi_B(x_i) W_1*\chi_B(x_i) Q_{i}\nonumber\\&
	-\frac14|B|^{-1}\left((n_0+1)^{1/2}-(\rmu|B|)^{1/2}\right)^2\sum_{i}
	P_{i} \chi_B(x_i) W_1*\chi_B(x_i) P_{i}.\label{Q_1 remaining part}
	\end{align}
	The first term above we estimate as
	\begin{align}
	&-4|B|^{-1}\left(n_0^{1/2}+(\rmu|B|)^{1/2}\right)^2\sum_{i}
	Q_{i} \chi_B(x_i) W_1*\chi_B(x_i) Q_{i}\\
	&\quad \geq -C|B|^{-1}(n_0+\rmu|B|)n_+ a\|\chi_B\|_\infty^2.
	\end{align}
We complete the proof of \eqref{eq:Absorb} by estimating the last term in \eqref{Q_1 remaining part}
	\begin{align}
	&-\frac14\frac{n_0}{|B|^2}\left((n_0+1)^{1/2}-(\rmu|B|)^{1/2}\right)^2\iint w_{1}(x,y)\,dx dy\nonumber\\
	&=\,-\frac{1}{4}\frac{n_0}{|B|^2}\left((n_0+1)-\rmu|B|\right)^2[(n_0+1)^{1/2}+(\rmu|B|)^{1/2}]^{-2}\iint w_{1}(x,y)\,dx dy\nonumber\\
	&\geq\,-\frac{1}{4}\left(\frac{n_0+1}{|B|}-\rmu\right)^2\iint w_{1}(x,y)\,dx dy,
	\end{align}
	which together with ${\mathcal Q}^{\rm ren}_{0}$ gives the $A_0$ term in the lemma. 
	
	Recalling that $w_2=w_1(1+\omega)\leq 2w_1$ we absorb the first two terms in ${\mathcal Q}_{2}^{\rm ren}$ into the last term in \eqref{eq:SmallsimpleQs} 
	using again the same Cauchy-Schwarz as in the second inequality in \eqref{eq:CSexample}. Finally, the one-body term in ${\mathcal Q}_{2}^{\rm ren}$ is estimated as
\begin{align}
\rmu \sum_{i} Q_i \int w_1(x_i,y)\,dy\, Q_i +h.c.\leq Ca\rmu n_+\|\chi_B\|_\infty^2.
\end{align}%
\end{proof}%
\section{Bogoliubov calculation}\label{sec:Bogoliubov}
In this section, we will study the `effective Bogoliubov' Hamiltonian, i.e. the remaining terms quadratic in $Q$. 
We will assume that the number of particles $n$ satisfies $n \leq M_0 \rmu \ell^3$, where $M_{0}$ is some fixed constant.
In order to control the number of exited particles, $n_{+}$, we separate the `gap' from the kinetic energy in \eqref{eq:DefTu}, i.e. the constant term $b \ell^{-2} Q$. This {\it positive} term will be very important later---see the proof of Lemma~\ref{lem:EnergyFirst} below. That is, we define ${\mathcal H}^{\rm Bog}$ as an operator on the Fock space such that on the $N$-particle sector we have
\begin{align}\label{eq:Def_HBog}
({\mathcal H}^{\rm Bog})_{N} = \sum_{j=1}^N \mathcal{T}_j - b \ell^{-2} n_{+} + A_2%
\end{align}
with $\mathcal{T}$ from \eqref{eq:DefTu} and $A_2$ from \eqref{eq:defA_2}.

We will pass to a second quantized formalism in order to give an effective lower bound to this operator. 
We define $a_0$ as the annihilation operator associated to the condensate function for the box $B$, i.e. for $\Psi \in \otimes_{s}^N L^2$ we have
$$
(a_0 \Psi)(x_2, \ldots, x_N) := \frac{\sqrt{N}}{\ell^{3/2}} \int \theta(y) \Psi(y,x_2,\ldots, x_N) \,dy.
$$
Therefore,
\begin{align}\label{eq:Appa0}
\langle \Psi ,n_0\Psi\rangle=\langle \Psi\,|\, a_0^{*} a_0 \Psi \rangle 
=
\frac{N}{\ell^3} \int \left| \int \theta(y) \Psi(y,x_2,\ldots, x_N) \,dy\right|^2 dx_2\cdots dx_N. 
\end{align}
We define, for $k \in {\mathbb R}^3$,
\begin{align}\label{def: b_k and b^*_k}
b_k := a_0^{*} a( Q(e^{ikx} \chi_B))\qquad \textrm{and}\qquad b_k^{*} := a( Q(e^{ikx} \chi_B))^{*} a_0.
\end{align}
Then,
\begin{align}
[ b_k, b_{k'} ] = 0, \qquad \forall k,k' \in {\mathbb R}^3,
\end{align}
and
\begin{align}\label{eq:CommRel}
[b_k, b_{k'}^{*}] =
a_0^{*} a_0 \langle Q(e^{ikx} \chi_B), Q (e^{ik'x} \chi_B) \rangle-
a( Q(e^{ik'x} \chi_B))^{*} a( Q(e^{ikx} \chi_B)).
\end{align}%
In particular,
\begin{align}\label{eq:Commutator}
[b_k, b_{k}^{*}] \leq a_0^{*} a_0 \int \chi_B^2 = \ell^3 a_0^{*} a_0.
\end{align}

By a calculation similar to \eqref{eq:Appa0}, we have
\begin{align}\label{eq:nPlus2nd}
\langle \Psi, (2\pi)^{-3} \int a( Q(e^{ikx} \chi_B))^{*} a( Q(e^{ikx} \chi_B))\,dk \,\Psi \rangle
= \langle \Psi, \sum_i Q_i \chi_B(x_i)^2 Q_i \Psi \rangle.
\end{align}
Therefore, using \eqref{eq:Appa0} and $[a_0, a_0^{*}] = 1$, we find for $\Psi \in \otimes_{s}^N L^2$, 
\begin{align}
\langle \Psi, (2\pi)^{-3} \int b_k^{*} b_k \,dk \Psi \rangle &=
\langle \Psi, (2\pi)^{-3} \int a( Q(e^{ikx} \chi_B))^{*} (n_0+1) a( Q(e^{ikx} \chi_B))\,dk \,\Psi \rangle \nonumber \\
&\leq 
N \langle \Psi, (2\pi)^{-3} \int a( Q(e^{ikx} \chi_B))^{*} a( Q(e^{ikx} \chi_B))\,dk \,\Psi \rangle.
\end{align}
Therefore we get, using \eqref{eq:nPlus2nd},
\begin{align}\label{eq: b^*b integral}
\int b_k^{*}b_k\, dk \leq C nn_+\,,
\end{align}
with $C = \| \chi_B\|_{\infty}^2$.
Furthermore, we introduce the Fourier multiplier corresponding to the localized kinetic energy (after the separation of the constant term), i.e.
\begin{align}\label{def: tau}
\tau(k) := \left(|k|^2 - \constkinetickernel \ell^{-2} \right)_{+}
\end{align}%
allowing us to write
\begin{align}\label{eq:Def_HBog short}
({\mathcal H}^{\rm Bog})_{N} = \sum_{j=1}^N Q_{B,j}
\chi_B 
\tau(-i\nabla)
\chi_B
 Q_{B,j}+ A_2.
\end{align}
\begin{lemma}[Lower bound by second quantized operator]\label{prop: lower bound on H_Bog}~\\
Suppose that \eqref{eq:R/ell} is satisfied. Then, with the notation above, in particular \eqref{eq:Def_HBog}, we have
\begin{align}\label{eq:EstH1-Bog}
{\mathcal H}^{\rm Bog} \geq
{\mathcal H}^{\rm Bog}_1 - C a (\rho+\rmu) n_{+},
\end{align}
where $C$ only depends on the localization function $\chi$ and
with
\begin{align}\label{eq:2ndQuantized}
{\mathcal H}^{\rm Bog}_1:=
\frac{1}{2} (2\pi)^{-3} \int_{{\mathbb R}^3} {\mathcal A}(k) \left( b_k^{*}b_k + b_{-k}^{*} b_{-k} 
\right)
+ {\mathcal B}(k) \left( b_k^{*}b_{-k}^{*} + b_{k} b_{-k} 
\right)\,dk.
\end{align}
Here ${\mathcal A}(k)=0$ if $n=0$ and otherwise we let
\begin{align}\label{eq:AandB}
{\mathcal A}(k) = \frac{1}{n} \big( \tau(k)+ \CA\rho a + \rmu a \big)
\qquad
\text{ and }
\qquad 
{\mathcal B}(k) =\frac{\widehat{W}_{1}(k)}{\ell^3}.
\end{align}
\end{lemma}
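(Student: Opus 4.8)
The plan is to start from the expression \eqref{eq:Def_HBog short} for $({\mathcal H}^{\rm Bog})_N$ and rewrite both the localized kinetic term and the term $A_2$ in terms of the operators $a(Q(e^{ikx}\chi_B))$ and the $b_k$'s. First I would treat the kinetic part: using the Fourier representation of $\tau(-i\nabla)$ and the formula \eqref{eq:nPlus2nd}, the operator $\sum_j Q_{B,j}\chi_B\tau(-i\nabla)\chi_B Q_{B,j}$ becomes $(2\pi)^{-3}\int \tau(k)\, a(Q(e^{ikx}\chi_B))^* a(Q(e^{ikx}\chi_B))\,dk$. This is a sum of $a^*a$'s, not $b^*b$'s, so I would trade $a^*a$ for $b^*b$ by writing $b_k^* b_k = a(Q(e^{ikx}\chi_B))^*(n_0+1)a(Q(e^{ikx}\chi_B))$; replacing $n_0+1$ by $n$ (allowed since on the $N$-sector $n_0\le n$ and there are at most $n$ particles, up to the $+1$) produces the factor $1/n$ in ${\mathcal A}(k)$, at the cost of an error controlled by $\int a^*a\,dk \le C n_+ \|\chi_B\|_\infty^2$ times $a(\rho+\rmu)$-type coefficients — this is where the $-Ca(\rho+\rmu)n_+$ error budget gets used for the kinetic part.

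Next I would handle $A_2 = \tfrac12\sum_{i\ne j}(P_iP_j w_1(x_i,x_j)Q_jQ_i + \text{h.c.})$. Since $P$ projects onto the constant $\ell^{-3/2}\theta$, each $P$ contributes an $a_0$ (or $a_0^*$), and the two $Q$'s together with the kernel $w_1(x,y)=\chi_B(x)W_1(x-y)\chi_B(y)$ contribute a double $a(Q\cdot)$. Expanding $W_1(x-y)$ in Fourier as $(2\pi)^{-3}\int \widehat{W}_1(k)e^{ik(x-y)}\,dk$, the pairing $e^{ikx}\chi_B(x)$ on one coordinate and $e^{-iky}\chi_B(y)$ on the other recombines, together with the two $a_0^*$'s from the $P$'s, into exactly $\tfrac12(2\pi)^{-3}\int \ell^{-3}\widehat{W}_1(k)(b_k^*b_{-k}^* + b_kb_{-k})\,dk$, which is the ${\mathcal B}(k)$-part of ${\mathcal H}^{\rm Bog}_1$. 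The remaining pieces of ${\mathcal A}(k)$, namely the $\CA\rho a + \rmu a$ inside, should be extracted from the diagonal/direct part of the potential energy that was already folded into the Bogoliubov Hamiltonian — i.e. from rewriting terms like $\sum_i Q_i\chi_B W_1*\chi_B Q_i$ and the analogous $w_2$ terms as $(2\pi)^{-3}\int(\text{const}\cdot a)\,a^*a\,dk$ and then converting $a^*a\to b^*b/n$ as above; the normalization $\widehat{g}(0)=8\pi a$, $W_1\approx g$, and \eqref{eq:DefU} pin down the constant $\CA=16\pi$ (with the extra $\rmu a$ coming from the $-\rmu\sum Q_i\int w_1 Q_i$ contribution). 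Throughout, I would collect every commutator rearrangement, every replacement of $\chi_B * \chi_B$ or $W_1$ by $g$ (using \eqref{eq:W1-g}), and every $n_0+1\to n$ swap into a single error term, bounding each using Lemma~\ref{lem:IntEst}, \eqref{eq:Commutator}, \eqref{eq: b^*b integral}, and the Cauchy–Schwarz/operator inequalities already employed in Lemma~\ref{lm:appinteractionestimate}, so that the total is $\le Ca(\rho+\rmu)n_+$.

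The main obstacle I expect is bookkeeping precision in the $a^*a \leftrightarrow b^*b$ conversion: the $b_k$'s do not obey canonical commutation relations (see \eqref{eq:CommRel}), so replacing $a(Q e^{ikx}\chi_B)^*(n_0+1)a(Q e^{ikx}\chi_B)$ by $n^{-1}b_k^*b_k$ is only an inequality in one direction, and one must check that the direction is the favorable one for a \emph{lower} bound and that the discrepancy is genuinely of order $a(\rho+\rmu)n_+$ rather than something larger — this requires carefully using $n\le M_0\rmu\ell^3$ together with $\rho = n\ell^{-3}$ so that the `$1/n$' and the `$\rho a$, $\rmu a$' coefficients balance. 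A secondary subtlety is making sure the off-diagonal ${\mathcal B}(k)$-term is reproduced \emph{exactly} (no error), since $A_2$ is kept intact in ${\mathcal H}^{\rm Bog}$; this forces the Fourier expansion of $w_1$ and the identification of $P$ with $a_0$ to be done without any approximation, the approximations $W_1\approx g$ being reserved only for the ${\mathcal A}(k)$-coefficients where an $O((R/\ell)^2)$ error is affordable.
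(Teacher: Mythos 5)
Your treatment of $A_2$ (exact ${\mathcal B}(k)$-part, no error) and your identification of the kinetic term with $(2\pi)^{-3}\int\tau(k)\,a(Q(e^{ikx}\chi_B))^*a(Q(e^{ikx}\chi_B))\,dk$ are both correct and match the paper. However, there is a genuine conceptual gap in where you place the $\CA\rho a+\rmu a$ part of ${\mathcal A}(k)$ and where the error budget $-Ca(\rho+\rmu)n_+$ is spent.

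You claim the $\CA\rho a+\rmu a$ terms ``should be extracted from the diagonal/direct part of the potential energy that was already folded into the Bogoliubov Hamiltonian,'' referring to terms like $\sum_i Q_i\chi_B W_1*\chi_B Q_i$. But no such terms live in ${\mathcal H}^{\rm Bog}$: by \eqref{eq:Def_HBog}/\eqref{eq:Def_HBog short}, ${\mathcal H}^{\rm Bog}$ consists \emph{only} of $\sum_j Q_j\chi_B\tau(-i\nabla)\chi_B Q_j$ and $A_2$; the diagonal $w_1$- and $w_2$-terms were already estimated and discarded into the error of Lemma~\ref{lm:appinteractionestimate}. The $\CA\rho a+\rmu a$ contribution is not extracted from ${\mathcal H}^{\rm Bog}$; it is \emph{artificially added} to ${\mathcal A}(k)$ (so that the Bogoliubov positivity $|{\mathcal B}|\leq{\mathcal A}$ holds and the subsequent integral in Lemma~\ref{prop:BogIntegral-2-New} is controllable), and the entire error $-Ca(\rho+\rmu)n_+$ in \eqref{eq:EstH1-Bog} is used to pay for this addition via \eqref{eq: b^*b integral}: $\frac{1}{2}(2\pi)^{-3}\frac{a}{n}\int(\CA\rho+\rmu)(b_k^*b_k+b_{-k}^*b_{-k})\,dk\leq Ca(\rho+\rmu)n_+$.

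Consequently, your attribution of the error to the kinetic part's $a^*a\leftrightarrow b^*b$ conversion is also incorrect: that conversion uses $a^*a\geq a^*\frac{a_0a_0^*}{n}a=\frac{1}{n}b_k^*b_k$ (valid since $a_0a_0^*=n_0+1\leq n$ whenever $n_+\geq 1$, and both sides vanish otherwise), which is a one-sided operator inequality in the favorable direction with \emph{no} error term. The assumption $n\leq M_0\rmu\ell^3$ that you invoke to balance the $1/n$ against the $\rho a,\rmu a$ coefficients is also not needed here; it is a hypothesis of Lemma~\ref{prop:BogIntegral-2-New}, not of the present lemma. As written, your plan would stall when you go looking for the $\sum_iQ_iW_1*\chi_BQ_i$-type terms inside ${\mathcal H}^{\rm Bog}$ to produce $\CA\rho a+\rmu a$, because they are not there.
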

\begin{proof}
We can write, with $\theta_0 = \one_{B(0)}$, 
\begin{align}
\langle \varphi \,|\, P e^{ikx} \chi(x/\ell) Q \psi \rangle
&= \ell^{-3} 
\langle \varphi \,|\, \theta_0 \rangle \langle \theta_0 \,|\, e^{ikx} \chi(x/\ell) Q \psi \rangle  \nonumber \\
&=\ell^{-3} \langle \varphi \,|\, \theta_0 \rangle \langle Q(e^{-ikx} \chi(x/\ell)) \,|\, \psi \rangle \nonumber \\
&= \ell^{-3/2} \langle \varphi \,|\, b_{-k} \psi \rangle.
\end{align}
Therefore, we see that the second quantization of $P e^{ikx} \chi(x/\ell) Q$ is $\ell^{-3/2} b_{-k}$.

An application of \eqref{eq: b^*b integral} yields
\begin{align}\label{Eq: 2nd quant B II}
& \frac{1}{2}(2\pi)^{-3}\frac{a}{n} \int (\CA\rho  + \rmu ) \big [b_{k}^{*} b_{k}+b_{-k}^{*} b_{-k} \big ] \,dk
\leq Ca(\rho+\rmu)n_+.
\end{align}%
Furthermore
\begin{align}\label{eq: P_1P2Q_2Q_1}
\sum_{j \neq s} P_j P_s w_1(x_j,x_s) Q_s Q_j
& = \sum_{j \neq s}(2\pi)^{-3} \int \widehat{W}_1(k) (P_j \chi(x_j/\ell) e^{ikx_j} Q_j )
 (P_s \chi(x_s/\ell) e^{-ikx_s} Q_s )\, dk\nonumber\\
 & =  (2\pi)^{-3} \ell^{-3} \int \widehat{W}_1(k) b_{k}b_{-k}\, dk.
\end{align}
The term $Q\chi \tau (-i\nabla)\chi Q$ second quantizes as
\begin{align}
Q\chi \tau (-i\nabla)\chi Q&=(2\pi)^{-3}\int_{\R^3}\tau(k)a(Q(\chi e^{ikx}))^*
a(Q(\chi e^{ikx}))d k\nonumber.
\end{align}
Therefore we estimate $\sum_{j=1}^{N}Q_j\chi\tau(-i\nabla)\chi Q_j$ in terms of its second quantization as
\begin{align}
\sum_{j=1}^{N}Q_j\chi\tau(-i\nabla)\chi Q_j
&\geq  (2\pi)^{-3}\int_{\R^3}\tau(k)a(Q(\chi e^{ikx}))^*\frac{a_0a_0^*}{n}
a(Q(\chi e^{ikx}))d k\nonumber\\
&=(2\pi)^{-3}\int_{\R^3}\frac{\tau(k)}{n}b_k^*b_kd k\nonumber\\
&=\frac{1}{2} (2\pi)^{-3} \int \frac{\tau(k)}{n} \left[
b^{*}_{k}b_{k} + b^{*}_{-k}b_{-k} 
\right]\, dk.\label{eq:qchitauchiq}
\end{align}
Here we used that $b_k\psi=0$ if $\psi$ is in the condensate allowing us to assume \label{nplus assumption}that $n_+\geq 1$ such that in fact $a_0a_0^\ast\leq n$.
This finishes the proof of Lemma~\ref{prop: lower bound on H_Bog}.

\end{proof}

\begin{lemma}[The Bogoliubov integral]\label{prop:BogIntegral-2-New}~\\
Assume that the number of particles $n$ satisfies the bound  
$n \leq  M_{0} \rmu \ell^3$, where $M_{0}$ is some fixed constant. 
Then, for $\rmu$ sufficiently small so that \eqref{eq:R/ell} is satisfied, we have 
\begin{align}\label{eq:Est-H1Bog_integral}
{\mathcal H}_1^{\rm Bog} \geq
- \frac{1}{2} \ell^3 \rho_0 \rho  \widehat{g\omega}(0) 
-C_1   \ell^3 \rho_0 \rho a  \left( \rmu a^3\right)^{1/2} -C_2   \ell^3 \rho_0 \rho a \frac{R^2}{\ell^2}.
\end{align}
Here $C_1$ depends on $M_0$, on the constant $K$ from \eqref{eq:def_ell} and on the localization function $\chi$, while $C_2$ only depends on $K$ and $\chi$.
\end{lemma}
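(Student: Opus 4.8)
The plan is to diagonalize the quadratic form ${\mathcal H}_1^{\rm Bog}$ appearing in \eqref{eq:2ndQuantized} by a Bogoliubov rotation, carefully keeping track of the commutator defect \eqref{eq:Commutator}, which is the price we pay for the operators $b_k$ not being exact bosonic modes. First I would restrict attention to the sector where $n_+\geq 1$ and $n\leq M_0\rmu\ell^3$, so that $a_0^*a_0\leq n$ and the effective coupling ${\mathcal A}(k)=n^{-1}(\tau(k)+\CA\rho a+\rmu a)$ is strictly positive and bounded below by $cn^{-1}\ell^{-2}$ thanks to the gap $\CA\ell^{-2}$ built into $\tau$. Then, splitting the integrand of \eqref{eq:2ndQuantized} in $k$-space into pairs $\{k,-k\}$, each such pair gives a term of the schematic form $\tfrac12{\mathcal A}(b_k^*b_k+b_{-k}^*b_{-k})+\tfrac12{\mathcal B}(b_k^*b_{-k}^*+b_kb_{-k})$. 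For genuine bosons this completes the square to ${\mathcal A}(c_k^*c_k+c_{-k}^*c_{-k})+\tfrac12(\sqrt{{\mathcal A}^2-{\mathcal B}^2}-{\mathcal A})$ after a rotation with parameter $\tanh(2\nu_k)={\mathcal B}/{\mathcal A}$; the constant piece, integrated over $k$, is exactly what will produce the main $-\tfrac12\ell^3\rho_0\rho\,\widehat{g\omega}(0)$ term plus the LHY-type correction.

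The second step is to make this rigorous without exact commutation relations. The standard device (as in \cite{LS,LS2}) is to write, for the pair $\{k,-k\}$,
\begin{align}
\tfrac12{\mathcal A}(b_k^*b_k+b_{-k}^*b_{-k})+\tfrac12{\mathcal B}(b_k^*b_{-k}^*+b_kb_{-k})
\geq -\tfrac12|{\mathcal B}|\,[b_k,b_k^*]-\tfrac12\Bigl({\mathcal A}-\sqrt{{\mathcal A}^2-{\mathcal B}^2}\,\Bigr)[b_k,b_k^*]+(\text{error}),
\end{align}
obtained by completing the square in the operator sense and discarding a manifestly nonnegative combination $d_k^*d_k\geq 0$ where $d_k=\cosh\nu_k\,b_k+\sinh\nu_k\,b_{-k}^*$; the inequality only uses $[b_k,b_{-k}]=0$ together with $[b_k,b_k^*]\leq\ell^3a_0^*a_0\leq\ell^3 n$ from \eqref{eq:Commutator}. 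Integrating the genuinely negative constant, $-\tfrac12(2\pi)^{-3}\int(\mathcal A-\sqrt{\mathcal A^2-\mathcal B^2})\,[b_k,b_k^*]\,dk$, and replacing $[b_k,b_k^*]$ by its bound $\ell^3 a_0^*a_0\leq\ell^3 n$, one gets a lower bound of the form $-\tfrac12 n\ell^3(2\pi)^{-3}\int(\mathcal A-\sqrt{\mathcal A^2-\mathcal B^2})\,dk$. Recalling ${\mathcal A}=n^{-1}(\tau+\CA\rho a+\rmu a)$ and $n{\mathcal B}=n\widehat W_1(k)/\ell^3$, one rewrites $n{\mathcal A}-\sqrt{(n{\mathcal A})^2-(n{\mathcal B})^2}=\bigl(\tau+\CA\rho a+\rmu a\bigr)-\sqrt{(\tau+\CA\rho a+\rmu a)^2-\widehat W_1(k)^2}$ and the $n$'s cancel cleanly.

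The third step is the asymptotic evaluation of the resulting $k$-integral
\begin{align}
I:=(2\pi)^{-3}\int_{{\mathbb R}^3}\Bigl[\bigl(\tau(k)+\CA\rho a+\rmu a\bigr)-\sqrt{\bigl(\tau(k)+\CA\rho a+\rmu a\bigr)^2-\widehat W_1(k)^2}\,\Bigr]dk.
\end{align}
Here one uses $\widehat W_1(0)=\int W_1=\int g=8\pi a$ from \eqref{eq:W1-g} and \eqref{eq:DefU}, together with $\widehat W_1(k)=\widehat g(k)+O((R/\ell)^2)$ corrections, and $\tau(k)=k^2$ for $|k|\geq\sqrt{\CA}\,\ell^{-1}$. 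The dominant contribution splits: expanding the square root, the leading $\tfrac12\widehat W_1(k)^2/(\tau(k)+\cdots)$ term integrated over $k$, after using $\widehat\omega(k)=\widehat g(k)/(2k^2)$ from \eqref{es:scatteringFourier} and the identity $\widehat{g\omega}(0)=(2\pi)^{-3}\int\widehat g(k)\widehat\omega(k)\,dk=(2\pi)^{-3}\int\widehat g(k)^2/(2k^2)\,dk$, reproduces precisely $-\tfrac12\widehat{g\omega}(0)$ (up to the $\CA\rho a,\rmu a$ shifts in the denominator, which cost a relative error of order $(\rho a\ell^2)^{1/2}\sim(\rmu a^3)^{1/2}$ by scaling $k\sim(\rho a)^{1/2}$), while the next-order term in the square-root expansion produces the Lee–Huang–Yang-type remainder of size $a(\rmu a^3)^{1/2}$; the replacement of $\widehat W_1$ by $\widehat g$ contributes the $R^2/\ell^2$ error. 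Multiplying through by $\tfrac12 n\ell^3=\tfrac12\rho\ell^6$ and using $\rho_0\leq\rho$, $\rho\sim\rmu$ on the relevant sector converts these into the stated bounds \eqref{eq:Est-H1Bog_integral}.

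The main obstacle I expect is the asymptotic analysis of $I$ with sufficient uniformity: one must show that the shifts $\CA\rho a+\rmu a$ in the denominator and the difference $\widehat W_1-\widehat g$ only perturb the integral at the claimed relative orders $(\rmu a^3)^{1/2}$ and $R^2/\ell^2$, which requires splitting the $k$-integration into the infrared regime $|k|\lesssim(\rho a)^{1/2}$ (where $\widehat g(k)\approx 8\pi a$ and the integrand is $\approx 8\pi a$, contributing volume $\sim(\rho a)^{3/2}$ hence a term $\sim a(\rmu a^3)^{1/2}$ after multiplying by $\rho\ell^6$) and the regime $|k|\gtrsim(\rho a)^{1/2}$ (where one Taylor-expands the square root and controls the tail using integrability of $\widehat g(k)^2/k^2$, i.e. $g\in L^1$ so $\widehat g$ bounded, and $\widehat g(k)^2/k^4$ integrable at infinity). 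Keeping all constants explicit in $M_0$, $K$ and $\chi$ as claimed is bookkeeping but needs care, particularly that $C_2$ does not depend on $M_0$ — this is because the $R^2/\ell^2$ error comes entirely from the $\widehat W_1$-vs-$\widehat g$ replacement at the level of the integrand, before any use of the constraint $n\leq M_0\rmu\ell^3$.
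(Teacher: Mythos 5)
Your overall plan is exactly the paper's: verify $|\mathcal{B}|\leq\mathcal{A}$, apply the Bogoliubov diagonalization with commutator defect (Theorem~\ref{thm:bogolubov}), bound $[b_k,b_k^*]$ using \eqref{eq:Commutator}, Taylor-expand $\mathcal{A}-\sqrt{\mathcal{A}^2-\mathcal{B}^2}$ around $\mathcal{B}^2/(2\mathcal{A})$, split the $k$-integral at the scale $\ell^{-1}$, and invoke Lemma~\ref{lem:I2-integral} for the $R^2/\ell^2$ correction. However, there is a genuine bookkeeping gap that prevents your argument from delivering the lemma as stated.

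You replace $[b_k,b_k^*]\leq\ell^3 a_0^*a_0$ by the weaker $\leq\ell^3 n$. Since the integrand $\mathcal{A}-\sqrt{\mathcal{A}^2-\mathcal{B}^2}\geq 0$, enlarging the commutator pushes the lower bound further down, so you obtain $-\tfrac12\ell^3\rho^2\,\widehat{g\omega}(0)+\cdots$ rather than $-\tfrac12\ell^3\rho_0\rho\,\widehat{g\omega}(0)+\cdots$. These differ by $\tfrac12\ell^3\rho\rho_+\widehat{g\omega}(0)$, and since $\rho_+$ is \emph{not} controlled at this stage, your bound is strictly weaker than the statement. Invoking ``$\rho_0\leq\rho$'' at the end goes in the wrong direction for a lower bound: $-\rho^2\leq-\rho_0\rho$, so your estimate cannot be upgraded this way. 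The correct move is to keep the sharp bound $[b_k,b_k^*]\leq\ell^3 a_0^*a_0=\ell^3 n_0$ from \eqref{eq:Commutator}; then the prefactor $\tfrac12\ell^3 n_0$ multiplied by the $1/n$ hidden in $\mathcal{A}$ produces exactly $\rho_0\rho$, not $\rho^2$.

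There is also a self-contained arithmetic slip in your ``the $n$'s cancel cleanly'' step. With $n\mathcal{B}=n\widehat W_1(k)/\ell^3=\rho\widehat W_1(k)$, the correct rewriting is
\begin{align*}
n\bigl(\mathcal{A}-\sqrt{\mathcal{A}^2-\mathcal{B}^2}\bigr)
=\bigl(\tau+\CA\rho a+\rmu a\bigr)-\sqrt{\bigl(\tau+\CA\rho a+\rmu a\bigr)^2-\rho^2\widehat W_1(k)^2},
\end{align*}
i.e.\ there is a $\rho^2$ under the square root which your displayed $I$ omits, and after this cancellation the remaining multiplier is $\tfrac12\ell^3 n_0$ (with $n_0$, not $n\ell^3$ again). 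As written, your ``multiplying through by $\tfrac12 n\ell^3=\tfrac12\rho\ell^6$'' double-counts the factor $n$ that you already claimed to have cancelled and is dimensionally inconsistent with the target $-\tfrac12\ell^3\rho_0\rho\,\widehat{g\omega}(0)$ ($\widehat{g\omega}(0)$ has units of length, so the prefactor must carry $[\text{length}]^{-3}$, which requires the $\rho_0\rho$). The asymptotic evaluation that follows is otherwise sound and matches the paper's estimates \eqref{eq:BoundsOnIntegralsI}--\eqref{eq:BoundsOnIntegralsII}; once you keep $n_0$ and the $\rho^2$ factor consistently, the argument goes through.
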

\begin{proof}
Recall that in \eqref{eq:2ndQuantized} we defined
\begin{align}
{\mathcal H}^{\rm Bog}_1:=
\frac{1}{2} (2\pi)^{-3} \int_{{\mathbb R}^3} {\mathcal A}(k) \left( b_k^{*}b_k + b_{-k}^{*} b_{-k} 
\right)
+ {\mathcal B}(k) \left( b_k^{*}b_{-k}^{*} + b_{k} b_{-k} 
\right)\,dk,
\end{align}
where ${\mathcal A}(k)=0$ if $n=0$ and otherwise we have
\begin{align}\label{eq:AandB}
{\mathcal A}(k) = \frac{\tau(k)}{n} + a \frac{\CA+ \rmu/\rho}{\ell^3}
\qquad
\text{ and }
\qquad 
{\mathcal B}(k) =\frac{\widehat{W}_{1}(k)}{\ell^3}.
\end{align}

By assumption $\rho\leq M_{0} \rho_{\mu}$.
We seek to apply the Bogoliubov method to estimate the quadratic Hamiltonian, see Theorem~\ref{thm:bogolubov}. In order to do so, we need to verify that the condition $-{\mathcal A} < {\mathcal B} \leq {\mathcal A}$ with the notation from \eqref{eq:AandB} is satisfied.
However, since $|\widehat{W}_1(k)|\leq \widehat{W}_1(0) = 8\pi a$, this is trivial.

Notice for later use, that for all $\rmu$ we have actually proved the bounds
\begin{align}\label{eq:UniformBoverA}
\frac{|{\mathcal B} |}{ {\mathcal A}} \leq \frac{1}{2} \qquad \text{and}\qquad \mathcal{A}>0.
\end{align}

Therefore, we may apply Theorem~\ref{thm:bogolubov} to bound ${\mathcal H}^{\rm Bog}_1$.
We obtain
\begin{align}
{\mathcal H}^{\rm Bog}_1\geq \frac{1}{4}(2\pi)^{-3}\int_{\R^3}(\sqrt{\mathcal{A}(k)^2-\mathcal{B}(k)^2}-\mathcal{A}(k))([b_k,b_k^\ast]+[b_{-k},b_{-k}^\ast])\,d k.
\end{align}
We insert the bound $[b_k, b_{k}^{*}] \leq \ell^3 a_0^{*} a_0=\ell^3n_0$ from \eqref{eq:Commutator} and get
\begin{align}
{\mathcal H}^{\rm Bog}_1&\geq \frac{1}{2}(2\pi)^{-3}\ell^3 n_0\int_{\R^3} \sqrt{\mathcal{A}(k)^2-\mathcal{B}(k)^2}-\mathcal{A}(k)\,d k.
\end{align}
Notice that, using \eqref{eq:UniformBoverA}, there exists $C>0$, such that
\begin{align}
\left| \sqrt{\mathcal{A}(k)^2-\mathcal{B}(k)^2}-\mathcal{A}(k) + \frac{\mathcal{B}(k)^2}{2\mathcal{A}(k)} \right| \leq C \frac{\mathcal{B}(k)^4}{\mathcal{A}(k)^3}.\label{Second Born term added}
\end{align}
Therefore, \eqref{eq:Est-H1Bog_integral} follows, using Lemma~\ref{lem:I2-integral} below, if we can prove that
\begin{align}
\int \left| \ell^{3} \rho^{-1} \frac{\mathcal{B}(k)^2}{2\mathcal{A}(k)} -  \frac{\widehat{W}_1(k)^2}{2k^2}\right|\,dk &\leq C a\sqrt{\rmu a^3},\label{eq:BoundsOnIntegralsI} \\
\ell^{3} \rho^{-1} \int \frac{\mathcal{B}(k)^4}{\mathcal{A}^3(k)}\,dk &\leq C a \sqrt{\rmu a^3},\label{eq:BoundsOnIntegralsII}
\end{align}
for some constant $C$ independent of $\rho$ and $\rmu$.

We first prove the bound in \eqref{eq:BoundsOnIntegralsI}. Notice that
\begin{align}
\int \left| \ell^{3} \rho^{-1} \frac{\mathcal{B}(k)^2}{2\mathcal{A}(k)} -  \frac{\widehat{W}_1(k)^2}{2k^2}\right| \,dk
&=
\int \left|  \frac{\widehat{W}_1(k)^2}{2(\tau(k) + \rho a(\CA+\rmu/\rho))} -  \frac{\widehat{W}_1(k)^2}{2k^2}\right| \,dk.
\end{align}
\newcommand{\TTT}{T}
With $T:=\sqrt{2 \constkinetickernel}$ we split this integral into two parts, $|k| \geq \TTT \ell^{-1}$ and the complement.

For $|k| \geq \TTT \ell^{-1}$, we have
\begin{align}\label{eq:LowerBoundByksquare}
\tau(k) = k^2 - \constkinetickernel \ell^{-2},\qquad 
\tau(k) \geq \frac{1}{2} k^2.
\end{align}
Therefore,
\begin{align}
\int_{\{|k|\geq \TTT \ell^{-1}\}} &\left|  \frac{\widehat{W}_1(k)^2}{2(\tau(k) + \rho a(\CA+\rmu/\rho))} -  \frac{\widehat{W}_1(k)^2}{2k^2}\right| \,dk \nonumber \\
&\leq
\widehat{W}_1(0)^2
\int_{\{|k|\geq \TTT \ell^{-1}\}} 
\frac{\left| k^2 - \tau(k) - \rho a(\CA+\rmu/\rho)\right|}{ k^4}\,dk \nonumber \\
&\leq \widehat{W}_1(0)^2 \int_{\{|k|\geq \TTT \ell^{-1}\}} 
\frac{\constkinetickernel \ell^{-2}+ (\CA\rho+\rmu) a }{ k^4}\,dk  \nonumber\\
&= C a \sqrt{\rmu a^3},
\end{align}
where the constant in particular depends on $M_0$ and $K$.

For $|k| \leq \TTT \ell^{-1}$ (and $\rmu$ sufficiently small) we drop $\tau(k)$ and estimate $|\widehat{W}_1(k)| \leq \widehat{W}_1(0)= 8\pi a$. Therefore,
\begin{align}
\int_{\{|k| \leq \TTT \ell^{-1}\}} \left|  \frac{\widehat{W}_1(k)^2}{2(\tau(k) + \rho a(\CA+\rmu/\rho))} -  \frac{\widehat{W}_1(k)^2}{2k^2}\right| \,dk \leq C a \sqrt{\rmu a^3},
\end{align}
with $C$ depending on $K$.
This establishes the bound in \eqref{eq:BoundsOnIntegralsI}.

The bound in \eqref{eq:BoundsOnIntegralsII} is analogous, and we will give fewer details. Notice that
\begin{align}
\ell^{3} \rho^{-1} \frac{\mathcal{B}(k)^4}{\mathcal{A}^3(k)}
=
\rho^2 \frac{\widehat{W}_1^4(k)}{(\tau(k) + \rho a(\CA+\rmu/\rho))^3}.
\end{align}
Upon making the same splitting as for the first integral, we see that for $|k|\leq \TTT \ell^{-1}$ the integrand can be bounded by $C a \rmu^{-1}$ leading to a bound of the right magnitude.
For $|k|\geq \TTT \ell^{-1}$ we again use \eqref{eq:LowerBoundByksquare} and find that the integrand is bounded by $ C\rmu^2 a^4 |k|^{-6}$. Upon explicitly integrating this function we again find a bound of the right magnitude.
\end{proof}
The following lemma was used in the proof of Lemma~\ref{prop:BogIntegral-2-New}.
\begin{lemma}\label{lem:I2-integral}
We have, assuming that  \eqref{eq:R/ell} is satisfied,
\begin{align}\label{eq:I2-integral}
\left|(2\pi)^{-3}  \int \frac{\widehat{W}_1(k)^2}{2k^2} \,dk -
\int g(x) \omega(x)\right| \leq  C a (R/\ell)^2,
\end{align}
for some constant $C$ (depending only on the localization function $\chi$).
\end{lemma}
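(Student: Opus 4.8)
The plan is to move everything to position space, where the comparison becomes transparent. The starting observation is that \emph{if} $W_1$ were equal to $g$, the claimed identity would hold with zero error: since $\widehat\omega(k)=\widehat g(k)/(2k^2)$ by \eqref{es:scatteringFourier}, Plancherel gives $(2\pi)^{-3}\int\widehat g(k)^2/(2k^2)\,dk=(2\pi)^{-3}\int\widehat g(k)\widehat\omega(k)\,dk=\int g\omega$. The whole error therefore comes from the factor $1/(\chi*\chi(\cdot/\ell))$ relating $W_1$ to $g$, which by \eqref{eq:W1-g} satisfies $0\le h:=W_1-g\le C(R/\ell)^2 g$ pointwise; integrating this and using $\int g=8\pi a$ gives $\int h\le 8\pi C(R/\ell)^2 a$, which will be the source of the error bound.

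Concretely, I would expand $\widehat W_1(k)^2=\widehat g(k)^2+2\widehat g(k)\widehat h(k)+\widehat h(k)^2$ and identify the three contributions. The first is $\int g\omega$ as above. For the second, $\widehat g(k)\widehat h(k)/k^2=2\widehat\omega(k)\widehat h(k)$, so $(2\pi)^{-3}\int \widehat g(k)\widehat h(k)/k^2\,dk=2\int\omega h$. For the third, introduce $\omega_h(x):=\tfrac1{8\pi}\int|x-y|^{-1}h(y)\,dy$, the decaying solution of $-\Delta\omega_h=\tfrac12 h$, so that $\widehat{\omega_h}(k)=\widehat h(k)/(2k^2)$ and $(2\pi)^{-3}\int\widehat h(k)^2/(2k^2)\,dk=\int h\,\omega_h\;(=2\|\nabla\omega_h\|_2^2\ge 0)$. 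Collecting the pieces,
\[
(2\pi)^{-3}\int\frac{\widehat W_1(k)^2}{2k^2}\,dk-\int g\,\omega \;=\; 2\int h\,\omega+\int h\,\omega_h\;\ge\;0 .
\]
Then I would estimate: by \eqref{omegabounds} we have $0\le\omega\le1$, and $\omega_h(x)\le C(R/\ell)^2\,\tfrac1{8\pi}\int|x-y|^{-1}g(y)\,dy=C(R/\ell)^2\omega(x)\le C(R/\ell)^2$ in the regime \eqref{eq:R/ell}; hence the right-hand side is at most $\big(2+C(R/\ell)^2\big)\int h\le C(R/\ell)^2\int g=8\pi C(R/\ell)^2 a$, which is exactly the asserted bound.

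I do not expect a genuine obstacle. The only point requiring a little care is the convergence of the individual $k$-integrals and the justification of Plancherel: the potentials $\omega,\omega_h$ decay like $|x|^{-1}$, so they are not square-integrable, but $\nabla\omega,\nabla\omega_h$ decay like $|x|^{-2}$ and hence lie in $L^2(\mathbb R^3)$, so the integrals are finite and equal $2\|\nabla\omega\|_2^2$, $2\|\nabla\omega_h\|_2^2$ respectively (after a harmless integration by parts whose boundary terms vanish by the decay). The Fubini interchanges in the Newtonian representations are likewise harmless since $g,h\ge0$ have compact support and $|x-y|^{-1}$ is locally integrable in $\mathbb R^3$.
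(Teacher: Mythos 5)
Your proof is correct and follows essentially the same route as the paper's: both arguments convert the Fourier-side difference to a position-space potential integral via Plancherel and the scattering Fourier identity $\widehat\omega=\widehat g/(2k^2)$, and then invoke the pointwise bound \eqref{eq:W1-g} together with $0\le\omega\le1$ and $\int g=8\pi a$. The only (cosmetic) difference is that you expand $\widehat W_1^2=\widehat g^2+2\widehat g\widehat h+\widehat h^2$ and treat each piece separately, introducing the auxiliary Newtonian potential $\omega_h$, whereas the paper factors $\widehat W_1^2-\widehat g^2=(\widehat W_1-\widehat g)(\widehat W_1+\widehat g)$ and passes directly to the Coulomb double integral $\iint (W_1-g)(x)(W_1+g)(y)\,|x-y|^{-1}\,dx\,dy$.
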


\begin{proof}
We have $\widehat{\omega}(k)=\frac{\widehat{g}(k)}{2k^2}$ by \eqref{es:scatteringFourier}. 
We calculate the difference between $(2\pi)^{-3} \int \hat{g}(k) \widehat{\omega}(k)\, dk$  
and
$(2\pi)^{-3} \int \frac{\widehat{W}_1^2(k)}{2k^2}\, dk$
using the Fourier transformation and \eqref{eq:W1-g},
\begin{align}
\left| \int \frac{\widehat{W}_1^2(k)- \widehat{g}^2(k)}{k^2} \, dk\right|
&=C_0\left|  \iint \frac{(W_1-g)(x) (W_1+g)(y)}{|x-y|}\,dx\,dy \right| \nonumber \\
&\leq  C_0 C \frac{R^2}{\ell^2} ( 2 +C \frac{R^2}{\ell^2})  \iint \frac{g(x) g(y)}{|x-y|}\,dx\,dy \nonumber \\
&= 2  C \frac{R^2}{\ell^2} ( 2 +C \frac{R^2}{\ell^2}) \int \widehat{g}(k) \widehat{\omega}(k)\,dk.
\label{eq: HLS-not}
\end{align}
Now the lemma follows from the Parseval identity using that $\omega(x) \leq 1$ and that \eqref{eq:R/ell} implies a bound on $R/\ell$.
\end{proof}
\section{Estimating the energy}\label{sec:TotalEnergy}

\begin{lemma}\label{lem:EnergyFirst}
For a given localization function $\chi$ there exists a universal constant $K_0$ so that the following is true for any $K\in (0,K_0]$. If $R/\ell$ is sufficiently small, so that in particular \eqref{eq:R/ell} is satisfied, and if $\Psi$ is normalized and an eigenstate for $n$ with $\rho \leq 20\rmu$, then we have \begin{align}
\ell^{-3}\langle &\Psi, {\mathcal H}_{B}(\rmu) \Psi \rangle \nonumber \\&\geq - 4\pi a\rmu^2
+ \frac{1}{4} (\rho - \rmu)^2 \widehat{g}(0) 
-C  \rmu^2 a  \Big [\left( \rmu a^3\right)^{1/2} +\frac{R^2}{\ell^2}\Big ] .
\end{align}
Here the constant $C$ is allowed to depend on $K$.
\end{lemma}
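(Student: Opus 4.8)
The plan is to assemble the three structural estimates just proved---the potential-energy bound of Lemma~\ref{lm:appinteractionestimate}, the passage to the second-quantized operator in Lemma~\ref{prop: lower bound on H_Bog}, and the Bogoliubov integral bound of Lemma~\ref{prop:BogIntegral-2-New}---and to use the separated gap term $b\ell^{-2}n_+$ coming from the localized kinetic energy to absorb every error proportional to the number of excited particles. Throughout, $\Psi$ is an eigenstate of $n$, so $\rho=n\ell^{-3}$ is a number and the hypothesis $\rho\le 20\rmu$ lets us take $M_0=20$ in Lemma~\ref{prop:BogIntegral-2-New}; moreover $\ell^2=K^2(\rmu a)^{-1}$, so $a\rmu=K^2\ell^{-2}$ and $a\rho_0\le a\rho\le 20K^2\ell^{-2}$.

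First I would combine the operator inequalities. From $({\mathcal H}_B(\rmu))_N=\sum_j\mathcal{T}_j+{\mathcal W}$ and Lemma~\ref{lm:appinteractionestimate} we get ${\mathcal H}_B(\rmu)\ge\sum_j\mathcal{T}_j+A_2+A_0-Ca(\rmu+\rho_0)n_+$, and since $\sum_j\mathcal{T}_j+A_2={\mathcal H}^{\rm Bog}+b\ell^{-2}n_+$ by \eqref{eq:Def_HBog} and ${\mathcal H}^{\rm Bog}\ge{\mathcal H}^{\rm Bog}_1-Ca(\rho+\rmu)n_+$ by Lemma~\ref{prop: lower bound on H_Bog},
\[
{\mathcal H}_B(\rmu)\ \ge\ {\mathcal H}^{\rm Bog}_1+A_0+b\ell^{-2}n_+-Ca(\rmu+\rho_0+\rho)n_+ .
\]
I will keep the gap term $b\ell^{-2}n_+$ in reserve to absorb all errors proportional to $n_+$; the displayed error is $\le C'K^2\ell^{-2}n_+$ with $C'$ depending only on $\chi$, by the bounds above.

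Next I would rewrite $A_0$ by completing a square: splitting off the $\widehat{g\omega}(0)$-part of \eqref{eq:A0} and expanding the quadratic in $\rho_0=n_0/|B|$ gives
\[
A_0=\widehat{g}(0)\Big(\tfrac{|B|}{4}(\rmu-\rho_0)^2-\tfrac{|B|}{2}\rmu^2+\tfrac{\rmu}{2}-\rho_0-\tfrac{1}{4|B|}\Big)+\tfrac{n_0(n_0-1)}{2|B|}\widehat{g\omega}(0),
\]
so the leading part is $\tfrac14\widehat{g}(0)|B|(\rmu-\rho_0)^2-\tfrac12\widehat{g}(0)|B|\rmu^2$. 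Taking the expectation in $\Psi$, I bound the square below by $\tfrac14\widehat{g}(0)|B|(\rmu-\langle\rho_0\rangle)^2$ via Cauchy--Schwarz, then replace $\langle\rho_0\rangle$ by $\rho$ at the cost of $-\tfrac12\widehat{g}(0)|B|\,|\rho-\rmu|\langle\rho_+\rangle$, which since $|\rho-\rmu|\le 19\rmu$ equals $-CK^2\ell^{-2}\langle n_+\rangle$ and is absorbed into the gap. For ${\mathcal H}^{\rm Bog}_1$ I apply Lemma~\ref{prop:BogIntegral-2-New} with $M_0=20$; its main output $-\tfrac12|B|\rho_0\rho\,\widehat{g\omega}(0)=-\tfrac{1}{2|B|}n_0n\,\widehat{g\omega}(0)$ combines with the $\tfrac{n_0(n_0-1)}{2|B|}\widehat{g\omega}(0)$ from $A_0$ to produce $-\tfrac{n_0(n_++1)}{2|B|}\widehat{g\omega}(0)$. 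Using $\widehat{g\omega}(0)=\int g\omega\le 8\pi a$ and $n_0\le n\le 20\rmu|B|$ this is bounded in modulus by $Ca\rmu(n_++1)$: the $n_+$-part goes into the gap, and the remaining $Ca\rmu$, together with the lower-order terms of the displayed formula for $A_0$ and the two error terms of Lemma~\ref{prop:BogIntegral-2-New} (which, using $|B|\langle\rho_0\rangle\rho\le 400\rmu^2|B|$, are $\le C\rmu^2|B|a[(\rmu a^3)^{1/2}+R^2/\ell^2]$), yields---after dividing by $|B|=\ell^3$, using $\ell^{-3}=K^{-3}(\rmu a)^{3/2}$, dropping the nonnegative leftover gap, and recalling $\widehat{g}(0)=8\pi a$---exactly the claimed bound, with $C$ depending on $K$ through the constants of Lemma~\ref{prop:BogIntegral-2-New}.

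I expect the main obstacle to be controlling, not each individual error, but the cumulative error proportional to the excited-particle number $n_+$---including the one left over after the cancellation of the two $\widehat{g\omega}(0)$ contributions---against the gap $b\ell^{-2}n_+$; this is exactly what forces $K$ to be small, i.e.\ the box not too large compared with the healing length $(\rmu a)^{-1/2}$, and it is the reason the kinetic localization of Section~\ref{sec:kinloc} had to be carried out without losing the Neumann gap. A secondary point is that extracting the precise leading pair $-4\pi a\rmu^2+\tfrac14(\rho-\rmu)^2\widehat{g}(0)$ from the many terms produced hinges on the completing-the-square form of $A_0$ together with the fact that the surviving $-\tfrac{n_0(n_++1)}{2|B|}\widehat{g\omega}(0)$ term has a sign making it a controllable error once the gap is available.
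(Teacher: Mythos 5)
Your proposal is correct and follows the same strategy as the paper: combine Lemmas~\ref{lm:appinteractionestimate}, \ref{prop: lower bound on H_Bog} and~\ref{prop:BogIntegral-2-New}, use the separated gap term $b\ell^{-2}n_+$ to absorb every error linear in $n_+$, and pair the $\widehat{g\omega}(0)$-part of $A_0$ against the main output of the Bogoliubov integral. The only cosmetic difference is at the level of bookkeeping: you keep the operator $n_0$ exactly (getting the combined $\widehat{g\omega}(0)$-term as $-\tfrac{n_0(n_++1)}{2|B|}\widehat{g\omega}(0)\le 0$, then bound it), and recover $\tfrac14(\rho-\rmu)^2$ by Cauchy--Schwarz plus replacement of $\langle\rho_0\rangle$ by $\rho$, whereas the paper first replaces $\rho_0$ by $\rho$ directly inside $A_0$ (so the combined $\widehat{g\omega}(0)$-term is $\tfrac12\rho\rho_+\widehat{g\omega}(0)\ge 0$ and is dropped) and charges the errors $C(\rho\rho_++|B|^{-2}+(\rho+\rmu)|B|^{-1})a$ to the gap. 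Both routes produce the same error budget and the same requirement that $K$ be small enough for the coefficient of $\rho_+$ to stay nonnegative.
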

\begin{proof}
We estimate the term $A_0$ in \eqref{eq:A0} as
\begin{align}
\abs{B}^{-1}\langle \Psi, A_0 \Psi \rangle\geq &\,\frac{1}{2}\rho^2\big(\widehat{g}(0) + \widehat{g\omega}(0)\big)-\Big (\rmu\rho+\frac{1}{4}(\rho-\rmu)^2\Big )\widehat{g}(0)\nonumber\\
&\,-C(\rho\rho_++\abs{B}^{-2}+(\rho+\rmu)\abs{B}^{-1})a.
\end{align}
	Since by assumption $\rho\leq 20 \rho_\mu$, we may apply Lemma~\ref{prop:BogIntegral-2-New} with the choice $M_0=20$. We combine this with the estimates in \eqref{eq:SmallsimpleQs}, \eqref{eq:Def_HBog}, \eqref{eq:EstH1-Bog} and \eqref{eq:Est-H1Bog_integral} and get (using again the bound on $\rho/\rmu$)
	\begin{align}\label{eq:NearConclusion}
	|B|^{-1} \langle \Psi, {\mathcal H}_{B}(\rmu) \Psi \rangle 
	\geq&\,
	- \frac{\widehat{g}(0)}{2} \rmu^2 + \frac{\widehat{g}(0)}{4} (\rho -\rmu)^2
	+( b \ell^{-2} - C'\rmu a) \rho_{+}  \nonumber \\
	&\, -C  \rmu^2 a  \Big [\left( \rmu a^3\right)^{1/2} +\frac{R^2}{\ell^2}\Big ].
	\end{align}
	Here the constant $C$ only depends on $K$ and the localization function $\chi$ while the constant $C'$ is independent of $K$.
	Lemma~\ref{lem:EnergyFirst} now follows using the definition of $\ell$, if we define $K_0$ as the largest choice of $K$ for which the coefficient to $\rho_{+}$ becomes positive.
\end{proof}

Using Lemma~\ref{lem:EnergyFirst} we can now finally give a good estimate on the number $n$ of particles in the box.

\begin{lemma}[Upper bound on $n$]\label{lem:UpperBoundOnn}Let $K\in (0,K_0]$ with $K_0$ as in Lemma~\ref{lem:EnergyFirst}.
If $R/\ell$ is sufficiently small, so that in particular \eqref{eq:R/ell} is satisfied, and $\Psi$ is a normalized eigenvector for $n$ satisfying
\begin{align}
\langle \Psi, {\mathcal H}_{B}(\rmu) \Psi \rangle < 0,
\end{align}
then we have $\rho/\rmu \leq 20$.
\end{lemma}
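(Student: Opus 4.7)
The plan is to argue by contradiction: assume $\rho/\rmu > 20$ and show this forces $\langle\Psi,\mathcal{H}_{B}(\rmu)\Psi\rangle \geq 0$, contradicting the hypothesis.

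The strategy is to re-run the chain of estimates from the proof of Lemma~\ref{lem:EnergyFirst}, tracking where the density bound $\rho\leq 20\rmu$ was used. Inspecting that proof, the density bound enters only through the application of Lemma~\ref{prop:BogIntegral-2-New} with $M_{0}=20$. The proof of Lemma~\ref{prop:BogIntegral-2-New} in turn goes through with any $M_{0}$, provided the error factor $\sqrt{\rmu a^{3}}$ is replaced by $\sqrt{\rho a^{3}}$; this simply amounts to using $\rho a \leq \max\{\rho, M_{0}\rmu\}\,a$ in the integrals \eqref{eq:BoundsOnIntegralsI}--\eqref{eq:BoundsOnIntegralsII}. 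Combining this extended bound with Lemma~\ref{lm:appinteractionestimate}, Lemma~\ref{prop: lower bound on H_Bog} and the kinetic-energy lower bound, the same chain of inequalities leading to \eqref{eq:NearConclusion} yields, for arbitrary $\rho$,
\begin{align*}
|B|^{-1}\langle\Psi,\mathcal{H}_{B}(\rmu)\Psi\rangle \geq -4\pi a\rmu^{2} + \tfrac{\widehat{g}(0)}{4}(\rho-\rmu)^{2} - C\rho^{2}a\bigl[\sqrt{\rho a^{3}} + R^{2}/\ell^{2}\bigr].
\end{align*}

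Under the assumption $\rho > 20\rmu$, we have both $(\rho-\rmu)^{2} > 361\rmu^{2}$ and $(\rho-\rmu)^{2} \geq (19/20)^{2}\rho^{2}$. Since $\widehat{g}(0) = 8\pi a$, the positive term $\tfrac{\widehat{g}(0)}{4}(\rho-\rmu)^{2} = 2\pi a(\rho-\rmu)^{2}$ exceeds $4\pi a\rmu^{2}$ by a factor of more than $180$, and it is also at least $\tfrac{9\pi}{5}a\rho^{2}$, which dominates the error $C\rho^{2}a\bigl[\sqrt{\rho a^{3}}+R^{2}/\ell^{2}\bigr]$ provided $\sqrt{\rho a^{3}} + R^{2}/\ell^{2}$ is smaller than an absolute constant (guaranteed by the assumption that $R/\ell$ is sufficiently small). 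Therefore $\langle\Psi,\mathcal{H}_{B}(\rmu)\Psi\rangle > 0$, the required contradiction.

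The main obstacle is verifying that the chain of estimates can be extended uniformly in $\rho/\rmu$; in particular, the coefficient $(b\ell^{-2}-C'(\rho+\rmu)a)\rho_{+}$ produced by combining Lemma~\ref{lm:appinteractionestimate} with Lemma~\ref{prop: lower bound on H_Bog} can turn negative when $\rho\gg\rmu$. To handle this one splits into two cases. If $\rho_{0}\geq\rho/2$, the positive contribution $\tfrac{\widehat{g}(0)}{4}(\rho_{0}-\rmu)^{2}|B|$ inside $A_{0}$ alone controls all the remaining negative $\rho_{+}$-dependent terms. If instead $\rho_{0} < \rho/2$, then $\rho_{+} > \rho/2$, and one retains the non-negative quartic piece $\mathcal{Q}_{4}^{\rm ren}$ from Lemma~\ref{lem:SplittingInQs} (which was discarded in the proof of Lemma~\ref{lm:appinteractionestimate} when $\mathcal{Q}_{3}^{\rm ren}$ was absorbed by Cauchy--Schwarz), whose explicit positive size provides the required absorption.
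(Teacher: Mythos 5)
There is a genuine gap, and it is exactly at the point you wave through. Your contradiction argument needs the error term $C\rho^{2}a\bigl[\sqrt{\rho a^{3}}+R^{2}/\ell^{2}\bigr]$ to be dominated by $\tfrac{\widehat g(0)}{4}(\rho-\rmu)^{2}\sim 2\pi a\rho^{2}$, and you claim the smallness of $\sqrt{\rho a^{3}}$ is ``guaranteed by the assumption that $R/\ell$ is sufficiently small.'' It is not: the hypotheses only make quantities built from $\rmu$ small (e.g.\ $\rmu aR^{2}$ via \eqref{eq:R/ell}), while $\rho=n\ell^{-3}$ is completely unconstrained --- the lemma must rule out \emph{arbitrarily large} $n$, and for $\rho a^{3}\gg 1$ your error $\rho^{2}a\sqrt{\rho a^{3}}$ grows like $\rho^{5/2}$ and swamps the positive $\rho^{2}$ term, so no contradiction results in precisely the regime you need. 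Moreover, the claim that Lemma~\ref{prop:BogIntegral-2-New} ``goes through with $\sqrt{\rmu a^{3}}$ replaced by $\sqrt{\rho a^{3}}$'' is itself too optimistic: since $\ell$ is tied to $\rmu$ by \eqref{eq:def_ell}, the large-$|k|$ part of \eqref{eq:BoundsOnIntegralsI} contributes $\sim a^{2}\rho a\,\ell= a\sqrt{\rho a^{3}}\sqrt{\rho/\rmu}$, and \eqref{eq:BoundsOnIntegralsII} degrades even faster in $\rho/\rmu$. Finally, your two-case repair of the negative coefficient $(b\ell^{-2}-C'(\rho+\rmu)a)$ of $\rho_{+}$ is only sketched; the competing constants there depend on $\chi$, so ``the $(\rho_{0}-\rmu)^{2}$ term controls all remaining negative terms'' is not established. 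In short, re-running the Bogoliubov chain uniformly in $\rho/\rmu$ is not available with the estimates of this paper.

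The paper's proof of Lemma~\ref{lem:UpperBoundOnn} avoids all of this by never leaving the regime where \eqref{eq:NearConclusion} is valid. For $n\geq 20\rmu\ell^{3}$ one splits the $n$ particles into $m$ groups, each with particle number in $[5\rmu\ell^{3},20\rmu\ell^{3}]$, and drops the interaction between different groups, which is legitimate because $v\geq 0$ implies $w\geq 0$; this bounds $\langle\Psi,\mathcal{H}_{B}(\rmu)\Psi\rangle$ below by $m\,\mathcal{G}$, where $\mathcal{G}$ is the infimum of the box energy over particle numbers $n'\in[5\rmu\ell^{3},20\rmu\ell^{3}]$. For such $n'$ one may insert $\rho'\geq 5\rmu$ into \eqref{eq:NearConclusion}: the term $\tfrac{\widehat g(0)}{4}(\rho'-\rmu)^{2}\geq 4\widehat g(0)\rmu^{2}$ beats $\tfrac{\widehat g(0)}{2}\rmu^{2}$, and the remaining errors are higher order in $\rmu a^{3}$ and $R^{2}/\ell^{2}$, so $\mathcal{G}\geq 0$ and hence $\langle\Psi,\mathcal{H}_{B}(\rmu)\Psi\rangle\geq 0$, the desired contradiction. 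This superadditivity (grouping) step, which converts the uncontrolled high-density case into many moderate-density problems, is the idea missing from your proposal; if you want to salvage your approach, you would need either this device or genuinely new estimates uniform in $\rho$.
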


\begin{proof}
Using Lemma~\ref{lem:EnergyFirst} we only have to consider the case $n \geq 20 \rmu \ell^3$.
We can now split the particles into a number $m$ of groups, each having particle number in the interval $[5  \rmu \ell^3, 20  \rmu \ell^3]$.
Omitting the positive interaction between particles in different groups gives the lower bound
\begin{align}
\langle \Psi, {\mathcal H}_{B}(\rmu) \Psi \rangle
\geq
m {\mathcal G}, 
\end{align}
where
\begin{align*}
{\mathcal G} = \inf\Big\{ \langle \Psi', {\mathcal H}_{B}(\rmu) \Psi' \rangle\,|\,
\Psi' \text{ has $n'$ particles in the box $B$, with } n' \in [5 \rmu \ell^3, 20   \rmu \ell^3] \Big\}.
\end{align*}
We now argue that ${\mathcal G} \geq 0$ if $R/\ell$ is chosen sufficiently small. To see this we insert $\frac{n'}{\ell^3}\geq 5\rmu$ into \eqref{eq:NearConclusion} and note that the resulting coefficient in front of the $\rmu^2$-term becomes positive if $R/\ell$ is sufficiently small. The other terms are either positive or higher powers in $\rmu a^3$. Since $R^2/\ell^2\leq K^{-2}\rmu aR^2 $ this finishes the proof.
\end{proof}

\begin{proof}[Proof of Theorem~\ref{thm:LHY-Box}]\label{proof of thm:LHY-Box}
To obtain Theorem~\ref{thm:LHY-Box} we apply Lemma~\ref{lem:EnergyFirst}, where the value of $K_0$ indirectly is determined, and Lemma~\ref{lem:UpperBoundOnn} to each $N$-particle subspace of ${\mathcal F}_{\rm s}(L^2(\Lambda))$.
\end{proof}
\begin{remark}\label{rm: additional estimates}
Notice that Theorem~\ref{thm:LHY-Box} also could have been formulated for fixed $\chi$ and  $K\in (0,K_0)$. Then,
if $R/\ell$ is sufficiently small, so that in particular \eqref{eq:R/ell} is satisfied, the final arguments in the proof of Theorem~\ref{thm:LHY-Box} imply the bounds
\begin{align}\label{e bound rem 8.3}
0\leq (\frac{n}{|B|}-\rmu)^2a+\rmu\rho_{+}a+ \langle \Psi,{\mathcal Q}_{4}^{\rm ren} \Psi \rangle\abs{B}^{-1}\leq C\rmu^2a\Big ( \sqrt{\rmu a^3}+\frac{R^2}{\ell^2}\Big )
\end{align}
for any normalized $n$-eigenvector, $\Psi$, satisfying 
\begin{align}\label{e condition rem 8.3}
|B|^{-1} \langle \Psi, {\mathcal H}_{B}(\rmu) \Psi \rangle \leq -4\pi a \rmu^2 
+C\rmu^2a\Big ( \sqrt{\rmu a^3}+\frac{R^2}{\ell^2}\Big ),
\end{align}
where the constants in \eqref{e bound rem 8.3} and \eqref{e condition rem 8.3} are allowed to depend on $\chi$ and $K$.
\end{remark}
\appendix
\section{Bogoliubov method}
In this section we recall a simple consequence of the Bogoliubov method.
In \cite{BS} we use the following version (and allow $\cB=-\cA$ if $\kappa=0$)---see also \cite[Theorem~6.3]{LS}.
\begin{theorem}[Simple case of Bogoliubov's method]\label{thm:bogolubov}\hfill\\
	For arbitrary $\cA,\cB\in \R$ satisfying $\cA>0$, $-\cA<\cB\leq\cA$ and $\kappa\in\C$ we have 
	the operator inequality 
	\begin{eqnarray*}
		\cA(b^*_+\bn_++b^*_{-}\bn_{-})+\cB(b^*_+b^*_{-}+\bn_+\bn_{-})+
		\kappa(b^*_++\bn_{-})+\overline{\kappa}(\bn_++b^*_{-})\\ 
		\geq-\frac{1}{2}(\cA-\sqrt{\cA^2-\cB^2})
		([\bn_{+},b^*_{+}]+[\bn_{-},b^*_{-}])-\frac{2|\kappa|^2}{\cA+\cB},
	\end{eqnarray*}
	where $b_\pm$ are operators on a Hilbert space satisfying $[b_+,b_-]=0$. 
\end{theorem}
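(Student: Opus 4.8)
The plan is to diagonalize the quadratic part of the form by an explicit Bogoliubov rotation and then complete the square to absorb the linear terms. Throughout one uses only the hypothesis $[b_+,b_-]=0$ (hence also $[b_+^*,b_-^*]=0$ and $b_-b_+=b_+b_-$); no relation for $[b_\pm,b_\pm^*]$ is assumed, so these commutators are carried along as explicit operators. Write $\nu:=\sqrt{\cA^2-\cB^2}\ge 0$; the hypotheses $\cA>0$, $-\cA<\cB\le\cA$ give $\cA+\cB>0$, $\cA\ge\nu$, and $|\cB|<\cA$ when $\cB<\cA$. I would first prove the bound for $\cB<\cA$ (so $\nu>0$) and recover the endpoint $\cB=\cA$ by letting $\cB\nearrow\cA$: with $b_\pm$ fixed, both sides of the asserted inequality depend continuously on $\cB$ on $(-\cA,\cA]$.

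For $\cB<\cA$, pick $t\in\R$ with $\cosh 2t=\cA/\nu$ and $\sinh 2t=\cB/\nu$ (possible since $(\cA/\nu)^2-(\cB/\nu)^2=1$ and $\cA/\nu\ge 1$), and set
\[
d_+:=\cosh t\,b_++\sinh t\,b_-^*,\qquad d_-:=\cosh t\,b_-+\sinh t\,b_+^*.
\]
Expanding $d_+^*d_++d_-^*d_-$, reordering the cross terms with $[b_+,b_-]=0$ and using $b_\pm b_\pm^*=b_\pm^*b_\pm+[b_\pm,b_\pm^*]$, one obtains the operator identity
\[
\cA(b_+^*b_++b_-^*b_-)+\cB(b_+^*b_-^*+b_+b_-)=\nu\,(d_+^*d_++d_-^*d_-)-\tfrac{\cA-\nu}{2}\big([b_+,b_+^*]+[b_-,b_-^*]\big),
\]
where one uses $\cosh^2 t+\sinh^2 t=\cosh 2t$, $2\cosh t\sinh t=\sinh 2t$, $\sinh^2 t=(\cosh 2t-1)/2$ and $\nu\cosh 2t=\cA$. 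Dropping the manifestly nonnegative operator $d_+^*d_++d_-^*d_-\ge 0$ already gives the asserted inequality when $\kappa=0$.

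To handle the linear terms I would invert the rotation, $b_+=\cosh t\,d_+-\sinh t\,d_-^*$ and $b_-=\cosh t\,d_--\sinh t\,d_+^*$; a short computation then shows $b_+^*+b_-=e^{-t}(d_+^*+d_-)$ and $b_++b_-^*=e^{-t}(d_++d_-^*)$, so that the linear part becomes $e^{-t}\big[\kappa d_+^*+\overline{\kappa}\,d_+\big]+e^{-t}\big[\kappa d_-+\overline{\kappa}\,d_-^*\big]$. Completing the square in each of the pairs $\{d_+,d_+^*\}$ and $\{d_-,d_-^*\}$ separately --- for instance
\[
\nu\,d_+^*d_++e^{-t}\kappa\,d_+^*+e^{-t}\overline{\kappa}\,d_+=\nu\Big(d_++\tfrac{e^{-t}\kappa}{\nu}\Big)^{*}\Big(d_++\tfrac{e^{-t}\kappa}{\nu}\Big)-\tfrac{e^{-2t}|\kappa|^2}{\nu}\ \ge\ -\tfrac{e^{-2t}|\kappa|^2}{\nu},
\]
and likewise for $\{d_-,d_-^*\}$ with shift $e^{-t}\overline{\kappa}/\nu$ --- and adding these two bounds to the commutator term from the quadratic identity, one arrives at the total lower bound $-2e^{-2t}|\kappa|^2/\nu-\tfrac{\cA-\nu}{2}([b_+,b_+^*]+[b_-,b_-^*])$. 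Finally, $e^{-2t}=\cosh 2t-\sinh 2t=(\cA-\cB)/\nu$ turns $2e^{-2t}|\kappa|^2/\nu$ into $2(\cA-\cB)|\kappa|^2/(\cA^2-\cB^2)=2|\kappa|^2/(\cA+\cB)$, and $\tfrac{\cA-\nu}{2}=\tfrac{\cA-\sqrt{\cA^2-\cB^2}}{2}$, which is exactly the claimed right-hand side. There is no genuine obstacle: the only points requiring care are the bookkeeping of the operators $[b_\pm,b_\pm^*]$ (which must never be replaced by scalars) and the degenerate endpoints --- $\cB=\cA$ by the continuity argument above, and the variant $\cB=-\cA$ with $\kappa=0$ (needed only for the companion paper) likewise by letting $\cB\searrow-\cA$, the $|\kappa|^2/(\cA+\cB)$ term then being absent.
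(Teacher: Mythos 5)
Your proof is correct. Note first that the paper does not actually prove Theorem~\ref{thm:bogolubov}; it simply states it and cites \cite[Theorem~6.3]{LS} and the companion paper \cite{BS}, so there is no in-paper proof to compare against. Your argument is the standard one and I have checked the algebra: the identity
\[
\cA(b_+^*b_++b_-^*b_-)+\cB(b_+^*b_-^*+b_+b_-)
=\nu\,(d_+^*d_++d_-^*d_-)-\tfrac{\cA-\nu}{2}\bigl([b_+,b_+^*]+[b_-,b_-^*]\bigr),
\]
the inversion $b_+^*+b_-=e^{-t}(d_+^*+d_-)$, $b_++b_-^*=e^{-t}(d_++d_-^*)$, the two completions of the square with shifts $e^{-t}\kappa/\nu$ and $e^{-t}\overline{\kappa}/\nu$, and the conversions $\nu\sinh^2 t=(\cA-\nu)/2$ and $2e^{-2t}/\nu=2/(\cA+\cB)$ are all correct. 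You are also right to keep the commutators $[b_\pm,b_\pm^*]$ as operators throughout, to handle the endpoint $\cB=\cA$ by continuity of the quadratic forms in $\cB$, and to note that $\cB=-\cA$ is allowed only for $\kappa=0$ (the $|\kappa|^2/(\cA+\cB)$ term disappears and the remaining estimate passes to the limit).

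The one remark worth making on method: you introduce the rotated operators $d_\pm$ explicitly via a hyperbolic parametrization, while the proof in \cite{LS} (and implicitly \cite{BS}) writes the quadratic form directly as a sum of completed squares of the form $\alpha\,(b_\pm+\gamma\,b_\mp^*+\mu)^*(b_\pm+\gamma\,b_\mp^*+\mu)$ with $\alpha=(\cA+\sqrt{\cA^2-\cB^2})/2$ and $\gamma=\cB/(2\alpha)$, absorbing the $\kappa$-linear part into the constants $\mu$. These are algebraically the same manipulation — your $e^t$ is their $\gamma$ in disguise, $\nu\sinh^2 t=\alpha\gamma^2=(\cA-\nu)/2$ — so the two routes produce identical constants; the explicit Bogoliubov rotation is perhaps more transparent conceptually, while the direct sum-of-squares form is marginally shorter as it never names the operators $d_\pm$ and avoids the inversion step. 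Either is acceptable, and neither requires the $d_\pm$ to satisfy canonical commutation relations, which is the essential point you correctly emphasize.
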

\bibliographystyle{siam}
\bibliography{LHY-O}

\end{document}